\theoremstyle{plain}
\newtheorem{theorem}{Theorem}
\theoremstyle{definition}
\newtheorem{definition}{Definition}
\newtheorem{lemma}[theorem]{Lemma}
\newtheorem{proposition}[theorem]{Proposition}
\newtheorem{corollary}[theorem]{Corollary}
\newtheorem{example}{Example}
\newcommand{\mpcomment}[1]{}
\newcommand{\figlabel}[1]{\label{fig:#1}}
\newcommand{\figref}[1]{Figure~\ref{fig:#1}}
\newcommand{\seclabel}[1]{\label{sec:#1}}
\newcommand{\secref}[1]{Section~\ref{sec:#1}}
\newcommand{\exlabel}[1]{\label{ex:#1}}
\newcommand{\exref}[1]{Example~\ref{ex:#1}}
\newcommand{\deflabel}[1]{\label{def:#1}}
\newcommand{\defref}[1]{Definition~\ref{def:#1}}
\newcommand{\thmlabel}[1]{\label{thm:#1}}
\newcommand{\thmref}[1]{Theorem~\ref{thm:#1}}
\newcommand{\proplabel}[1]{\label{prop:#1}}
\newcommand{\lemlabel}[1]{\label{lem:#1}}
\newcommand{\lemref}[1]{Lemma~\ref{lem:#1}}
\newcommand{\nats}{\mathbb{N}}
\newcommand{\Cc}{\mathcal{C}}
\newcommand{\Ff}{\mathcal{F}}
\newcommand{\Rr}{\mathcal{R}}
\newcommand{\Aa}{\mathcal{A}}
\newcommand{\Mm}{\mathcal{M}}
\newcommand{\Ll}{\mathcal{L}}
\newcommand{\Tt}{\mathcal{T}}
\renewcommand{\vec}[1]{{\bf #1}}
\newcommand{\set}[1]{\{#1\}}
\newcommand{\setpred}[2]{\{#1 \,\mid\, #2\}}
\newcommand{\sem}[1]{\llbracket #1 \rrbracket}
\newcommand{\angular}[1]{\langle #1 \rangle}
\def\delequal{\mathrel{\ensurestackMath{\stackon[1pt]{=}{\scriptstyle\Delta}}}}
\renewcommand{\emptyset}{\varnothing}
\newcommand{\xdownarrow}[1]{%
  {\left\downarrow\vbox to #1{}\right.\kern-\nulldelimiterspace}
}
\newcommand{\longproj}[2]{{#1}\xdownarrow{0.27cm}_{#2}}
\newcommand{\stmt}{\angular{stmt}}
\newcommand{\cond}{\angular{cond}}
\newcommand{\code}[1]{\texttt{#1}}
\newcommand{\codekey}[1]{\textbf{#1}}
\newcommand{\cd}[1]{\code{#1}}
\newcommand{\pskip}{\codekey{skip}}
\newcommand{\passume}{\codekey{assume}}
\newcommand{\pif}{\codekey{if}}
\newcommand{\pthen}{\codekey{then}}
\newcommand{\pelse}{\codekey{else}}
\newcommand{\pwhile}{\codekey{while}}
\newcommand{\passign}{:=}
\newcommand{\poutputs}{\Rightarrow}
\newcommand{\exec}{\textsf{Exec}}
\newcommand{\pexec}{\textsf{PExec}}
\newcommand{\comp}{\textsf{Comp}}
\newcommand{\Terms}{\textsf{Terms}}
\newcommand{\init}[1]{\widehat{#1}}
\newcommand{\dblqt}[1]{\text{``}#1\text{''}}
\newcommand{\fixperm}[1]{\angular{#1}}
\newcommand{\out}[1]{\textbf{\textsf{o}}_{#1}}
\newcommand{\pgm}{\angular{pgm}}
\newcommand{\pcall}{\codekey{call}}
\newcommand{\preturn}{\codekey{return}}
\newcommand{\goesto}{\rightarrow}
\newcommand{\inc}{\mathsf{INC}}
\newcommand{\dec}{\mathsf{DEC}}
\newcommand{\chkzero}{\mathsf{CHECK}}
\newcommand{\halt}{\mathsf{HALT}}
\newcommand{\curr}{\text{curr}}
\newcommand{\congcl}[1]{\cong_{#1}}
\newcommand{\eqcl}[2]{[#1]_{#2}}
\newcommand{\termsem}[2]{\sem{#1}_{#2}}
\newcommand{\termmod}[1]{\Tt(#1)}
\newcommand{\fals}{\bot}
\newcommand{\tru}{\top}
\newcommand{\undf}{\mathsf{undef}}
\newcommand{\proj}[2]{#1\!\!\downharpoonright_{#2}}
\newcommand{\reject}{q_\mathsf{reject}}
\newcommand{\feas}[1]{#1_{\mathsf{fs}}}
\newcommand{\coh}[1]{#1_{\mathsf{cc}}}
\newcommand{\rfeas}[1]{#1_{\mathsf{rfs}}}
\newcommand{\pspc}{\mathsf{PSPACE}}
\newcommand{\drawline}{\raisebox{2pt}{\scalebox{0.7}{\tikz{\draw[-, thick](0,0) -- (7mm,0);}}}}
\newcommand{\drawdash}{\raisebox{2pt}{\scalebox{0.7}{\tikz{\draw[-, thick, dashed](0,0) -- (7mm,0);}}}}
\newcommand{\drawdirectedline}{\raisebox{0pt}{\scalebox{0.7}{\tikz{\draw[-{Latex[length=2mm, width=2mm]}, thick](0,0) -- (7mm,0);}}}}
\begin{document}

\title{Decidable Verification of Uninterpreted Programs}


\author{Umang Mathur}
\orcid{0000-0002-7610-0660}             
\affiliation{
  \department{Department of Computer Science}              
  \institution{University of Illinois, Urbana Champaign}            
  \country{USA}                    
}
\email{umathur3@illinois.edu}         

\author{P. Madhusudan}
\affiliation{
  \department{Department of Computer Science}              
  \institution{University of Illinois, Urbana Champaign}            
  \country{USA}                    
}
\email{madhu@illinois.edu}          

\author{Mahesh Viswanathan}
\affiliation{
  \department{Department of Computer Science}              
  \institution{University of Illinois, Urbana Champaign}            
  \country{USA}                    
}
\email{vmahesh@illinois.edu}         

\begin{abstract}
We study the problem of completely automatically verifying uninterpreted programs---programs that work over arbitrary data models that provide an interpretation for the constants, functions and relations the program uses. The verification problem asks whether a given program satisfies a postcondition written using quantifier-free formulas with equality on the final state, with no loop invariants, contracts, etc. being provided. We show that this problem is undecidable in general. The main contribution of this paper is a subclass of programs, called \emph{coherent programs} that admits decidable verification, and can be decided in {\sc Pspace}. We then extend this class of programs to classes of programs that are $k$-coherent, where $k \in \mathbb{N}$, obtained by (automatically) adding $k$ ghost variables and assignments that make them coherent. We also extend the decidability result to programs with recursive function calls and prove several undecidability results that show why our restrictions to obtain decidability seem necessary.

\end{abstract}

\begin{CCSXML}
<ccs2012>
<concept>
<concept_id>10003752.10003790.10002990</concept_id>
<concept_desc>Theory of computation~Logic and verification</concept_desc>
<concept_significance>500</concept_significance>
</concept>
<concept>
<concept_id>10003752.10003790.10003794</concept_id>
<concept_desc>Theory of computation~Automated reasoning</concept_desc>
<concept_significance>300</concept_significance>
</concept>
</ccs2012>
\end{CCSXML}

\ccsdesc[500]{Theory of computation~Logic and verification}
\ccsdesc[300]{Theory of computation~Automated reasoning}

\keywords{Uninterpreted Programs, Coherence, Program Verification, Decidability, Streaming Congruence Closure} 
\maketitle


\section{Introduction}
\seclabel{intro}

Completely automatic verification of programs is almost always
undecidable.  The class of sequential programs, with and without
recursive functions, admits a decidable verification problem when the
state-space of variables/configurations is \emph{finite}, and this has
been the cornerstone on which several fully automated verification
techniques have been based, including predicate abstraction, and
model-checking.  However, when variables range over infinite domains,
verification almost inevitably is undecidable.  For example, even for
programs manipulating natural numbers with increment and decrement
operators, and checks for equality, program verification is
undecidable.

In this paper, we investigate classes of programs over 
\emph{uninterpreted} functions and relations over infinite 
domains that admit, surprisingly, a decidable verification problem 
(with no user help whatsoever, not even in terms of inductive loop 
invariants or pre/post conditions). 

A program can be viewed as working over a data-domain that consists of
constants, functions and relations. For example, a program
manipulating integers works on a data-model that provides constants
like $0, 1$, functions like $+, -$, and relations like $\leq$, where
there is an implicit assumption on the meaning of these constants,
functions, and relations. Programs over uninterpreted data models work
over \emph{arbitrary} data-models, where the interpretation of
functions and relations are not restricted in any way, except of
course that equality is a congruence with respect to their
interpretations (e.g., if $x=y$, then $f(x)=f(y)$, no matter what the
interpretation of $f$ is).  A program satisfies its assertions over
uninterpreted data models if it satisfies the assertions when working
over \emph{all} data-models.

The theory of uninterpreted functions is a theory that only has the
congruence axioms, and is an important theory both from a theoretical
and practical standpoint.  Classical logic such as G\"odel's (weak)
completeness theorem are formulated for such theories.  And in
verification, when inductive loop invariants are given, verification
conditions are often formulated as formulas in SMT theories, where the
theory of uninterpreted functions is an important theory used to model
memory, pointers in heaps, arrays, and mathematical specifications.
In particular, the quantifier-free logic of uninterpreted functions is
decidable and amenable to Nelson-Oppen combination with other
theories, making it a prime theory in SMT solvers.

We show, perhaps unsurprisingly, that verification of uninterpreted
programs is undecidable.  The main contribution of this paper is to
identify a class of programs, called \emph{coherent} programs, for
which verification is decidable.

Program executions can be viewed abstractly as computing \emph{terms}
conditioned on \emph{assumptions over terms}.  Assignments apply
functions of the underlying data-domain and hence the value of a
variable at any point in an execution can be seen as computing a term
in the underlying data-model.  Conditional checks executed by the
program can be seen as assumptions the program makes regarding the
relations that hold between terms in the data-model.  For example,
after an execution of the statements $x\passign y; x\passign x+1;
\passume (x>0); z\passign x*y$, the program variable $z$ corresponds
to the term $(\init{y}+1)*\init{y}$, and the execution makes the
assumption that $\init{y}+1>0$, where $\init{y}$ is the value of
variable $y$ at the start of the execution.  A coherent program has
only executions where the following two properties hold.  The first is
the \emph{memoizing property} that says that when a term is
\emph{recomputed} by the execution, then some variable of the program
already has the same term (or perhaps, a different term that is
equivalent to it, modulo the assumptions seen so far in the
execution).  The second property, called \emph{early assumes} says,
intuitively, that when an assumption of equality between variables $x$
and $y$ is made by a program, superterms of the terms stored in
variables $x$ and $y$ computed by the program must be stored in one of
the current variables.

We show that the notion of coherence effectively skirts undecidability
of program verification.  Both notions of memoizing and early-assumes
require variables to store certain computed terms in the current set
of variables.  This notion in fact is closely related to \emph{bounded
  path-width} of the computational graph of terms computed by the
program; bounded path-width and bounded tree-width are graph-theoretic
properties exploited by many decision procedures of graphs for
decidability of MSO and for efficient
algorithms~\cite{courcelle,seese}, and there have been several recent
works where they have been useful in finding decidable problems in
verification~\cite{madhu2011,chatterjee2016,chatterjee2015}.

Our decidability procedure is automata-theoretic.  We show that
coherent programs generate \emph{regular} sets of coherent executions,
and we show how to construct automata that check whether an execution
satisfies a post-condition assertion written in quantifier-free theory
of equality.  The automaton works by computing the \emph{congruence
  closure} of terms defined by equality assumptions in the execution,
checking that the disequality assumptions are met, while maintaining
this information only on the bounded window of terms corresponding to
the current valuation of variables of the program. In fact, the
automaton can be viewed as a \emph{streaming congruence closure algorithm} that computes
the congruence closure on the moving window of terms computed by the
program.  The assumption of coherence is what allows us to build such
a streaming algorithm.  We show that if either the \emph{memoizing}
assumption or the \emph{early-assumes} assumption is relaxed, the
verification problem becomes undecidable, arguing for the necessity of
these assumptions. 

The second contribution of this paper is a decidability result that
extends the first result to a larger class of programs --- those
uninterpreted programs that can be \emph{converted} to coherent ones.
A program may not be coherent because of an execution that either
recomputes a term when no variable holds that term, or makes an
assumption on a term whose superterm has been previously computed but
later over-written.  However, if the program was given access to more
variables, it could keep the required term in an auxiliary variable to
meet the coherence requirement.  We define a natural notion of
$k$-coherent executions --- executions that can be made coherent by
adding $k$ \emph{ghost variables} that are write-only and assigned at
appropriate times.  We show that programs that generate $k$-coherent
executions also admit a decidable verification problem.  The notion of
$k$-coherence is again related to path-width --- instead of demanding
that executions have path-width $|V|$, we allow them to have
path-width $|V|+k$, where $V$ is the set of program variables.

We also show that $k$-coherence is a decidable property.  Given a
program and $k \in \nats$, we can decide if its executions can be made
$k$-coherent. Notice that when $k = 0$, $k$-coherence is simply
coherence, and so these results imply the decidability of coherence as
well.  And if they can, we can automatically build a regular
collection of coherent executions that automatically add the ghost
variable assignments.  This result enables us to verify programs by
simply providing a program and a budget $k$ (perhaps iteratively
increased), and automatically check whether the program's executions
can be made $k$-coherent, and if so, perform automatic verification
for it.

The third contribution of this paper is an extension of the 
above results to programs with recursive function calls. 
We show that we can build \emph{visibly pushdown automata} (VPA) 
that read coherent executions of programs 
with function calls, and compute congruence closure effectively. 
Intersecting such a VPA with the VPA accepting 
the program executions and checking emptiness of the resulting 
VPA gives the decidability result.
We also provide the extension of verification to $k$-coherent recursive programs. 

To the best of our knowledge, the results here present the first
interesting class of sequential programs over infinite data-domains
for which verification is decidable\footnote{There are some
  automata-theoretic results that can be interpreted as decidability
  results for sequential programs; but these are typically programs
  reading streaming data or programs that allow very restricted ways
  of manipulating counters, which are not natural classes for software
  verification. See~\secref{related} for a detailed discussion.}.  

The main contributions of this paper
are:
\begin{itemize}
 \item We show verification of uninterpreted programs (without function calls) is undecidable.
 \item We introduce a notion of coherent programs and show verification of coherent uninterpreted programs (without
 function calls) is decidable and
 is {\sc Pspace}-complete.
 \item We introduce a notion of $k$-coherent programs, for any $k$. 
 We show that given a program (without function calls) and a constant $k$, 
 we can decide if it is $k$-coherent; and if it is, decide verification for it.
\item We prove the above results for programs with (recursive) function calls, showing decidability and {\sc Exptime}-completeness. 
\end{itemize}

The paper is structured as follows. 
\secref{definitions} introduces uninterpreted programs and their 
verification problem, and summarizes the main results of the paper. 
\secref{coherent-ver} contains our main technical result and is devoted 
to coherent programs and the decision procedure for verifying them, 
as well as the decision procedure for recognizing coherent programs. 
In \secref{undecidability}, we show our undecidability results for general 
progams as well as programs that satisfy only one of the conditions of the coherence definition. 
\secref{kcoherence} consists of our decidability results for $k$-coherent programs. 
In \secref{vpa} we extend our results to recursive programs. 
Related work discussion can be found in \secref{related} and concluding 
remarks in \secref{conclusions} where we also discuss possible extensions and 
applications of our results.
We refer the reader to the appendix
for detailed proofs of the results presented.


\section{The Verification Problem and Summary of Results}
\seclabel{definitions}

In this paper we investigate the verification problem for imperative
programs, where expressions in assignments and conditions involve
uninterpreted functions and relations. We, therefore, begin by
defining the syntax and semantics of the class of programs we study,
and then conclude the section by giving an overview of our main
results; the details of these results will be presented in subsequent
sections.

Let us begin by recalling some classical definitions about first order
structures. A (finite) first order \emph{signature} $\Sigma$ is a tuple $(\Cc,
\Ff, \Rr)$, where $\Cc$, $\Ff$, and $\Rr$ are finite sets of
constants, function symbols, and relation symbols, respectively. Each
function symbol and relation symbol is implicitly associated with an
arity in $\nats_{>0}$. A first order signature is 
\emph{algebraic} if there are no relation symbols, i.e., $\Rr =
\emptyset$. We will denote an algebraic signature as $\Sigma = (\Cc,
\Ff)$ instead of $\Sigma = (\Cc, \Ff, \emptyset)$. An \emph{algebra}
or \emph{data model} for an algebraic signature $\Sigma = (\Cc, \Ff)$,
is $\Mm = (U, \setpred{\sem{c}}{c \in \Cc}, \setpred{\sem{f}}{f \in
  \Ff})$ which consists of a universe $U$ and an interpretation for
each constant and function symbol in the signature. The set of
(ground) \emph{terms} are those that can be built using the constants
in $\Cc$ and the function symbols in $\Ff$; inductively, it is the set
containing $\Cc$, and if $f$ is an $m$-ary function symbol, and $t_1,
\ldots t_m$ are terms, then $f(t_1,\ldots t_m)$ is a term. We will
denote the set of terms as $\Terms_\Sigma$ or simply $\Terms$, 
since the signature $\Sigma$ will often be clear from the context. Given a term $t$ and a data model
$\Mm$, the \emph{interpretation} of $t$ (or the \emph{value} that $t$ evaluates to) in $\Mm$ will be denoted
by $\termsem{t}{\Mm}$.

\subsection{Programs}
\seclabel{pgmsyntax}

Our imperative programs will use a finite set of variables to store
information during a computation. Let us fix $V = \{v_1,\ldots v_r\}$
to be this finite set of variables. These programs will use function
symbols and relation symbols from a first order signature $\Sigma =
(\Cc, \Ff, \Rr)$ to manipulate values stored in the variables. We will
assume, without loss of generality, that the first order signature has
constant symbols that correspond to the \emph{initial values} of each
variable at the begining of the computation. More precisely, let
$\init{V} = \setpred{\init{x}}{x \in V} \subseteq \Cc$ represent the
initial values for each variable of the program. The syntax of
programs is given by the following grammar.
\begin{align*}
\stmt ::=& 
\,\,  \pskip \, 
\mid \, x \passign c \,
\mid \, x \passign y \, 
\mid \, x \passign f(\vec{z}) \, 
\mid \, \passume \, (\cond) \,
\mid \, \stmt \, ;\, \stmt \\
&
\mid \, \pif \, (\cond) \, \pthen \, \stmt \, \pelse \, \stmt \,
\mid \, \pwhile \, (\cond) \, \stmt
\end{align*}
\begin{align*}
\cond ::=& 
\, x = y \,
\mid \, x = c \,
\mid \, c = d \,
\mid \, R(\vec{z}) \,
\mid \, \cond \lor \cond \,
\mid \, \neg \cond
\end{align*}
Here, $f \in \Ff$, $R \in \Rr$, $c, d \in \Cc$, $x,y \in V$, and
$\vec{z}$ is a tuple of variables in $V$ and constants in $\Cc$.

The constructs above define a simple class of programs with
conditionals and loops. Here, `$\passign$' denotes the assignment
operator, `$;$' stands for sequencing of programs, $\pskip$ is a ``do
nothing'' statement, $\pif-\pthen-\pelse$ is a construct for conditional
statements and $\pwhile$ is our looping construct.  We will also use
the shorthand `$\pif \, (\cond) \, \pthen \, \stmt$' as syntactic sugar for
`$\pif \, (\cond) \, \pthen \, \stmt \, \pelse \, \pskip$'.  The
conditionals can be equality ($=$) atoms, predicates defined by
relations $(R(\cdot))$, and boolean combinations $(\lor, \neg)$ of
other conditionals.  Formally, the semantics of the program depends on
an underlying data model that provides a universe, and meaning for
functions, relations, and constants; we will define this precisely
in~\secref{pgmsemantics}.

The conditionals in the above syntax involve Boolean combinations of
equalities as well as relations over variables and constants.
However, for technical simplicity and without loss of generality, we
disallow relations entirely.  Note that a relation $R$ of arity $m$
can be modeled by fixing a new constant $\top$ and introducing a new
function $f_R$ of arity $m$ and a variable $b_R$.  Then, each time we
see $R(\vec{z})$, we add the assignment statement $b_R \passign
f_R(\vec{z})$ and replace the occurrence of $R(\vec{z})$ by the
conditional `$b_R = \top$'.  Also, Boolean combinations of conditions
can be modeled using the $\pif-\pthen-\pelse$ construct.
%
%
Constant symbols used in conditionals and assignments can also be removed
simply by using a variable in the program that is not modified in any
way by the program. Hence we will avoid the use of constant symbols as
well in the program syntax. Henceforth, without loss generality, we
can assume that our first order signature $\Sigma$ is algebraic ($\Rr
= \emptyset$), constant symbols do not appear in any of the program
expressions, and our programs have conditionals only of the form $x=y$
or $x \not = y$.


\begin{figure}[t]
\noindent\makebox[1.2\textwidth][c]{
\begin{minipage}{1.1\textwidth}
\begin{minipage}[H]{0.32\textwidth}
\rule[1mm]{1cm}{0.4pt} $P_1$ \rule[1mm]{1cm}{0.4pt} \\\\
\passume \code{(T} $\neq$ \code{F);} \\
\code{b} $\passign$ \code{F;}\\
\pwhile \code{(x} $\neq$ \code{y) \{} \\ 
\rule[1mm]{0.3cm}{0pt}
\code{d} $\passign$ \code{key(x);} \\
\rule[1mm]{0.3cm}{0pt}
\pif \code{(d = k)} \pthen\, \code{\{} \\
\rule[1mm]{0.7cm}{0pt}
\code{b} $\passign$ \code{T;} \\ 
\rule[1mm]{0.7cm}{0pt}
\code{r} $\passign$ \code{x;} \\ 
\rule[1mm]{0.3cm}{0pt}
\code{\}}\\
\rule[1mm]{0.3cm}{0pt}
\code{x} $\passign$ \code{n(x);} \\ 
\code{\}} \\\\
\code{@post:} \code{b=T} $\Rightarrow$ \code{key(r)=k}
\end{minipage}
\begin{minipage}[h]{0.32\textwidth}
\rule[1mm]{1cm}{0.4pt} $P_2$ \rule[1mm]{1cm}{0.4pt} \\\\
\passume \code{(x} $\neq$ \code{z);} \\
\code{y} $\passign$ \code{n(x);}\\
\passume \code{(y} $\neq$ \code{z);} \\
\code{y} $\passign$ \code{n(y);}\\
\pwhile \code{(y} $\neq$ \code{z) \{} \\ 
\rule[1mm]{0.3cm}{0pt}
\code{x} $\passign$ \code{n(x);} \\ 
\rule[1mm]{0.3cm}{0pt}
\code{y} $\passign$ \code{n(y);} \\ 
\code{\}} \\\\\\\\
\code{@post:} \code{z = n(n(x))}
\end{minipage}
\begin{minipage}[h]{0.32\textwidth}
\rule[1mm]{1cm}{0.4pt} $P_3$ \rule[1mm]{1cm}{0.4pt} \\\\
\passume \code{(x} $\neq$ \code{z);} \\
\code{y} $\passign$ \code{n(x);}\\
\code{g} $\passign$ \code{y;}\\
\passume \code{(y} $\neq$ \code{z);} \\
\code{y} $\passign$ \code{n(y);}\\
\pwhile \code{(y} $\neq$ \code{z) \{} \\ 
\rule[1mm]{0.3cm}{0pt}
\code{x} $\passign$ \code{n(x);} \\ 
\rule[1mm]{0.3cm}{0pt}
\code{g} $\passign$ \code{y;}\\
\rule[1mm]{0.3cm}{0pt}
\code{y} $\passign$ \code{n(y);} \\ 
\code{\}} \\\\
\code{@post:} \code{z = n(n(x))}
\end{minipage}
\end{minipage}
}
\caption{Examples of Uninterpreted Programs; $P_1$ and $P_3$ are coherent, $P_2$ is not coherent}
\figlabel{example1}
\end{figure}

\begin{example}
Consider the uninterpreted program $P_1$ in~\figref{example1}. 
The program works on any first-order model that has an 
interpretation for the unary functions $\code{n}$ and $\code{key}$, 
and an initial interpretation of the variables 
\code{T}, \code{F}, \code{x}, \code{y} and \code{k}. 
The program is similar to a program that searches 
whether a list segment from $\code{x}$ to $\code{y}$
contains a key $\code{k}$. 
However, in the program above, the
functions $\code{n}$ and $\code{key}$ are uninterpreted, 
and we allow all possible models on which the program can work.
Note that if and when the program terminates, 
we know that if $\code{b} = \code{T}$, then there is an element reachable 
from $\code{x}$ before reaching $\code{y}$ such that 
$\code{key}$ applied to that node is equal to $\code{k}$. 
Note that we are modeling $\code{T}$ and $\code{F}$, 
which are Boolean constants, as variables in the program 
(assuming that they are different elements in the model). 

Programs $P_2$ and $P_3$ in~\figref{example1} are also uninterpreted
programs, and resemble programs that given a linked list segment from
$\code{x}$ to $\code{z}$, finds the node that is two nodes before the
node $\code{z}$ (i.e., find the node $u$ such that
$\code{n}(\code{n}(u))=\code{z}$).
\end{example}

\subsection{Executions}
\seclabel{executions}

\begin{definition}[Executions]
An execution over a finite set of variables $V$ is a word over the alphabet
$\Pi = \setpred{ \dblqt{x \passign y}, \dblqt{x \passign f(\vec{z})},  \dblqt{\passume (x=y)}, \dblqt{\passume (x\neq y)}}{x, y, \vec{z} \textit{~are~in~} V}$.
\end{definition}

We use quotes around letters for readability, and may sometimes skip them.

\begin{definition}[Complete and Partial Executions of a program]
\emph{Complete executions} of programs that manipulate a set of variables $V$ are executions over $V$ defined formally as follows:
\begin{align*}
\begin{array}{rcl}
\exec(\pskip) &=& \epsilon  \\
\exec(x \passign y) &=& \dblqt{x \passign y} \\
\exec(x \passign f(\vec{z})) &=& \dblqt{x \passign f(\vec{z})} \\
\exec(\passume(c)) &=& \dblqt{\passume(c)} \\
\exec(\pif\,c \, \pthen\, s_1\,  \pelse\, s_2~~) &=& \dblqt{\passume(c)} \cdot \exec(s_1)
 \cup \dblqt{\passume(\neg c)} \cdot \exec(s_2) \\
\exec(~~s_1 ; s_2~~) &=& \exec(s_1) \cdot \exec(s_2) \\
\exec(\pwhile\, c\, \{s\}~~) &=&
        [\dblqt{\passume(c)} \cdot \exec(s_1)]^* \cdot \dblqt{\passume(\neg c)}
\end{array}
\end{align*}
Here, $c$ is a conditional of the form $x = y$ or $x \neq y$, where $x, y\in V$.

The set of \emph{partial executions}, denoted by $\pexec(s)$, is the set of prefixes of complete executions in $\exec(s)$.

\end{definition}

\begin{example}
\exlabel{exec}
 For the example program $P_1$ in~\figref{example1}, 
the following word $\rho$ 
\begin{align*}
\rho \delequal & 
\, \passume (\code{T} \neq \code{F}) 
\cdot \code{b} \passign \code{F} 
\cdot \passume (\code{x} \neq \code{y})
\cdot \code{d}\passign \code{key(x)} 
\cdot \passume (\code{d} \neq  \code{k}) 
\cdot \cd{x} \passign \cd{n(x)} \\
&
\cdot \passume (\cd{x} \neq  \cd{y}) 
\cdot \cd{d} \passign \cd{key(x)} 
\cdot \passume (\cd{d} = \cd{k}) 
\cdot \cd{b} \passign \cd{T} 
\cdot \cd{r} \passign \cd{x} 
\cdot \cd{x} \passign \cd{n(x)} 
\end{align*}
is a partial execution of $P_1$ and the word $\rho_1 =\rho \cdot
\passume (\cd{x} = \cd{y})$ is a complete execution.
\end{example}

Our notion of executions is more syntactic than semantic. 
In other words, we do not insist that executions are 
\emph{feasible} over any data model.
For example, the word $\passume (x=y) \cdot \passume (x \neq y) \cdot x \passign f(x)$ is an execution though it is not feasible over any data model.  
Note also that the complete executions of a program capture
(syntactically) \emph{terminating} computations, i.e., 
computations that run through the entire program.

It is easy to see that an NFA accepting $\exec(s)$
(as well as for $\pexec(s)$)
of size linear in $s$, for any program $s$, 
can be easily constructed in polynomial time from $s$, 
using the definitions above 
and a standard translation of regular expressions to NFAs.

\begin{example} 
For the program $P_1$ in~\figref{example1}, its set of executions is
given by the following regular expression
\begin{align*}
\passume(\cd{T} \neq \cd{F}) \cdot \cd{b}\passign\cd{F} \cdot R \cdot \passume(\cd{x}=\cd{y})
\end{align*}
where $R$ is the regular expression
\begin{align*}
\left[ \passume(\cd{x} \neq \cd{y}) \cdot \cd{d} \passign \cd{key(x)} 
\cdot \left( \passume(\cd{d} \neq  \cd{k}) 
+  \passume(\cd{d} = \cd{k}) \cdot \cd{b}\passign \cd{T} \cdot \cd{r}\passign \cd{x} \right) 
\cdot \cd{x} \passign \cd{n(x)} \right]^*
\end{align*}

\end{example}

\subsection{Semantics of Programs and The Verification Problem}
\seclabel{pgmsemantics}

\subsection*{Terms Computed by an Execution}
We now define the set of terms computed by executions of a program
over variables $V$.  The idea is to capture the term computed for each
variable at the end of an execution. Recall that $\init{V} =
\setpred{\init{x}}{x \in V}$ is the set of constant symbols that
denote the initial values of the variables in $V$ when the execution
starts, i.e., $\init{x}$ denotes the initial value of variable
$x$, etc. Recall that $\Terms$ are the set of all terms over signature
$\Sigma$.  Let $\Pi = \setpred{ \dblqt{x \passign y}, \dblqt{x
    \passign f(\vec{z})}, \dblqt{\passume (x=y)}, \dblqt{\passume
    (x\neq y)}}{x, y, \vec{z} \textit{~are~in~} V}$ be the alphabet of
executions.

\begin{definition}
\deflabel{computation}
The term assigned to a variable $x$ after some partial execution is
captured using the function $\comp : \Pi^* \times V \to \Terms$
defined inductively as follows.
\begin{flalign*}
\begin{array}{rcll}
 \comp(\epsilon, x) \!\! &=& \!\! \init{x} & \text{ for each } x \in V \\
 \comp(\rho \cdot \dblqt{x \passign y}, x) \!\! &=& \!\! \comp(\rho, y) \\
 \comp(\rho \cdot \dblqt{x \passign y}, x') \!\! &=& \!\! \comp(\rho, x') &x' \neq x \\
 \comp(\rho \cdot \dblqt{x \passign f(\vec{z})}, x) \!\! &=& \!\! f(\comp(\rho, z_1), \ldots, \comp(\rho, z_r)) & \text{where } \vec{z} = (z_1, \ldots, z_r) \\
 \comp(\rho \cdot \dblqt{x \passign f(\vec{z})}, x') \!\! &=& \!\! \comp(\rho, x') & x' \neq x \\
 \comp(\rho \cdot \dblqt{\passume(y = z)}, x) \!\! &=& \!\! \comp(\rho, x) & \text{ for each } x \in V \\
 \comp(\rho \cdot \dblqt{\passume(y \neq z)}, x) \!\! &=& \!\! \comp(\rho, x) & \text{ for each } x \in V 
 \end{array}
\end{flalign*}

The set of \emph{terms computed} by an execution $\rho$ is $ \Terms(\rho) =
\bigcup\limits_{\substack{\rho' \text{ is a prefix of } \rho,\\ x \in V}}
\comp(\rho',x).$

\end{definition}

Notice that the terms computed by an execution are independent of
the $\passume$ statements in the execution
and depend only on the assignment statements.

\begin{example}
Consider the execution (from~\exref{exec}) below
\begin{align*}
 \rho_1 \delequal
 & 
\, \passume (\code{T} \neq \code{F}) 
\cdot \code{b} \passign \code{F} 
\cdot \passume (\code{x} \neq \code{y})
\cdot \code{d}\passign \code{key(x)} 
\cdot \passume (\code{d} \neq  \code{k}) 
\cdot \cd{x} \passign \cd{n(x)} \\ 
& \cdot \passume (\cd{x} \neq  \cd{y}) 
\cdot \cd{d} \passign \cd{key(x)} 
\cdot \passume (\cd{d} = \cd{k}) 
\cdot \cd{b} \passign \cd{T} 
\cdot \cd{r} \passign \cd{x}
\cdot \cd{x} \passign \cd{n(x)} 
\cdot \passume (x = y)
\end{align*}



\begin{figure}[t]
\scalebox{0.9}{
\begin{tikzpicture}
\node (y) at (0.5,0.25) [rounded rectangle] {$\cd{y}$};
\node (hat-y) at (1,0) [draw, circle, minimum width = 2em] {$\init{\cd{y}}$};
\node (hat-x) at (1,-1.2) [draw, circle, minimum width = 2em] {$\init{\cd{x}}$};
\node (y) at (0.5,-2.15) [rounded rectangle] {$\cd{k}$};
\node (hat-k) at (1,-2.4) [draw, circle, minimum width = 2em] {$\init{\cd{k}}$};
\node (b) at (1.5,-3.35) [rounded rectangle] {$\cd{b}$};
\node (b) at (0,-0.35) [rounded rectangle] {$\cd{b}$};
\node (T) at (-1,-0.35) [rounded rectangle] {$\cd{T}$};
\node (hat-T) at (-0.5,-0.6) [draw, circle, minimum width = 2em] {$\init{\cd{T}}$};
\node (F) at (-1,-1.55) [rounded rectangle] {$\cd{F}$};
\node (hat-F) at (-0.5,-1.8) [draw, circle, minimum width = 2em] {$\init{\cd{F}}$};

\node (r) at (4.9,-0.95) [rounded rectangle] {$\cd{r}$};
\node (n-hat-x) at (4,-1.2) [draw, rounded rectangle, minimum width = 5em] {$\cd{n}(\init{\cd{x}})$};
\node (x) at (8.1,-0.95) [rounded rectangle] {$\cd{x}$};
\node (n-n-hat-x) at (7,-1.2) [draw, rounded rectangle, minimum width = 6em] {$\cd{n}(\cd{n}(\init{\cd{x}}))$};
\node (key-hat-x) at (4,-2.4) [draw, rounded rectangle, minimum width = 5em] {$\cd{key}(\init{\cd{x}})$};
\node (d) at (8.1,-2.15) [rounded rectangle] {$\cd{d}$};
\node (key-n-hat-x) at (7,-2.4) [draw, rounded rectangle, minimum width = 6em] {$\cd{key}(\cd{n}(\init{\cd{x}}))$};

\draw (hat-y) edge[-, bend left, thick] (n-n-hat-x);
\draw (hat-k) edge[-, bend right, thick] (key-n-hat-x);

\draw (hat-y) edge[-, dashed, bend left, thick] (n-hat-x);
\draw (hat-y) edge[-, dashed, bend left = 80, thick] (hat-x);
\draw (hat-T) edge[-, dashed, bend left = 80, thick] (hat-F);
\draw (key-hat-x) edge[-, dashed, thick] (hat-k);

\draw (hat-x) edge[-{Latex[length=2mm, width=2mm]}, thick] (n-hat-x);
\draw (hat-x) edge[-{Latex[length=2mm, width=2mm]}, thick] (key-hat-x);
\draw (n-hat-x) edge[-{Latex[length=2mm, width=2mm]}, thick] (n-n-hat-x);
\draw (n-hat-x) edge[-{Latex[length=2mm, width=2mm]}, thick] (key-n-hat-x);
\end{tikzpicture}
}
\caption{Computation Graph of $\rho_1$. 
Nodes represent terms computed in $\rho_1$.
Directed edges (\protect\drawdirectedline) represent immediate subterm relation.
Nodes are labelled by variables that correspond to the terms denoted by nodes.
Undirected solid lines (\protect\drawline)~denote equalities and dashed lines (\protect\drawdash)~denote disequalities seen in $\rho_1$}
\figlabel{comp_graph}
\end{figure}

For this execution, the set of terms computed
can be visualized by the computation graph in~\figref{comp_graph}.
Here, the nodes represent the various terms computed by the program, 
the solid directed edges represent the immediate subterm relation, 
the solid lines represent the assumptions of equality made 
in the execution on terms, 
and the dashed lines represent the assumptions of dis-equality 
made by the execution. 
The labels on nodes represent the variables that evaluate 
to the terms at the end of the execution.

Hence, we have
$\comp(\rho_1, \cd{x}) = \cd{n}(\cd{n}(\init{\cd{x}}))$, 
$\comp(\rho_1, \cd{d}) = \cd{key}(\cd{n}(\init{\cd{x}}))$,
$\comp(\rho_1, \cd{b}) = \init{\cd{T}}$,
$\comp(\rho_1, \cd{y}) = \init{\cd{y}}$,
$\comp(\rho_1, \cd{k}) = \init{\cd{k}}$,
$\comp(\rho_1, \cd{T}) = \init{\cd{T}}$,
$\comp(\rho_1, \cd{F}) = \init{\cd{F}}$, and
$\comp(\rho_1, \cd{r}) = \cd{n}(\init{\cd{x}})$.
\end{example}

\subsection*{Equality and Disequality Assumptions of an Execution}

Though the $\passume$ statements in an execution do not influence the
terms that are assigned to any variable, they play a role in defining
the semantics of the program. The equalities and disequalities
appearing in $\passume$ statements must hold in a given data model, for
the execution to be feasible. We, therefore, identify what these are.

For an (partial) execution $\rho$, let us first define the set of
\emph{equality assumes} that $\rho$ makes on terms.  Formally, for any
execution $\rho$, the set of equality assumes defined by $\rho$,
called $\alpha(\rho)$, is a subset of $\Terms(\rho) \times
\Terms(\rho)$ defined as follows.
\begin{align*}
\begin{array}{rcl}
  \alpha(\epsilon) 
  & = & \emptyset\\
  \alpha(\rho \cdot a) 
  & = &
  \begin{cases}
  \alpha(\rho) \cup \set{(\comp(\rho,x), \comp(\rho,y)) } & \text{ if } a \text{ is } \dblqt{\passume (x = y)} \\
  \alpha(\rho) & \text{otherwise}
  \end{cases}
 \end{array}
\end{align*}

The set of disequality assumes, $\beta(\rho)$, can be similarly defined
inductively.
\begin{align*}
\begin{array}{rcl}
  \beta(\epsilon) 
  & = & \emptyset\\
  \beta(\rho \cdot a) 
  & = &
  \begin{cases}
  \beta(\rho) \cup \set{(\comp(\rho,x), \comp(\rho,y)) }  & \text{ if } a \text{ is } \dblqt{\passume (x \neq y)} \\
  \beta(\rho)  & \text{otherwise}
  \end{cases}
 \end{array}
\end{align*}

\begin{example}
Consider the execution (from~\exref{exec}) below
\begin{align*}
 \rho_1 \delequal
 & 
\, \passume (\code{T} \neq \code{F}) 
\cdot \code{b} \passign \code{F} 
\cdot \passume (\code{x} \neq \code{y})
\cdot \code{d}\passign \code{key(x)} 
\cdot \passume (\code{d} \neq  \code{k}) 
\cdot \cd{x} \passign \cd{n(x)} \\ 
& \cdot \passume (\cd{x} \neq  \cd{y}) 
\cdot \cd{d} \passign \cd{key(x)} 
\cdot \passume (\cd{d} = \cd{k}) 
\cdot \cd{b} \passign \cd{T} 
\cdot \cd{r} \passign \cd{x}
\cdot \cd{x} \passign \cd{n(x)} 
\cdot \passume (x = y)
\end{align*}
We have $\alpha(\rho_1) = \{ (\cd{key}(\cd{n}(\init{\cd{x}})),
\init{\cd{k}}) , (\cd{n}(\cd{n}(\init{\cd{x}})),\init{\cd{y}})\}$ and
$\beta(\rho_1) = \{ (\init{\cd{T}}, \init{\cd{F}}), (\init{\cd{x}},
\init{\cd{y}}), (\cd{key}(\init{\cd{x}}), \init{\cd{k}}),
(\cd{n}(\init{\cd{x}}), \init{\cd{y}})\}$.
\end{example}

\subsection*{Semantics of Programs}
We define the semantics of a program with respect to an algebra or
data model that gives interpretations to all the constants and
function symbols in the signature.  An execution $\rho$ is said to be
\emph{feasible} with respect to a data model if, informally, the set
of assumptions it makes are true in that model.  More precisely, for
an execution $\rho$, recall that $\alpha(\rho)$ and $\beta(\rho)$ are
the set of equality assumes and disequality assumes over terms computed in
$\rho$. An execution $\rho$ is feasible in a data-model $\Mm$, if for
every $(t,t') \in \alpha(\rho)$, $\termsem{t}{\Mm} =
\termsem{t'}{\Mm}$, and for every $(t,t') \in \beta(\rho)$,
$\termsem{t}{\Mm} \neq \termsem{t'}{\Mm}$.

\subsection*{The Verification Problem}

Let us now define the logic for postconditions, which are
quantifier-free formulas Boolean combination of equality
constraints on variables.
Given a finite set of variables $V$, the syntax for postconditions is
defined by the following logic $\Ll_=$.
\[
\Ll_=: ~~~~~\varphi ::=   x\!=\!y ~\mid~ \varphi \vee \varphi ~\mid~ \neg \varphi
\]
where above, $x,y\in V$.

Note that a more complex post-condition in the form of a quantifier-free formulae using the functions/relations/constants of the underlying data domain and the current variables can be 
incorporated by inserting code at the end of the program that computes the relevant terms, leaving the actual postcondition to check only properties of equality over variables.

\medskip
\noindent We can now define the verification problem for uninterpreted
programs.

\begin{definition}[The Verification Problem for Uninterpreted Programs]
Given a $\Ll_=$ formula $\varphi$ over a set of variables $V$, and a
program $s$ over $V$, determine, for every data-model $\Mm$ and
every execution $\rho \in \exec(s)$ that is feasible in $\Mm$, if
$\Mm$ satisfies the formula $\varphi$ under the interpretation that
maps every variable $x \in V$ to $\termsem{\comp(\rho,x)}{\Mm}$.\qed
\end{definition}

It is useful to observe that the verification problem for a program
$s$ with postcondition $\varphi$ in $\Ll_=$ can be reduced to the
verification of a program $s'$ with additional $\passume$s and
$\pif-\pthen-\pelse$ statements and postcondition $\fals$. 
Thus, without loss of
generality, we may assume that the postcondition is fixed to be
$\fals$. Observe that in this situation, the verification problem
essentially reduces to determining the existence of an execution that
is feasible in some data model. If there is a feasible execution (in
some data model) then the program violates its postcondition;
otherwise the program is correct.


\subsection{Main Results}
\seclabel{mainresults}

In this paper, we investigate the decidability of the verification
problem for uninterpreted programs. Our first result is that this
problem is, in general, undecidable.
We discuss this in detail in~\secref{undecidability}.

\medskip 
{\bf Result \#1:}
\emph{The verification problem for uninterpreted programs is undecidable.
}
\medskip

\noindent
Since the general verification problem is undecidable, we identify a
special class of programs for which the verification problem is
decidable. In order to describe what these special programs are, we
need to introduce a notion of \emph{coherent executions}. Observe that
as an execution proceeds, more (structurally) complex terms get
computed and assigned to variables, and more $\passume$ statements
identify constraints that narrow the collection of data models in
which the execution is feasible. Coherent executions satisfy two
properties. The first property, that we call \emph{memoizing}, requires
that if a term $t$ is computed (through an assignment) and either $t$
or something ``equivalent'' (w.r.t. to the equality assumes in the
execution) to $t$ was computed before in the execution, then it must
be currently stored in one of the variables. This is best illustrated
through an example. Consider the partial execution
\begin{align*}
\pi \delequal &
\, \passume (\code{x} \neq \code{z})
\cdot \code{y} \passign \code{n(x)}
\cdot \code{g} \passign \code{y}
\cdot \passume (\code{y} \neq \code{z})
\cdot \code{y} \passign \code{n(y)}
\cdot \passume (\code{y} \neq \code{z})
\cdot \code{x} \passign \code{n(x)}
\end{align*}
of program $P_3$ (in \figref{example1}). The term $\cd{n}(\init{\cd{x}})$ is
re-computed in the last step, but it is currently stored in the
variable $\code{g}$. On the other hand, a similar partial execution
\begin{align*}
\pi' \delequal &
\, \passume (\code{x} \neq \code{z})
\cdot \code{y} \passign \code{n(x)}
\cdot \passume (\code{y} \neq \code{z})
\cdot \code{y} \passign \code{n(y)}
\cdot \passume (\code{y} \neq \code{z})
\cdot \code{x} \passign \code{n(x)}
\end{align*}
of $P_2$ is not memoizing since when $\cd{n}(\init{\cd{x}})$ is
recomputed in the last step, it is not stored in any variable; the
contents of variable $\code{y}$, which stored $\cd{n}(\init{\cd{x}})$
when it was first computed, have been over-written at this point, or,
in other words, the term $\cd{n}(\init{\cd{x}})$ was ``dropped'' by
the execution before it was recomputed.  The second property that
coherent executions must satisfy is that any step of the form
$\dblqt{\passume(x = y)}$ in the execution comes ``early''. That is,
any superterms of the terms stored in $x$ and $y$ computed by the
execution upto this point, are still stored in the program variables
and have not been overwritten. The formal definition of coherent
executions will be presented later in~\secref{coherent-ver} . Finally,
a program is \emph{coherent} if all its executions are coherent. The
most technically involved result of this paper is that the
verification problem for coherent uninterpreted programs is decidable.

\medskip
{\bf Result \#2:}
\emph{
The verification problem for coherent uninterpreted programs is decidable.
}
\medskip

\noindent The notion of coherence is inspired by the notion of bounded
pathwidth, but is admittedly technical. However, we show that
determining if a given program is coherent is decidable; hence users of
the verification result need not ensure that the program are coherent
manually.

\medskip
{\bf Result \#3:}
\emph{
Given a program, the problem of checking whether it is coherent is decidable.
}
\medskip

\noindent
The notion of coherence has two properties, namely, that executions are memoizing and have early-assumes. Both these properties seem to be important for our decidability result. The verification problem for programs all of whose executions satisfy only one of these two conditions turns out to be undecidable.

 \medskip
 {\bf Result \#4:} \emph{The verification problem for uninterpreted
    programs whose executions are memoizing is undecidable.  The
   verification problem for uninterpreted programs whose executions
      have early assumes is undecidable.  }
 \medskip

\noindent
The memoizing and early-assume requirements of coherence may not be
satisfied by even simple programs. For example, program $P_2$ in \figref{example1} does not
satisfy the memoizing requirement as demonstrated by the partial execution
$\pi'$ above. However, many of these programs can be made coherent by
adding a finite number of \emph{ghost variables}. These ghost variables
are only written and never read, and therefore, play no role in the
actual computation. They merely remember terms that have been
previously computed and can help meet the memoizing and early-assume requirements. We show that given a budget of $k$ variables, we can \emph{automatically} check whether a corresponding coherent
program with $k$ additional ghost variables exists, and in fact
compute a regular automaton for its executions, and verify the
resulting coherent program. The notation and terminology for $k$-coherent programs is
more complex and we delay defining them formally to \secref{kcoherence}, where
they are considered.

\medskip
{\bf Result \#5:} \emph{ Given a program $P$ and $k \in \mathbb{N}$, we
  can decide whether executions of $P$ can be augmented with $k$ ghost
  variables and assignments so that they are coherent (i.e., check
  whether $P$ is $k$-coherent).  Furthermore, if such a coherent
  program exists, we can construct it and verify it against
  specifications.  }
\medskip

\noindent
Finally, in \secref{vpa}, we consider programs with recursive function
calls, and extend our results to them. In particular, we show the
following two results.

\medskip
{\bf Result \#6:} \emph{ The verification problem for coherent
  uninterpreted programs with recursive function calls is decidable.
}

\medskip
{\bf Result \#7} \emph{ Given a program $P$, with recursive function
  calls, and $k \in \mathbb{N}$, we can decide whether executions of
  $P$ can be augmented with $k$ local ghost variables (for each
  function) and interleaved ghost assignments that results in a
  coherent program. Furthermore, if such a coherent program exists, we
  can construct it and verify if against specifications.  }





\section{Verification of Coherent Uninterpreted Programs}
\seclabel{coherent-ver}

The verification problem for uninterpreted programs is undecidable; we
will establish this result in \secref{undecidability}. In this
section, we establish our main technical results, where we identify a
class of programs for which the verification problem is decidable. We
call this class of programs \emph{coherent}. We begin by formally
defining this class of programs. We then present our algorithm to
verify coherent programs. Finally, we conclude this section by showing
that the problem of determining if a given program is coherent is also
decidable.

Before presenting the main technical content of this section, let us
recall that an equivalence relation $\cong \subseteq \Terms \times
\Terms$ is said to be a \emph{congruence} if whenever $t_1 \cong
t_1'$, $t_2 \cong t_2'$, \ldots $t_m \cong t_m'$ and $f$ is an $m$-ary
function then $f(t_1,\ldots t_m) \cong f(t_1',\ldots t_m')$. Given a
binary relation $A \subseteq \Terms \times \Terms$, the
\emph{congruence closure} of $A$, denoted $\congcl{A}$, is the
smallest congruence containing $A$. 

For a congruence $\cong$ on $\Terms$, the equivalence class of a term
$t$ will be denoted by $\eqcl{t}{\cong}$; when $\cong = \congcl{A}$,
we will write this as $\eqcl{t}{A}$ instead of $\eqcl{t}{\congcl{A}}$.
For terms $t_1, t_2 \in \Terms$ and congruence $\cong$ on $\Terms$,
we say that \emph{$t_2$ is a superterm of $t_1$ modulo $\cong$}
if there are terms $t'_1, t'_2 \in \Terms$ such that 
$t'_1 \cong t_1$, $t'_2 \cong t_2$
and $t'_2$ is a superterm of $t'_1$.


\subsection{Coherent Programs}
\seclabel{coherence}

Coherence is a key property we exploit in our decidability results,
and is inspired by the concept of bounded pathwidth. In order to
define coherent programs we first need to define the notion of
coherence for executions. Recall that, for a partial execution $\rho$,
$\alpha(\rho)$ denotes the set of equality assumes made in $\rho$.

\begin{definition}[Coherent executions]
\deflabel{coherence}
We say that a (partial or complete) execution $\rho$ over variables $V$ is \emph{coherent} if 
it satisfies the following two properties.
\begin{description}
\item[Memoizing.] Let $\sigma' = \sigma \cdot \dblqt{x
  \!\passign\!\!f(\vec{z})}$ be a prefix of $\rho$ and let $t =
  \comp(\sigma' , x)$.  If there is a term $t' \in \Terms(\sigma)$
  such that $t' \congcl{\alpha(\sigma)} t$, then there must exist some
  $y \in V$ such that $\comp(\sigma, y) \congcl{\alpha(\sigma)} t$.
\item[Early Assumes.] 
Let $\sigma' = \sigma \cdot
  \dblqt{\passume(x\!=\!y)}$ be a prefix of $\rho$ and let $t_x =
  \comp(\sigma,x)$ and $t_y = \comp(\sigma,y)$.  
  If there is a term $t' \in \Terms(\sigma)$ such that
  $t'$ is either a superterm of $t_x$ or of $t_y$ modulo $\congcl{\alpha(\sigma)}$,
  then there must exist a variable $z\in V$ such that 
  $\comp(\sigma, z) \congcl{\alpha(\sigma)} t'$.
\end{description}
\end{definition}

\noindent
Formally, the memoizing property says that whenever a term $t$ is
recomputed (modulo the congruence enforced by the equality assumptions
until then), there must be a variable that currently corresponds to
$t$.  In the above definition, the assignment $x \passign f(\vec{z})$
is computing a term $t$, and if $t$ has already been computed (there
is a term $t'$ computed by the prefix $\sigma$ that is equivalent to
$t$), then we demand that there is a variable $y$ which after $\sigma$,
holds a term that is equivalent to $t$.

The second requirement of early assumes imposes constraints on when
$\dblqt{\passume(x=y)}$ steps are taken within the execution. We
require that such $\passume$ statements appear before 
the execution ``drops'' any computed term $t$ that 
is a superterm 
of the terms corresponding to $x$ and $y$,
i.e., before the execution reassigns the variables storing such 
superterms; notice that $\Terms(\sigma)$ also includes those terms that have been computed along the execution $\sigma$ and might have been dropped. 
Formally, we
demand that whenever an $\passume$ statement is executed equating
variables $x$ and $y$, if there is a superterm ($t'$) of either the term 
stored in $x$ or $y$ modulo the congruence so far, then there must be
a variable ($z$) storing a term equivalent to $t'$.

Finally, we come to the main concept of this section, namely, that of
coherent programs.

\begin{definition}[Coherent programs]
A coherent program is a program all of whose executions are coherent.
\end{definition}
 
\begin{example}
\exlabel{coherence} 
Consider the partial execution $\pi'$ of $P_2$ (\figref{example1})
that we considered in \secref{mainresults}.
\begin{align*}
\pi' \delequal &
\, \passume (\code{x} \neq \code{z})
\cdot \code{y} \passign \code{n(x)}
\cdot \passume (\code{y} \neq \code{z})
\cdot \code{y} \passign \code{n(y)}
\cdot \passume (\code{y} \neq \code{z})
\cdot \code{x} \passign \code{n(x)}
\end{align*}
Any extension of $\pi'$ to a complete execution of $P_2$, will not be
coherent. This is because $\rho'$ is not memoizing --- when
$\cd{n}(\init{\cd{x}})$ is recomputed in the last step, it is not stored in any
variable; the contents of variable $\code{y}$, which stored
$\cd{n}(\init{\cd{x}})$ when it was first computed, have been over-written at
this point.

On the other hand, the following execution over variables $\{\cd{x},
\cd{y}, \cd{z}\}$
\begin{align*}
\sigma \delequal &
\, \code{z} \passign \code{f(x)}
\cdot \code{z} \passign \code{f(z)}
\cdot \passume (\code{x} = \code{y})
\end{align*}
is also not coherent because $\dblqt{\passume (\code{x} = \code{y})}$
is not early. Observe that $\comp(\sigma, \cd{x}) = \init{\code{x}}$, 
$\comp(\sigma, \cd{y}) = \init{\cd{y}}$, and $\comp(\sigma, \cd{z}) =
\cd{f}(\cd{f}(\init{\cd{x}}))$. Now $\cd{f}(\init{\cd{x}}) \in
\Terms(\sigma)$, is a superterm of $\comp(\sigma,\cd{x})$ but is not
stored in any variable.

Consider the programs in \figref{example1}. $P_2$ is not coherent
because of partial execution $\pi'$ above. 
On the other hand, program $P_1$ is coherent.
This is because whenever an execution encounters 
$\dblqt{\passume (\cd{d} = \cd{k})}$,
both $\cd{d}$ and $\cd{k}$ have no superterms
computed in the execution seen so far.
The same holds for the $\dblqt{\passume( \cd{x} = \cd{y})}$
at the end of an execution due to the $\pwhile$ loop.
Further, whenever a term gets \emph{dropped}, or
over-written, it never gets computed again, even modulo
the congruence induced by the assume equations.
Similar reasoning establishes that $P_3$ is also coherent.
\end{example}


\subsection{Verifying Coherent Programs}
\seclabel{verifying-coherent-programs}

We are now ready to prove that the verification problem for coherent
programs is decidable. Recall that, without loss of generality, we may
assume that the postcondition is $\fals$\mpcomment{Removing as example is gone now: (see
\exref{trans-postcondition})}. Observe that, when the postcondition is
$\fals$, a program violates the postcondtion, if there is an execution
$\rho$ and a data model $\Mm$ such that $\rho$ is feasible in $\Mm$,
i.e., every equality and disequality assumption of $\rho$ holds in
$\Mm$. On the face of it, this seems to require evaluating executions
in all possible data models. But in fact, one needs to consider only
one class of data models. We begin by recalling the notion of an
initial model.

Given a binary relation $A \subseteq \Terms \times \Terms$ of
equalities, $\Mm$ is said to \emph{satisfy} $A$ (or $A$ \emph{holds in}
$\Mm$) if for every pair $(t,t') \in A$, $\termsem{t}{\Mm} =
\termsem{t'}{\Mm}$. For a relation $A$, there is a canonical model in
which $A$ holds.
\begin{definition}
\deflabel{term-model} 

The \emph{initial term model} for $A \subseteq \Terms \times \Terms$
over an algebraic signature $\Sigma = (\Cc, \Ff)$ is $\termmod{A} = (U,
\setpred{\sem{c}}{c \in \Cc}, \setpred{\sem{f}}{f \in \Ff})$ where
\begin{itemize}
\item $U = \Terms/\congcl{A}$,
\item $\sem{c} = \eqcl{c}{A}$ for any $c \in \Cc$, and
\item $\sem{f}(\eqcl{t_1}{A}, \ldots \eqcl{t_m}{A}) =
  \eqcl{f(t_1,\ldots t_m)}{A}$ for any $m$-ary function symbol $f \in
  \Ff$ and terms $t_1,\ldots t_m \in \Terms$. 
\end{itemize}
\end{definition}
An important property of the initial term model is the following.
\begin{proposition}
\lemlabel{term-model}
Let $A$ be a binary relation on terms, and $\Mm$ be any model
satisfying $A$. For any pair of terms $t,t'$, if
$\termsem{t}{\termmod{A}} = \termsem{t'}{\termmod{A}}$ then
$\termsem{t}{\Mm} = \termsem{t'}{\Mm}$.
\end{proposition}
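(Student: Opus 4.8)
The plan is to prove this by structural induction on terms, using the defining clauses of the initial term model $\termmod{A}$ together with the fact that $\congcl{A}$ is a congruence. First I would establish the crucial auxiliary claim: for every term $t$, the value $\termsem{t}{\termmod{A}}$ is exactly the equivalence class $\eqcl{t}{A}$. This is immediate by induction: for a constant $c$ the model defines $\sem{c} = \eqcl{c}{A}$; and for $t = f(t_1,\ldots,t_m)$, the inductive hypothesis gives $\termsem{t_i}{\termmod{A}} = \eqcl{t_i}{A}$, so by the interpretation of $f$ in $\termmod{A}$ we get $\termsem{f(t_1,\ldots,t_m)}{\termmod{A}} = \sem{f}(\eqcl{t_1}{A},\ldots,\eqcl{t_m}{A}) = \eqcl{f(t_1,\ldots,t_m)}{A}$. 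Hence $\termsem{t}{\termmod{A}} = \termsem{t'}{\termmod{A}}$ holds if and only if $t \congcl{A} t'$.

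Next I would show that for any model $\Mm$ satisfying $A$ and any terms $s, s'$ with $s \congcl{A} s'$, we have $\termsem{s}{\Mm} = \termsem{s'}{\Mm}$. The relation $R = \setpred{(s,s')}{\termsem{s}{\Mm} = \termsem{s'}{\Mm}}$ on $\Terms$ is an equivalence relation, and it is a congruence because function interpretations in $\Mm$ are well-defined functions (if $\termsem{s_i}{\Mm} = \termsem{s_i'}{\Mm}$ for all $i$, then $\termsem{f(s_1,\ldots,s_m)}{\Mm} = \sem{f}^{\Mm}(\termsem{s_1}{\Mm},\ldots,\termsem{s_m}{\Mm}) = \sem{f}^{\Mm}(\termsem{s_1'}{\Mm},\ldots,\termsem{s_m'}{\Mm}) = \termsem{f(s_1',\ldots,s_m')}{\Mm}$). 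Since $\Mm$ satisfies $A$, we have $A \subseteq R$, and since $\congcl{A}$ is by definition the smallest congruence containing $A$, it follows that $\congcl{A} \subseteq R$.

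Combining the two steps finishes the proof: if $\termsem{t}{\termmod{A}} = \termsem{t'}{\termmod{A}}$, then by the first step $t \congcl{A} t'$, and then by the second step $\termsem{t}{\Mm} = \termsem{t'}{\Mm}$. I do not expect any genuine obstacle here; the only point requiring a little care is making the auxiliary claim $\termsem{t}{\termmod{A}} = \eqcl{t}{A}$ precise, since it relies on the interpretation clause for $f$ in $\termmod{A}$ being well-defined as a function on $\Terms/\congcl{A}$ — but that well-definedness is exactly the congruence property of $\congcl{A}$, which is built into Definition~\ref{def:term-model}. Everything else is routine induction on term structure.
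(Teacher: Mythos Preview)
Your proof is correct and follows essentially the same approach as the paper: define the relation $\equiv_{\Mm}$ given by $t \equiv_{\Mm} t'$ iff $\termsem{t}{\Mm} = \termsem{t'}{\Mm}$, observe it is a congruence containing $A$ (hence contains $\congcl{A}$), and observe that $\equiv_{\termmod{A}} = \congcl{A}$. The paper's version simply asserts these two observations without spelling out the inductions, whereas you unpack them explicitly, but the argument is the same.
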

\begin{proof}
Any model $\Mm$ defines an equivalence on terms $\equiv_{\Mm}$ as
follows: $t \equiv_{\Mm} t'$ iff $\termsem{t}{\Mm} =
\termsem{t'}{\Mm}$. Observe that $\equiv_{\Mm}$ is a congruence, and
if $\Mm$ satisfies $A$, then $A \subseteq \equiv_{\Mm}$. Thus,
$\congcl{A} \subseteq \equiv_{\Mm}$. Next, observe that for the term
model $\termmod{A}$, $\equiv_{\termmod{A}} = \congcl{A}$. The
proposition follows from these observations.
\end{proof}

One consequence of the above proposition
is the following. Let $A$ be a
set of equalities, $t_1, t_2$ be terms, and $\Mm$ be a data model
satisfying $A$. If $\termsem{t_1}{\Mm} \neq \termsem{t_2}{\Mm}$ then
$\termsem{t_1}{\termmod{A}} \neq \termsem{t_2}{\termmod{A}}$. This
means that to check the feasibility of an execution $\rho$, it
suffices to check its feasibility in $\termmod{\alpha(\rho)}$.
\begin{corollary}
\label{cor:feasibility}
Let $\rho$ be any execution. There is a data model $\Mm$ such that
$\rho$ is feasible in $\Mm$ if and only if $\rho$ is feasible in
$\termmod{\alpha(\rho)}$.
\end{corollary}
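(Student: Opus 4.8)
The plan is to derive the corollary directly from Proposition~\ref{lem:term-model}, using the definition of feasibility as the only extra ingredient. Recall that $\rho$ is feasible in a data model $\Mm$ exactly when $\termsem{t}{\Mm} = \termsem{t'}{\Mm}$ for every $(t,t') \in \alpha(\rho)$ and $\termsem{t}{\Mm} \neq \termsem{t'}{\Mm}$ for every $(t,t') \in \beta(\rho)$. The right-to-left direction is immediate: if $\rho$ is feasible in $\termmod{\alpha(\rho)}$, then $\termmod{\alpha(\rho)}$ is itself a witnessing data model, so we simply take $\Mm = \termmod{\alpha(\rho)}$.

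For the left-to-right direction, suppose $\rho$ is feasible in some data model $\Mm$. First I would observe that, since $\Mm$ satisfies every equality in $\alpha(\rho)$, the model $\Mm$ satisfies the relation $A = \alpha(\rho)$ in the sense of the definition preceding Proposition~\ref{lem:term-model}. Now I need to check the two feasibility conditions for $\termmod{\alpha(\rho)}$. For the equality assumes: each $(t,t') \in \alpha(\rho)$ lies in $\congcl{\alpha(\rho)}$, and since $\equiv_{\termmod{\alpha(\rho)}} = \congcl{\alpha(\rho)}$ (as noted in the proof of Proposition~\ref{lem:term-model}), we get $\termsem{t}{\termmod{\alpha(\rho)}} = \termsem{t'}{\termmod{\alpha(\rho)}}$; in other words $\termmod{\alpha(\rho)}$ trivially satisfies $\alpha(\rho)$. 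For the disequality assumes: take any $(t,t') \in \beta(\rho)$. Since $\rho$ is feasible in $\Mm$, we have $\termsem{t}{\Mm} \neq \termsem{t'}{\Mm}$. Applying the contrapositive of Proposition~\ref{lem:term-model} (with $A = \alpha(\rho)$, whose hypothesis that $\Mm$ satisfies $A$ we have already verified), we conclude $\termsem{t}{\termmod{\alpha(\rho)}} \neq \termsem{t'}{\termmod{\alpha(\rho)}}$. Thus both conditions hold and $\rho$ is feasible in $\termmod{\alpha(\rho)}$.

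There is no real obstacle here; the corollary is essentially a repackaging of Proposition~\ref{lem:term-model}. The only point requiring a moment's care is noting that $\termmod{\alpha(\rho)}$ automatically satisfies all the equality assumptions of $\rho$ (so that feasibility reduces entirely to the disequality side), which follows from the identity $\equiv_{\termmod{A}} = \congcl{A}$ established in the preceding proof. I would keep the write-up to a few sentences, essentially the two-direction argument above.
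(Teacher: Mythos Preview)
Your proposal is correct and matches the paper's approach: the paper also derives the corollary directly from Proposition~\ref{lem:term-model}, noting (just before stating the corollary) that the contrapositive yields $\termsem{t_1}{\Mm} \neq \termsem{t_2}{\Mm} \Rightarrow \termsem{t_1}{\termmod{A}} \neq \termsem{t_2}{\termmod{A}}$ whenever $\Mm$ satisfies $A$, which is exactly the step you use for the disequality assumptions. Your write-up is simply more explicit than the paper's one-line justification.
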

\noindent
That is, an execution $\rho$ is feasible iff $\congcl{\alpha(\rho)} \cap \beta(\rho) = \emptyset$.
%

Let us return to the problem of verifying if a program satisfies the
postcondition $\fals$. This requires us to check that no execution of
the program is feasible in any data model. Let us now focus on the
simpler problem of \emph{execution verification} --- given an
execution $\rho$ check if there is some data model in which $\rho$ is
feasible. If we can solve the execution verification problem, then we
could potentially solve the program verification problem; since the
set of executions of a program are regular, we could run the execution
verification algorithm synchronously with the NFA representing the set
of all program executions to see if any of them are feasible.

Corollary~\ref{cor:feasibility} has an important consequence for
execution verification --- to check if $\rho$ is feasible, evaluate
$\rho$ in the data model $\termmod{\alpha(\rho)}$. If the execution
verification algorithm is to be lifted to verify all executions of a
program, then the algorithm must evaluate the execution as the symbols
come in. It cannot assume to have the entire execution. This poses
challenges that must be overcome. First the term model
$\termmod{\alpha(\rho)}$ is typically infinite and cannot be
constructed explicitly. Second, since equality assumptions come in as
the execution unfolds, $\alpha(\rho)$ is not known at the beginning
and therefore, neither is the exact term model on which $\rho$ must be
evaluated known. In fact, in general, we cannot evaluate an arbitrary
execution $\rho$ in a term model $\termmod{\alpha(\rho)}$ in an
incremental fashion. The main result of this section shows that we can
exploit properties of coherent executions to overcome these
challenges.

To explain the intuition behind our algorithm, let us begin by
considering a na\"{i}ve algorithm that evaluates an execution in a
data model. Suppose the data model $\Mm$ is completely known. One
algorithm to evaluate an execution $\rho$ in $\Mm$, would keep track
of the values of each program variable with respect to model $\Mm$,
and for each $\passume$ step, check if the equality or disequality
assumption holds in $\Mm$. When $\Mm$ is the term model, the value
that variable $x$ takes after (partial) execution $\rho'$, is the
equivalence class of $\comp(\rho',x)$ with respect to congruence
defined by all the equality assumptions in the \emph{complete} execution $\rho$.

Our algorithm to verify an execution, will follow the basic template
of the na\"{i}ve algorithm above, with important modifications. First,
after a prefix $\rho'$ of the execution $\rho$, we have only seen a
partial set of equality assumptions and not the entire set. Therefore,
the value of variable $x$ that the algorithm tracks will be
$\eqcl{\comp(\rho',x)}{\alpha(\rho')}$ and not
$\eqcl{\comp(\rho',x)}{\alpha(\rho)}$. Now, when a new equality
assumption $\passume (y = z)$ is seen, we will need to update the
values of each variable to be that in the term model that also
satisfies this new equation. This requires updating the congruence
class of the terms corresponding to each variable as new equalities
come in. In addition, the algorithm needs to ensure that if a previously
seen disequality assumption is violated because of the new equation,
it can be determined, eventhough the disequality assumption maybe
between two terms that are no longer stored in any of the program
variables. Second, our algorithm will also track the interpretation of
the function symbols when applied to the values stored for variables
in the program. Thus, after a prefix $\rho'$, the algorithm constructs
part of the model $\termmod{\alpha(\rho')}$ when restricted to the
variable values. This partial model helps the algorithm update the
values of variables when a new equality assume is read. The third
wrinkle concerns how $\eqcl{\comp(\rho',x)}{\alpha(\rho')}$ is
stored. We could store a representative term from this equivalence
class. This would result in an algorithm whose memory requirements
grow with the execution. Instead the algorithm only maintains, for
every pair of variables $x, y$, whether their values in
$\termmod{\alpha(\rho')}$ are equal or not. This means that the memory
requirements of the algorithm do not grow with the length of the
execution being analyzed. Thus, we will in fact show, that the
collection of all feasible partial executions is a regular language.

In order to be able to carry out the above analysis incrementally, our
algorithm crucially exploits the coherence properties of the
execution. To illustrate one reason why the above approach would not
work for non-coherent executions, consider a prefix $\rho'$ such that
$\comp(\rho',x) = \init{x}$, $\comp(\rho',y) = \init{y}$,
$\comp(\rho',u) = f^{100}(\init{x})$, and $\comp(\rho',v) =
f^{100}(\init{y})$. Let us assume that $\alpha(\rho') =
\emptyset$. Suppose we now encounter $\passume (x = y)$. This means
that in the term model that satisfies this equality, the values of
variables $u$ and $v$ are the same. However, this is possible only if
the algorithm somehow maintains the information that $u$ and $v$ are
the result of hundred applications of $f$ to the values in $x$ and
$y$. This cannot be done using bounded memory. Notice, however, that
in this case $\rho'\cdot \dblqt{\passume (x = y)}$ is not a coherent
execution because the assume at the end is not early. Early assumes
ensure that the effect of an new equality assumption can be fully
determined on the current values to the variables.

  To understand the importance of the memoizing property in the decision procedure,
  consider the execution 
  $\rho' \delequal x \passign y \cdot \underbrace{y \passign f(y) \cdots y \passign f(y)}_{\text{n times}} \cdot \underbrace{x \passign f(x) \cdots x \passign f(x)}_{\text{n times}} $.
  This execution trivially satisfies the ``early assumes'' criterion.
  However it is not memoizing since the terms 
  $\init{y}, f(\init{y}), \ldots, f^{n-1}(\init{y})$ 
  have been re-computed after they have been dropped.
  Now, suppose that $\rho \delequal \rho' \cdot \passume(x \neq y)$ 
  is a complete extension of $\rho'$.
  Notice that $\rho$ is not memoizing but still satisfies the ``early assumes'' criterion 
  (as it has no equality assumptions).
  Now, in order for the algorithm to correctly determine that this execution is infeasible,
  it needs to correctly maintain the information that $\comp(\rho',y) = \comp(\rho',x) = f^{n}(\init{y})$.
  This again, is not possible using bounded memory.

\subsection*{Formal Details}

We will now flesh out the intuitions laid out above. We will introduce
concepts and properties that will be used in the formal construction
and its correctness proof.

Recall that our algorithm will track the values of the program
variables in a term model. 
When we have a coherent execution $\sigma' = \sigma \cdot \passume(x = y)$,
the terms corresponding to program variables 
obey a special relationship with the 
set of terms $\Terms(\sigma')$ constructed anytime during the execution 
and with the equality assumptions seen in $\sigma$. 
We capture this through the following definition
which has been motivated by the condition of early assumes.

\begin{definition}[Superterm closedness modulo congruence]
\deflabel{superterm-closure}
Let $T$ be a subterm closed set of terms.  
Let $E \subseteq T\times T$
be a set of equations on $T$ and $\congcl{E}$ be its congruence closure.
Let $W \subseteq T$ and let $t_1, t_2 \in W$.
Then, $W$ is said to be \emph{closed under
  superterms with respect to $T$, $E$ and $(t_1, t_2)$} 
if for any term $s \in T$ such that
$s$ is a superterm of either $t_1$ or $t_2$ modulo $\congcl{E}$,
there is term  $s' \in W$ such that $s' \congcl{E} s$.

\end{definition}
Coherent executions ensure that the set of
values of variables (or equivalently the set of terms corresponding to the variables) 
is superterm closed modulo congruence with respect to a newly encountered equality assumption; 
observe that for any partial execution $\rho$, $\Terms(\rho)$ is subterm closed.
\begin{lemma}
\lemlabel{coherent-superterm}
Let $\sigma$ be a coherent execution over variables $V$ and let $\rho' = \rho \cdot \dblqt{\passume(x = y)}$ be any prefix of $\sigma$. 
Then $W = \setpred{\comp(\rho,v)}{v \in V}$
is closed under superterms with respect to $\Terms(\rho)$,
$\alpha(\rho)$ and $(\comp(\rho, x), \comp(\rho, y))$.
\end{lemma}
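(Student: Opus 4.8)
The plan is to observe that this lemma is essentially the \emph{early assumes} clause of \defref{coherence}, repackaged in the vocabulary of \defref{superterm-closure}. First I would check that the three side conditions needed to even speak of superterm closedness are met, instantiating \defref{superterm-closure} with $T = \Terms(\rho)$, $E = \alpha(\rho)$, $W = \setpred{\comp(\rho,v)}{v \in V}$, and $(t_1,t_2) = (\comp(\rho,x),\comp(\rho,y))$. That $\Terms(\rho)$ is subterm closed follows by a routine induction on the length of $\rho$: an $\passume$ step changes neither the term set nor any variable value, the step $\dblqt{u \passign v}$ only copies an already-present term, and the only step that enlarges the term set, $\dblqt{u \passign f(\vec{z})}$, adds the single term $f(\comp(\cdot,z_1),\dots,\comp(\cdot,z_m))$ whose immediate subterms $\comp(\cdot,z_i)$ are already present. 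A similarly trivial induction gives $\alpha(\rho) \subseteq \Terms(\rho) \times \Terms(\rho)$, since every equality assume contributes a pair of terms held by program variables at that point. Finally $W \subseteq \Terms(\rho)$ and $\comp(\rho,x),\comp(\rho,y) \in W$ because $\rho$ is a prefix of itself.

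With the side conditions discharged, I would prove the closure property directly. Let $s \in \Terms(\rho)$ be a superterm of $\comp(\rho,x)$ or of $\comp(\rho,y)$ modulo $\congcl{\alpha(\rho)}$. Since $\sigma$ is coherent and $\rho' = \rho \cdot \dblqt{\passume(x = y)}$ is a prefix of $\sigma$, the Early Assumes clause of \defref{coherence}, applied to this prefix with $t_x = \comp(\rho,x)$, $t_y = \comp(\rho,y)$, and $t' := s$, yields a variable $z \in V$ with $\comp(\rho,z) \congcl{\alpha(\rho)} s$. Since $\comp(\rho,z) \in W$, taking $s' := \comp(\rho,z)$ witnesses exactly the requirement of \defref{superterm-closure}, and the lemma follows.

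I do not expect a genuine obstacle here: the lemma carries no content beyond the definition of coherence itself. The only care needed is (a) matching the quantifier in the Early Assumes clause — which ranges over superterms of $\comp(\rho,x)$ and $\comp(\rho,y)$ drawn from all terms computed so far, i.e.\ from $\Terms(\rho)$ — against the quantifier in \defref{superterm-closure}, and (b) discharging the two bookkeeping inductions above. The reason it is worth isolating as a lemma is that ``$W$ is superterm closed modulo congruence'' is precisely the invariant the streaming congruence-closure automaton will need to maintain, so it is convenient to translate coherence into that form once.
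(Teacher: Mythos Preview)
Your proposal is correct and takes essentially the same approach as the paper, which simply notes that the lemma follows trivially from the early-assumes clause of the coherence definition. Your version is more thorough in that you explicitly verify the side conditions of \defref{superterm-closure} (subterm closedness of $\Terms(\rho)$, $\alpha(\rho) \subseteq \Terms(\rho)\times\Terms(\rho)$, and $W \subseteq \Terms(\rho)$), which the paper leaves implicit, but the core argument is identical.
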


As pointed out in the overview, our algorithm will not explicitly
track the terms stored in program variable, but instead track the
equivalence between these terms in the term model. In addition, it
also tracks the interpretations of function symbols on the stored
terms in the term model. Finally, it will store the pairs of terms
(stored currently in program variables) that have been assumed to be
not equal in the execution. The following definition captures when
such an algorithm state is consistent with a set of terms, equalities,
and disequalities. 
In the definition below, the reader may
think of $W$ as the set of terms corresponding to each program variable, $E$
as the set equality assumptions, and $D$ as the set of disequality
assumptions after a prefix of the execution.

\begin{definition}[Consistency]
\deflabel{consistency}
Let $W$ be a set of terms, $E$ a set of equations on terms, and $D$ be
a set of disequalities on terms. Let $\equiv_W$ be an equivalence
relation on $W$, $D_W \subseteq W/\equiv_{W} \times W/\equiv_{W}$ be a symmetric relation, 
and $P$ be a partial interpretation of function symbols, i.e., for any
$k$-ary function symbol $f$, $P(f)$ is a partial function mapping
$k$-tuples in $(W/\equiv_{W})^k$ to $W/\equiv_W$. We will say
$(\equiv_W,D_W,P)$ is \emph{consistent} with respect to $(W,E,D)$ iff
the following hold.
\begin{enumerate}[label=(\alph*)]
\item 
For $t_1,t_2 \in W$, $t_1 \equiv_W t_2$ if and only if
$\termsem{t_1}{\termmod{E}} = \termsem{t_2}{\termmod{E}}$, i.e.,
$t_1$ and $t_2$ evaluate to the same value in $\termmod{E}$,

\item 
$(\eqcl{t_1}{\equiv_W},\eqcl{t_2}{\equiv_W}) \in D_W$ iff there
are terms $t_1',t_2'$ such that $t_1'\congcl{E} t_1$, and $t_2' \congcl{E} t_2$
and $\set{(t_1',t_2'), (t'_2, t'_1)} \cap D \neq \emptyset$. 
\item

$P(f)(\eqcl{t_1}{\equiv_W},\ldots , \eqcl{t_k}{\equiv_W}) = 
   \begin{cases}
   \eqcl{t}{\equiv_W} & \mbox{if } f(t_1,\ldots t_k) \congcl{E} t\\
   \undf & \mbox{otherwise}
   \end{cases}$
\end{enumerate}
\end{definition}

There are two crucial properties about a set $W$ that is superterm
closed (\defref{superterm-closure}). When we have a state that is
consistent (as per \defref{consistency}), we can correctly update it
when we add an equation by doing a ``local'' congruence closure of the
terms in $W$. This is the content of \lemref{window_update_equation}
and its detailed proof can be found in Appendix~\ref{app:proofs-automaton}.

\begin{lemma}
\lemlabel{window_update_equation}

Let $T$ be a set of subterm-closed set of terms, $E \subseteq T \times
T$ be a set of equalities on $T$, and $D \subseteq T \times T$ be a
set of disequalities. Let $W \subseteq T$ be a set closed under superterms
with respect to $T, E$ and some pair $(s_1, s_2) \in W\times W$. 
Let $(\equiv_W, D_W, P)$ be consistent
with $(W,E,D)$. 
Define $\sim_{s,s'}$ to be the smallest equivalence relation on $W$ such that
\begin{itemize}
\item $\equiv_W \cup \{(s,s')\} \subseteq \sim_{s,s'}$
\item for every $k$-ary function symbol $f$ and terms
  $t_1,t_1',t_2,t_2',\ldots t_k,t_k',t,t' \in W$ such that $t \in
  P(f)(\eqcl{t_1}{\equiv_W}, \ldots \eqcl{t_k}{\equiv_W})$, $t' \in
  P(f)(\eqcl{t_1'}{\equiv_W},\ldots \eqcl{t_k'}{\equiv_W})$, and
  $(t_i,t_i') \in \sim_{s,s'}$ for each $i$, we have $(t,t') \in
  \sim_{s,s'}$.
\end{itemize}
In addition, take $D'_W =
\setpred{(\eqcl{t_1}{\sim_{s,s'}},\eqcl{t_2}{\sim_{s,s'}})}{(\eqcl{t_1}{\equiv_W},\eqcl{t_2}{\equiv_W})
  \in D_W}$ and
\[
P'(f)(\eqcl{t_1}{\sim_{s,s'}},\ldots \eqcl{t_k}{\sim_{s,s'}}) =
  \left\{
  \begin{array}{ll}
  \eqcl{t}{\sim_{s,s'}} & \mbox{if } 
                 P(f)(\eqcl{t_1}{\equiv_W},\ldots \eqcl{t_k}{\equiv_W}) = 
                 \eqcl{t}{\equiv_W} \\
  \undf & \mbox{otherwise}
  \end{array} \right.
\]
Then $(\sim_{s,s'},D'_W,P')$ is consistent with $(W, E\cup\{(s,s')\}, D)$.
\end{lemma}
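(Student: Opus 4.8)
The goal is to verify the three consistency conditions (a), (b), (c) of \defref{consistency} for the triple $(\sim_{s,s'}, D'_W, P')$ with respect to $(W, E \cup \{(s,s')\}, D)$. Write $E' = E \cup \{(s,s')\}$ for brevity. The heart of the argument is condition (a): for $t_1, t_2 \in W$, we must show $t_1 \sim_{s,s'} t_2$ iff $\termsem{t_1}{\termmod{E'}} = \termsem{t_2}{\termmod{E'}}$, equivalently $t_1 \congcl{E'} t_2$. One inclusion is easy: since $\equiv_W \;\subseteq\; \congcl{E}\;\subseteq\;\congcl{E'}$, since $(s,s') \in \congcl{E'}$, and since $\congcl{E'}$ is a congruence, $\congcl{E'}$ satisfies both closure clauses defining $\sim_{s,s'}$ (using consistency condition (c) to translate $P(f)$-facts into $\congcl{E}$-superterm facts), so $\sim_{s,s'}\; \subseteq\; \congcl{E'}$ on $W \times W$. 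The reverse inclusion --- that $\congcl{E'}$ restricted to $W$ is \emph{no coarser} than $\sim_{s,s'}$ --- is where superterm-closedness of $W$ is essential and is the main obstacle.

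For the hard direction I would argue as follows. Consider the quotient set $W/\sim_{s,s'}$ and try to build from it a congruence on all of $\Terms$ (or at least a model) in which $s$ and $s'$ collapse but no two $\sim_{s,s'}$-inequivalent elements of $W$ do. The natural construction: start from $\termmod{E}$, whose universe is $\Terms/\congcl{E}$, and form the smallest congruence $\approx$ on $\Terms$ containing $\congcl{E} \cup \{(s,s')\}$; this is exactly $\congcl{E'}$. I want to show that the map sending each $\congcl{E}$-class of a term in $W$ to its $\sim_{s,s'}$-class is well-defined and injective on the relevant portion, i.e., that merging $s$ with $s'$ propagates \emph{only} through terms that are superterms of $s$ or $s'$, and by superterm-closedness every such superterm (modulo $\congcl{E}$) already has a representative in $W$, so the propagation stays inside $W$ and is faithfully tracked by the second clause in the definition of $\sim_{s,s'}$ via the partial interpretation $P$. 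Concretely: an inductive characterization of $\congcl{E'}$ says $u \congcl{E'} v$ iff there is a finite chain of rewrites each of which either is an $E$-step, or replaces a subterm occurrence of $s$ by $s'$ (or vice versa), followed by congruence closure. Each such rewrite, when both endpoints lie in $W$ up to $\congcl{E}$, touches only superterms of $s$ or $s'$; superterm-closedness of $W$ w.r.t.\ $(T,E,(s_1,s_2)) = (T,E,(s,s'))$ guarantees those intermediate superterms are $\congcl{E}$-equivalent to members of $W$, and consistency condition (c) guarantees $P$ records the requisite function applications, so the chain is simulated step-by-step inside $W/\sim_{s,s'}$. This shows $t_1 \congcl{E'} t_2 \Rightarrow t_1 \sim_{s,s'} t_2$ for $t_1,t_2 \in W$, completing (a).

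Conditions (b) and (c) are then comparatively routine and I would dispatch them using (a). For (c): $P'(f)(\eqcl{t_1}{\sim_{s,s'}},\dots,\eqcl{t_k}{\sim_{s,s'}}) = \eqcl{t}{\sim_{s,s'}}$ by definition exactly when $P(f)(\eqcl{t_1}{\equiv_W},\dots) = \eqcl{t}{\equiv_W}$, which by consistency of $(\equiv_W,D_W,P)$ means $f(t_1,\dots,t_k) \congcl{E} t$, hence $f(t_1,\dots,t_k)\congcl{E'} t$; conversely if $f(t_1,\dots,t_k)\congcl{E'}t$ with all arguments and $t$ in $W$, then since $W$ is superterm-closed the term $f(t_1,\dots,t_k)$ has a $\congcl{E'}$-representative, and one argues (again via the superterm-closure of $W$ plus the chain characterization) that $P(f)$ was already defined on the $\equiv_W$-classes; well-definedness of $P'$ on $\sim_{s,s'}$-classes follows from the second clause of the definition of $\sim_{s,s'}$. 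For (b): $(\eqcl{t_1}{\sim_{s,s'}},\eqcl{t_2}{\sim_{s,s'}}) \in D'_W$ iff there exist $\bar t_1, \bar t_2 \in W$ with $\bar t_1 \equiv_W t_1$, $\bar t_2 \equiv_W t_2$ and $(\eqcl{\bar t_1}{\equiv_W},\eqcl{\bar t_2}{\equiv_W})\in D_W$; unfolding consistency of $D_W$ and using that $\equiv_W\subseteq\congcl{E}\subseteq\congcl{E'}$ and $\sim_{s,s'}\subseteq\congcl{E'}$ (from (a)), this is equivalent to the existence of $t_1',t_2'$ with $t_1'\congcl{E'}t_1$, $t_2'\congcl{E'}t_2$ and $\{(t_1',t_2'),(t_2',t_1')\}\cap D\neq\emptyset$, which is precisely condition (b) for $(W,E',D)$. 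The one subtlety to double-check is that $D'_W$ is well-defined (different $\equiv_W$-class pairs in $D_W$ could have merged under $\sim_{s,s'}$), but this merely makes the set smaller and does not affect the iff. I expect the bookkeeping around the chain-of-rewrites lemma for $\congcl{E'}$ to be the only genuinely delicate part.
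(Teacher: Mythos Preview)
Your proposal is correct and follows essentially the same route as the paper. For the hard direction of condition~(a), the paper inducts on ``the step in the congruence closure algorithm that adds $(t_1,t_2)$ to $\congcl{E'}$'': the base case is $t_1 \equiv_W s$, $t_2 \equiv_W s'$, and the inductive step writes $t_1 = f(u^1_1,\ldots,u^k_1)$, $t_2 = f(u^1_2,\ldots,u^k_2)$ with $\congcl{E'}$-equivalent arguments, then invokes superterm-closedness to find $W$-representatives $v^i_j \congcl{E} u^i_j$, applies the inductive hypothesis, and uses consistency of $P$ to close under $\sim_{s,s'}$. Your rewrite-chain formulation is just a different packaging of the same induction; both rest on the observation that every pair in $\congcl{E'}\setminus\congcl{E}$ consists of $\congcl{E}$-superterms of $s$ and $s'$ (the paper states this explicitly in the proof of \lemref{window_check_eq}), so superterm-closedness keeps the whole derivation inside $W$ up to $\congcl{E}$. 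For conditions~(b) and~(c) the paper simply declares them ``rather straightforward'' from consistency of $(\equiv_W,D_W,P)$ and the fact that $\sim_{s,s'}$ is an equivalence; your more explicit unfolding is fine and in fact more careful than the paper's own treatment.
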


The second important property about $W$ being superterm closed is that
feasibility of executions can be checked easily. Recall that, our
previous observations indicate that an execution is feasible, if all
the disequality assumptions hold in the term model, i.e., if $E$ is a
set of equality assumptions and $D$ is a set of disequality assumptions,
feasibility requires checking that $\congcl{E} \cap D =
\emptyset$. Now, we show that when $W$ is superterm closed, 
then checking this condition when a new equation is added to $E$
can be done by just looking at $W$; notice
that $D$ may have disequalities involving terms that are not in $W$,
and so the observation is not trivial.

\begin{lemma}
\lemlabel{window_check_eq}

Let $T$ be a set of subterm-closed set of terms, $E \subseteq T \times
T$ be a set of equalities on $T$, and $D \subseteq T \times T$ be a
set of disequalities such that $D \cap \congcl{E} = \emptyset$. Let $W
\subseteq T$ and let $t_1, t_2 \in W$ be such that
$W$ is closed under superterms with respect to $T, E$ and $(t_1, t_2)$.
Let $(\equiv_W, D_W, P)$ be consistent with $(W,E,D)$. Then, 
$\congcl{E \cup \{(t_1,t_2)\}} \cap D \neq \emptyset$ iff there
  are terms $t_1',t_2' \in W$ such that $(\eqcl{t_1'}{\equiv_W},
  \eqcl{t_2'}{\equiv_W}) \in D_W$ and $t_1' \sim_{t_1,t_2} t_2'$,
  where $\sim_{t_1,t_2}$ is the equivalence relation on $W$ defined in
  \lemref{window_update_equation}.
\end{lemma}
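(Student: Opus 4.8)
The plan is to prove both directions by relating paths in the congruence closure $\congcl{E \cup \{(t_1,t_2)\}}$ to the ``local'' closure $\sim_{t_1,t_2}$ on $W$, using superterm-closedness to confine the relevant derivations to $W$. For the easier ($\Leftarrow$) direction: suppose there are $t_1', t_2' \in W$ with $(\eqcl{t_1'}{\equiv_W}, \eqcl{t_2'}{\equiv_W}) \in D_W$ and $t_1' \sim_{t_1,t_2} t_2'$. By \lemref{window_update_equation}, $(\sim_{t_1,t_2}, D'_W, P')$ is consistent with $(W, E \cup \{(t_1,t_2)\}, D)$. By consistency clause (b) applied to this new state, $(\eqcl{t_1'}{\sim_{t_1,t_2}}, \eqcl{t_2'}{\sim_{t_1,t_2}}) \in D'_W$ together with $t_1' \sim_{t_1,t_2} t_2'$ forces a pair $(u_1, u_2) \in D$ with $u_1 \congcl{E\cup\{(t_1,t_2)\}} t_1' \sim_{t_1,t_2} t_2' \congcl{E\cup\{(t_1,t_2)\}} u_2$, so $u_1 \congcl{E\cup\{(t_1,t_2)\}} u_2$ and hence $(u_1,u_2) \in \congcl{E\cup\{(t_1,t_2)\}} \cap D$; this needs one extra observation that $\sim_{t_1,t_2}\ \subseteq\ \congcl{E\cup\{(t_1,t_2)\}}$, which is immediate from the defining closure conditions of $\sim_{t_1,t_2}$ and the fact (consistency clause (c) for $P$) that $P(f)$ only records genuine $\congcl{E}$-consequences.

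For the harder ($\Rightarrow$) direction: assume $\congcl{E \cup \{(t_1,t_2)\}} \cap D \neq \emptyset$, witnessed by some $(d_1,d_2) \in D$ with $d_1 \congcl{E\cup\{(t_1,t_2)\}} d_2$ but (since $D \cap \congcl{E} = \emptyset$) $d_1 \not\congcl{E} d_2$. The key structural claim I would isolate is: for terms $a, b \in T$ with $a \congcl{E\cup\{(t_1,t_2)\}} b$, either $a \congcl{E} b$, or there exist $a', b' \in W$ with $a \congcl{E} a'$, $b \congcl{E} b'$, and $a' \sim_{t_1,t_2} b'$. This is proved by induction on the derivation of $a \congcl{E\cup\{(t_1,t_2)\}} b$ (i.e., on the length of an equational proof): the base cases are reflexivity, a pair in $E$ (handled by $\congcl{E}$), or the new pair $(t_1,t_2)$ itself (handled since $t_1, t_2 \in W$ and $t_1 \sim_{t_1,t_2} t_2$); transitivity composes, possibly "absorbing" an $\congcl{E}$-step into the endpoints; and the congruence/function step $f(a_1,\ldots,a_k) \congcl{E\cup\{(t_1,t_2)\}} f(b_1,\ldots,b_k)$ from $a_i \congcl{E\cup\{(t_1,t_2)\}} b_i$ is where superterm-closedness of $W$ does the work — the inductive hypothesis gives $a_i', b_i' \in W$ (or $\congcl{E}$-equality), and because $W$ is closed under superterms with respect to $T, E, (t_1,t_2)$ and $f(a_1,\ldots,a_k) \in T$ is a superterm of (something $\congcl{E}$-equivalent to) $t_1$ or $t_2$ modulo $\congcl{E}$ along the derivation, the terms $f(a_1,\ldots,a_k)$ and $f(b_1,\ldots,b_k)$ have $\congcl{E}$-representatives in $W$; the second defining clause of $\sim_{t_1,t_2}$ (via $P(f) = P'(f)$ recording these function applications, using consistency clause (c)) then yields the needed $\sim_{t_1,t_2}$-relation between those representatives. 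Applying this claim to $(d_1, d_2)$ and using $d_1 \not\congcl{E} d_2$ gives $d_1', d_2' \in W$ with $d_1 \congcl{E} d_1'$, $d_2 \congcl{E} d_2'$, and $d_1' \sim_{t_1,t_2} d_2'$; then consistency clause (b) for the original $(\equiv_W, D_W, P)$ applied to $(d_1,d_2) \in D$ gives $(\eqcl{d_1'}{\equiv_W}, \eqcl{d_2'}{\equiv_W}) \in D_W$, completing the direction with $t_1' := d_1'$, $t_2' := d_2'$.

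The main obstacle I anticipate is the bookkeeping in the congruence step of that induction: one must track carefully why the terms appearing inside a function application in the $\congcl{E\cup\{(t_1,t_2)\}}$-derivation are actually in $T$ and are superterms of $t_1$ or $t_2$ modulo $\congcl{E}$ so that superterm-closedness applies, and why the relevant function-application facts are recorded in $P$ (so that they survive into $\sim_{t_1,t_2}$ via its second closure clause). The cleanest way to organize this is to first prove the structural claim as a standalone sublemma using the normal-form/rewriting view of congruence closure (or, equivalently, induction on proof trees in the equational calculus with reflexivity, symmetry, transitivity, and a congruence rule), and to appeal throughout to \defref{superterm-closure} and to consistency clauses (a) and (c) of \defref{consistency}; everything else — both directions of the final "iff" — then reduces to short applications of \defref{consistency}(b) and \lemref{window_update_equation}.
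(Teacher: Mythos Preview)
Your proposal is correct and the overall strategy matches the paper's, but your organization of the $(\Rightarrow)$ direction is more laborious than necessary. The paper first isolates the simpler observation that any pair $(s,s') \in \congcl{E'} \setminus \congcl{E}$ (with $E' = E \cup \{(t_1,t_2)\}$) satisfies $s \congcl{E} r$ and $s' \congcl{E} r'$ for some superterms $r,r'$ of $t_1,t_2$; superterm-closedness of $W$ then yields representatives $u,u' \in W$, and at that point the paper simply invokes consistency clause~(a) of $(\sim_{t_1,t_2},D'_W,P')$ from \lemref{window_update_equation} to conclude $u \sim_{t_1,t_2} u'$ in one stroke. This bypasses exactly the obstacle you flag: you never need to verify during the induction that the relevant function applications are recorded in $P$, because once $u,u' \in W$ with $u \congcl{E'} u'$, \lemref{window_update_equation} already delivers $u \sim_{t_1,t_2} u'$. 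Your direct induction proving the full structural claim (including the $\sim_{t_1,t_2}$ conclusion) works, but it re-derives machinery already packaged in \lemref{window_update_equation}; the paper's decomposition buys a shorter argument with no case analysis on $P$.
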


Further, the notion of consistency allows us to correctly
check for feasibility when a disequality assumption is seen, and
this is formalized below.

\begin{lemma}
\lemlabel{window_check_diseq}
Let $T$ be a set of subterm-closed set of terms,
$W \subseteq T$, $E \subseteq T \times
T$ be a set of equalities on $T$, and $D \subseteq T \times T$ be a
set of disequalities such that $D \cap \congcl{E} = \emptyset$. 
Let $(\equiv_W, D_W, P)$ be consistent with $(W,E,D)$ and let $t_1, t_2 \in W$. 
Then, 
$\congcl{E} \cap (D \cup \{(t_1,t_2)\} \neq \emptyset$ iff
$(t_1,t_2)\in \equiv_W$.
\end{lemma}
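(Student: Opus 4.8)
The plan is to derive the equivalence directly from clause~(a) of the consistency definition (\defref{consistency}) together with the standing hypothesis $D \cap \congcl{E} = \emptyset$, using the fact—recorded inside the proof of Proposition~\lemref{term-model}—that $\equiv_{\termmod{E}}$ coincides with $\congcl{E}$. Neither the superterm-closedness of $W$, nor the component $P$, nor $D_W$ plays any role here; only clause~(a) of consistency is needed.

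For the ($\Rightarrow$) direction I would reason as follows. Suppose $\congcl{E} \cap (D \cup \{(t_1,t_2)\}) \neq \emptyset$. Since $D \cap \congcl{E} = \emptyset$ by hypothesis, the only pair that can witness this nonempty intersection is $(t_1,t_2)$, so $t_1 \congcl{E} t_2$. Because $\congcl{E} = \equiv_{\termmod{E}}$, this means $\termsem{t_1}{\termmod{E}} = \termsem{t_2}{\termmod{E}}$, and since $t_1, t_2 \in W$, clause~(a) of \defref{consistency} gives $(t_1,t_2) \in \equiv_W$.

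For the ($\Leftarrow$) direction: suppose $(t_1,t_2) \in \equiv_W$. Clause~(a) of \defref{consistency} yields $\termsem{t_1}{\termmod{E}} = \termsem{t_2}{\termmod{E}}$, i.e. $t_1 \congcl{E} t_2$. As $(t_1,t_2)$ trivially belongs to $D \cup \{(t_1,t_2)\}$, we get $(t_1,t_2) \in \congcl{E} \cap (D \cup \{(t_1,t_2)\})$, so the intersection is nonempty.

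I expect no genuine obstacle: the statement is essentially an unpacking of consistency. The one point I would be careful about is symmetry, so that membership of the single ordered pair $(t_1,t_2)$ in $\equiv_W$ (resp. $\congcl{E}$) is enough regardless of orientation—this is fine since $\congcl{E}$ is a congruence, hence symmetric, and $\equiv_W$ is an equivalence relation. I would also note explicitly that the hypothesis $D \cap \congcl{E} = \emptyset$ is exactly what reduces the test ``does adding $(t_1,t_2)$ to $D$ create a clash with $\congcl{E}$'' to the test ``$t_1 \congcl{E} t_2$'', which consistency then lets us read off locally as $(t_1,t_2) \in \equiv_W$.
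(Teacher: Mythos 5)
Your proof is correct and matches the paper's approach: the paper dismisses this lemma with ``Direct consequence of consistency,'' and your argument is exactly the intended unpacking—using $D\cap\congcl{E}=\emptyset$ to reduce the question to whether $t_1\congcl{E}t_2$, and then reading that off via clause~(a) of \defref{consistency} and the identity $\equiv_{\termmod{E}}\,=\,\congcl{E}$.
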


Lemmas~\ref{lem:coherent-superterm}, \ref{lem:window_update_equation},
\ref{lem:window_check_eq} and~\ref{lem:window_check_diseq} suggest that when an execution is coherent,
the equivalence between terms stored in program variables can be
tracked and feasibility of the execution can be checked, as equality
and disequality assumptions are seen. 
Next, we will exploit these observations to give the construction of an 
automaton that accepts exactly those coherent executions that are feasible.

\subsection*{Streaming Congruence Closure: Automaton for Feasibility of Coherent Executions}

Having given a broad overview of our approach, and defined various
concepts and properties, we are ready to present the main result of this
section, which says that the collection of feasible partial executions
forms a regular language. Let us fix $V$ to be the set of program
variables and $\Sigma = (\Cc,\Ff)$ to be the signature of
operations. Recall that $\Pi$ denotes the alphabet over which
executions are defined. We now formally define the automaton
$\feas{\Aa}$ whose language is the collection of all partial executions
that are feasible in some data model.

\vspace*{0.1in}
\noindent
\textbf{States.}  The states in our automaton are either the special
state $\reject$ or tuples of the form $(\equiv,d,P)$ where
$\equiv \subseteq V \times V$ is an equivalence relation, $d \subseteq
V/\equiv \times V/\equiv$ is symmetric and \emph{irreflexive}, and $P$ be a partial
interpretation of the function symbols in $\Ff$, i.e., for any $k$-ary
function symbol $f$, $P(f)$ is a partial function from $(V/\equiv)^k$
to $V/\equiv$. Intuitively, in a state of the form $(\equiv,d,P)$,
$\equiv$ captures the equivalence amongst the terms stored in the
variables that hold in the term model satisfying the equalities seen
so far, $d$ are the disequality assumptions seen so far restricted to
the terms stored in the program variables, and $P$ summarizes the
interpretations of the function symbols in the relevant term model.

\vspace*{0.1in}
\noindent
\textbf{Initial state.} The initial state $q_0 = (\equiv_0, d_0,
P_0)$, where $\equiv_0 = \setpred{(x,x)}{x \in V}$, $d_0 = \emptyset$,
and for every $k$-ary function symbol $f$, and any $x_1,\ldots x_k \in
V$, $P(f)(\eqcl{x_1}{\equiv_0}, \ldots \eqcl{x_k}{\equiv_0}) = \undf$ (undefined).

\vspace*{0.1in}
\noindent
\textbf{Accepting states.} All states except $\reject$ are
accepting.

\vspace*{0.1in}
\noindent
\textbf{Transitions.} The $\reject$ state is absorbing, i.e.,
for every $a \in \Pi$, $\feas{\delta}(\reject,a) =
\reject$. For the other states, the transitions are more
involved. Let $q = (\equiv,d,P)$ and let $q' = (\equiv',d',P')$. There
is a transtion $\feas{\delta}(q,a) = q'$ if one of the following
conditions holds. If in any of the cases $d'$ is \emph{not
  irreflexive} then $\feas{\delta}(q,a) = \reject$ in each case.
For a subset $V' \subseteq V$, we will use the notation 
$\longproj{\equiv}{V'}$ to denote the relation $\big( \equiv \cap V' \times V'\big)$.
\begin{description}
\item[$a = \dblqt{x \passign y}$].\\
In this case if $y$ and $x$ are the same variables, then $\equiv' \, = \, \equiv$, $d' = d$ and $P' = P$.
Otherwise, the variable $x$ gets updated to be in the equivalence class of the variable $y$, and $d'$ and
  $P'$ are updated in the most natural way. Formally,
\begin{itemize}
\item $\equiv' =  
  \longproj{\equiv}{V\setminus\set{x}}
  \cup \setpred{(x,y'),(y',x)}{y'\equiv
  y} \cup \set{(x, x)}$.
\item $d' = \setpred{(\eqcl{x_1}{\equiv'},\eqcl{x_2}{\equiv'})}{x_1,
  x_2 \in V\setminus \set{x}, (\eqcl{x_1}{\equiv},\eqcl{x_2}{\equiv})
  \in d }$ 
\item $P'$ is such that for every $r$-ary function $h$,
\[
P'(h)(\eqcl{x_1}{\equiv'},\ldots \eqcl{x_r}{\equiv'}) =
   \begin{cases}
   \eqcl{u}{\equiv'} 
                & x \not\in \set{u, x_1,\ldots x_r} \text{ and }\\
   & \eqcl{u}{\equiv} = P(h)(\eqcl{x_1}{\equiv},\ldots \eqcl{x_r}{\equiv})\\
   \undf & \mbox{otherwise}
   \end{cases}
\]
\end{itemize}

\item[$a = \dblqt{x \passign f(z_1, \ldots z_k)}$].\\
 There are two cases  to consider.
\begin{enumerate}
\item \textbf{Case $P(f)(\eqcl{z_1}{\equiv},\ldots \eqcl{z_k}{\equiv})$
  is defined}.\\
   Let $P(f)(\eqcl{z_1}{\equiv},\ldots \eqcl{z_k}{\equiv})
  = \eqcl{v}{\equiv}$. 
  This case is similar to the case when $a$ is
  $\dblqt{x \passign y}$.  
  That is, when $x \in \eqcl{v}{\equiv}$, then $\equiv' \, = \, \equiv$, $d' = d$ and $P' = P$.
  Otherwise, we have
\begin{itemize}
\item $\equiv' = \longproj{\equiv}{V\setminus\set{x}} \cup \setpred{(x,v'),(v',x)}{v'\equiv v} \cup \set{(x, x)}$
\item $d' = \setpred{(\eqcl{x_1}{\equiv'},\eqcl{x_2}{\equiv'})}{x_1,
  x_2 \in V\setminus \set{x}, (\eqcl{x_1}{\equiv},\eqcl{x_2}{\equiv})
  \in d }$
\item $P'$ is such that for every $r$-ary function $h$,
\[
P'(h)(\eqcl{x_1}{\equiv'},\ldots \eqcl{x_r}{\equiv'}) =
   \begin{cases}
   \eqcl{u}{\equiv'} 
   
   & x \not\in \set{u, x_1,\ldots x_r} \text{ and }\\
   & \eqcl{u}{\equiv} = P(h)(\eqcl{x_1}{\equiv},\ldots \eqcl{x_r}{\equiv})\\
   \undf & \mbox{otherwise}
   \end{cases}
\]
\end{itemize}

\item \textbf{Case $P(f)(\eqcl{z_1}{\equiv},\ldots \eqcl{z_k}{\equiv}$
  is undefined.} \\
  In this case, we remove $x$ from its older
  equivalence class and make a new class that only contains the
  variable $x$.  We update $P$ to $P'$ so that the function $f$ maps
  the tuple $(\eqcl{z_1}{\equiv'}, \ldots, \eqcl{z_k}{\equiv'})$ (if
  each of them is a valid/non-empty equivalence class) to the class
  $\eqcl{x}{\equiv'}$.  The set $d'$ follows easily from the new
  $\equiv'$ and the older set $d$.  Thus,
\begin{itemize}
\item $\equiv' = \longproj{\equiv}{V\setminus\set{x}} \cup
  \set{(x, x)}$
\item $d' = \setpred{(\eqcl{x_1}{\equiv'},\eqcl{x_2}{\equiv'})}{x_1,
  x_2 \in V\setminus \set{x}, (\eqcl{x_1}{\equiv},\eqcl{x_2}{\equiv})
  \in d }$
\item $P'$ behaves similar to $P$ for every function different from $f$.
\begin{itemize}
\item For every $r$-ary function $h \neq f$,
\[
P'(h)(\eqcl{x_1}{\equiv'}, \ldots, \eqcl{x_r}{\equiv'}) = 
   \begin{cases}
   \eqcl{u}{\equiv'} & \mbox{if } x \not\in \set{u, x_1,\ldots x_k}
        \mbox{ and } \\
        & \eqcl{u}{\equiv} = P(h)(\eqcl{x_1}{\equiv},\ldots \eqcl{x_r}{\equiv})\\
   \undf & \mbox{otherwise}
   \end{cases}
\]
\item For the function $f$, we have the following.
\[
P'(f)(\eqcl{x_1}{\equiv'}, \ldots, \eqcl{x_k}{\equiv'}) =
   \begin{cases}
   \eqcl{x}{\equiv'} & \mbox{if } x_i = z_i\ \forall i \text{ and } x \not\in \set{x_1,\ldots x_k}\\
   \eqcl{u}{\equiv'} & \mbox{if } x \not\in \set{u, x_1,\ldots x_k}
        \mbox{ and } \\
        & \eqcl{u}{\equiv} = P(f)(\eqcl{x_1}{\equiv},\ldots \eqcl{x_
k}{\equiv})\\
   \undf & \mbox{otherwise}
   \end{cases}
\]
\end{itemize}            
\end{itemize}
\end{enumerate}

\item[$a = \dblqt{\passume(x = y)}$].\\
Here, we essentially merge the
  equivalence classes in which $x$ and $y$ belong and perform the ``local
  congruence closure'' (as in \lemref{window_update_equation}). In
  addition, $d'$ and $P'$ are also updated as in
  \lemref{window_update_equation}. 	
\begin{itemize}
\item $\equiv'$ is the smallest equivalence relation on $V$ such that
  (a) $\equiv \cup \set{(x,y)} \subseteq \equiv'$, and (b) for every
  $k$-ary function symbol $f$ and variables $x_1,x_1',x_2,x_2',\ldots
  x_k,x_k',z,z' \in V$ such that $\eqcl{z}{\equiv} = P(f)(\eqcl{x_1}{\equiv},
  \ldots \eqcl{x_k}{\equiv})$, $\eqcl{z'}{\equiv} =
  P(f)(\eqcl{x_1'}{\equiv},\ldots \eqcl{x_k'}{\equiv})$, and
  $(x_i,x_i') \in \equiv'$ for each $i$, we have $(z,z') \in
  \equiv'$.
\item $d' =
  \setpred{(\eqcl{x_1}{\equiv'},\eqcl{x_2}{\equiv'})}{(\eqcl{x_1}{\equiv},\eqcl{x_2}{\equiv})
    \in d }$
\item $P'$ is such that for every $r$-ary function $h$,
\[
P'(h)(\eqcl{x_1}{\equiv'}, \ldots \eqcl{x_r}{\equiv'}) = 
   \begin{cases}
   \eqcl{u}{\equiv'} & \mbox{if } \eqcl{u}{\equiv} = P(h)(\eqcl{x_1}{\equiv},\ldots \eqcl{x_r}{\equiv}) \\
   \undf & \mbox{otherwise}
   \end{cases} 
\]
\end{itemize}

\item[$a = \dblqt{\passume(x \neq y)}$].\\
In this case,
\begin{itemize}
\item $\equiv' = \equiv$
\item $d' = d \cup \set{(\eqcl{x}{\equiv'},\eqcl{y}{\equiv'}), (\eqcl{y}{\equiv'},\eqcl{x}{\equiv'})}$
\item $P' = P$
\end{itemize}
\end{description}

The formal description of automaton $\feas{\Aa}$ is complete. We begin
formally stating the invariant maintained by the automaton during its
computation.

\begin{lemma}
\lemlabel{inv_automaton} Let $\rho$ be a coherent partial
execution. Let $q_\rho$ be the state reached by $\feas{\Aa}$ after
reading $\rho$. The following properties are true.
\begin{enumerate}
\item If $\rho$ is infeasible then $q_\rho = \reject$.
\item If $\rho$ is feasible in some data model, then $q_\rho$ is of
  the form $(\equiv,d,P)$ such that $(\equiv,d,P)$ is consistent with
  $(\setpred{\comp(\rho,x)}{x \in V}, \alpha(\rho), \beta(\rho))$; see
  \defref{consistency} for the notion of consistency.
\end{enumerate}
\end{lemma}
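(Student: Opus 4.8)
The plan is to prove both parts by induction on the length of the partial execution $\rho$, maintaining the stronger hypothesis: either $q_\rho = \reject$ (and $\rho$ is infeasible) or $q_\rho = (\equiv,d,P)$ is consistent with $(\setpred{\comp(\rho,x)}{x\in V},\alpha(\rho),\beta(\rho))$ (and $\rho$ is feasible). Once this combined invariant is established, parts (1) and (2) of the lemma follow immediately by reading off the relevant disjunct. I would first handle the base case $\rho = \epsilon$: then $\comp(\epsilon,x) = \init{x}$, which are distinct constants (in the term model $\termmod{\emptyset}$ each constant is its own congruence class), $\alpha(\epsilon) = \beta(\epsilon) = \emptyset$, and the initial state $q_0 = (\equiv_0,d_0,P_0)$ with $\equiv_0$ the identity, $d_0 = \emptyset$, $P_0$ everywhere undefined is exactly consistent with $(\init{V},\emptyset,\emptyset)$ — conditions (a), (b), (c) of \defref{consistency} all hold trivially.

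For the inductive step, assume $\rho = \rho' \cdot a$ with $\rho'$ coherent and the invariant holding at $q_{\rho'}$. If $q_{\rho'} = \reject$, then since $\reject$ is absorbing $q_\rho = \reject$, and $\rho$ is infeasible because $\rho'$ is (infeasibility is preserved under extension — $\alpha,\beta$ only grow). Otherwise $q_{\rho'} = (\equiv,d,P)$ is consistent with $(W',\alpha(\rho'),\beta(\rho'))$ where $W' = \setpred{\comp(\rho',x)}{x\in V}$. Now I would proceed by a case analysis on the letter $a$, and in each case show that the transition function produces a state consistent with the updated triple, \emph{unless} the step witnesses infeasibility (i.e. some disequality now collapses), in which case the transition lands in $\reject$ via the irreflexivity check on $d'$. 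The two assignment cases ($x\passign y$ and $x\passign f(\vec z)$, with its sub-cases on whether $P(f)(\ldots)$ is defined) are essentially bookkeeping: the set of computed terms $W$ changes by overwriting the term at $x$, $\alpha$ and $\beta$ are unchanged, and one checks directly that the redefined $\equiv'$, $d'$, $P'$ still satisfy (a), (b), (c) relative to the new $W$ — here the key point is that dropping the old term at $x$ from $W$ cannot resurrect a disequality or introduce a new equivalence, and when $P(f)(\ldots)$ is undefined we use the \textbf{memoizing} property of coherence to argue the newly computed term is genuinely fresh modulo $\congcl{\alpha(\rho')}$, so making $x$ a singleton class is correct.

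**The main obstacle is the $a = \dblqt{\passume(x=y)}$ case**, and this is exactly where the earlier structural lemmas must be invoked. By coherence of $\rho$ and \lemref{coherent-superterm}, the window $W' = \setpred{\comp(\rho',v)}{v\in V}$ is closed under superterms with respect to $\Terms(\rho')$, $\alpha(\rho')$, and the pair $(\comp(\rho',x),\comp(\rho',y))$; this is the precise hypothesis needed to apply \lemref{window_update_equation}, which tells us that the ``local congruence closure'' $\sim_{s,s'}$ computed by the automaton's transition (on the finite window) correctly yields a state consistent with $(W',\alpha(\rho')\cup\{(s,s')\},\beta(\rho'))$ — note $W$ and $\beta$ are unchanged by an assume, and $\alpha(\rho) = \alpha(\rho')\cup\{(s,s')\}$ with $s = \comp(\rho',x), s' = \comp(\rho',y)$. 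Simultaneously, \lemref{window_check_eq} guarantees that the feasibility check ``did this new equation collapse some disequality in $\beta$?'' can be decided purely from the window data, namely by asking whether some $(\eqcl{t_1'}{\equiv_W},\eqcl{t_2'}{\equiv_W})\in D_W$ has $t_1'\sim_{s,s'} t_2'$ — which is exactly the condition ``$d'$ not irreflexive'' in the automaton, sending us to $\reject$ iff $\rho$ became infeasible. For $a = \dblqt{\passume(x\neq y)}$ the argument is shorter: $W$, $\alpha$ are unchanged, $\beta(\rho) = \beta(\rho')\cup\{(\comp(\rho',x),\comp(\rho',y))\}$, condition (b) is re-established by adding the pair $(\eqcl{x}{\equiv},\eqcl{y}{\equiv})$ to $d$, and \lemref{window_check_diseq} tells us this makes $\rho$ infeasible (equivalently forces $d'$ non-irreflexive, hence $\reject$) precisely when $x\equiv y$ already. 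Assembling these cases completes the induction; the only genuinely subtle points are correctly matching the automaton's variable-level operations to the term-level statements of the window lemmas, and being careful that $\Terms(\rho)$ — which the superterm-closure hypothesis refers to — includes dropped terms, so that coherence is being used with the right set.
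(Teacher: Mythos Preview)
Your proposal is correct and follows essentially the same approach as the paper: induction on the length of $\rho$, a case analysis on the final letter $a$, with the memoizing property handling the fresh-term case of $x\passign f(\vec z)$ and Lemmas~\ref{lem:coherent-superterm}, \ref{lem:window_update_equation}, \ref{lem:window_check_eq}, \ref{lem:window_check_diseq} handling the two $\passume$ cases. The combined invariant you maintain is exactly what the paper's inductive argument establishes.
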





Assuming that the signature $\Sigma$ is of constant size, 
the automaton $\feas{\Aa}$ has $O(2^{|V|^{O(1)}})$ states.

\subsection*{Verifying a Coherent Program}

We have so far described a finite memory, streaming algorithm that 
given a coherent execution can determine if it is feasible in any data
model by computing congruence closure. We can use that algorithm to verify coherent, uninterpreted
programs. 

\begin{theorem}
\thmlabel{verifying-coherent}
Given a coherent program $s$ with postcondition $\fals$, the problem
of verifying $s$ is $\pspc$-complete.
\end{theorem}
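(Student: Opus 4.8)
The plan is to establish membership in $\pspc$ and $\pspc$-hardness separately. For membership, I would combine the streaming feasibility automaton $\feas{\Aa}$ with an NFA recognizing the program's executions. Recall from the discussion after \defref{computation} (and the remark on translating regular expressions to NFAs) that for any program $s$ one can build, in polynomial time, an NFA $\Aa_s$ of size linear in $s$ accepting $\exec(s)$. Since the postcondition is without loss of generality $\fals$, verifying $s$ amounts to checking that \emph{no} execution of $s$ is feasible in any data model; by \lemref{inv_automaton}, an execution $\rho$ of a coherent program is feasible in some data model iff $\feas{\Aa}$ does not reach $\reject$ on $\rho$. Hence $s$ violates its postcondition iff $L(\Aa_s) \cap L(\feas{\Aa}) \neq \emptyset$, i.e., iff the product automaton $\Aa_s \times \feas{\Aa}$ has an accepting run. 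The obstacle to a naive algorithm is that $\feas{\Aa}$ has $O(2^{|V|^{O(1)}})$ states, so the product is exponential-sized and cannot be built explicitly. However, each state of $\feas{\Aa}$ is a triple $(\equiv, d, P)$ which can be stored in polynomial space, and the transition relation $\feas{\delta}$ is polynomial-time (hence polynomial-space) computable given a state and a letter — the ``local congruence closure'' of \lemref{window_update_equation} is a fixpoint over the bounded set $V$. Likewise $\Aa_s$'s states and transitions are polynomial. So I would guess-and-check a path in the product: nondeterministically walk from the initial state, at each step storing only the current product state (polynomial space) and guessing the next letter and next state, verifying the transition is legal, until an accepting product state is reached (final state of $\Aa_s$, non-$\reject$ state of $\feas{\Aa}$). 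This is an $\mathsf{NPSPACE}$ procedure for the \emph{complement} of verification (existence of a feasible execution); by Savitch's theorem $\mathsf{NPSPACE} = \pspc$, and $\pspc$ is closed under complement, giving $\pspc$ membership for verification itself.

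For $\pspc$-hardness, I would reduce from a known $\pspc$-complete problem — the natural choice is reachability/halting for a deterministic linear-bounded automaton (equivalently, non-emptiness of intersection of a bounded family of DFAs, or the acceptance problem for a poly-space Turing machine). The idea is to simulate a poly-space Turing machine $M$ on input $w$ by an uninterpreted program that uses a bounded number of variables to encode the tape contents and head position, using fresh uninterpreted constants (modeled as never-modified variables, as the paper allows) as the tape symbols and the equality tests $x = y$ / $x \neq y$ to read the current cell. Each tape cell $i$ is a program variable $c_i$; a step of $M$ reads the finite control state (encoded in additional variables) and the symbol in $c_{\mathit{head}}$ via equality tests against the symbol-constants, and writes the new symbol and moves the head by a bounded block of assignments, all inside a $\pwhile$ loop that runs until $M$ halts. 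The postcondition $\fals$ is reached — i.e., a feasible execution exists — iff $M$ accepts $w$. The crucial point requiring care is that this simulating program must be \emph{coherent}: since the program here performs no function applications at all (it only copies constants between variables and tests equalities), no term is ever recomputed in a nontrivial way and the memoizing condition is vacuous; and since there are no superterms of anything, early-assumes is vacuous too. So every execution is coherent, and the reduction lands inside the coherent fragment. (One must double-check that modeling constants as unmodified variables, per the paper's reductions in \secref{pgmsemantics}, does not break coherence — it does not, for the same reason.)

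The main obstacle I anticipate is the $\pspc$ upper bound bookkeeping: one must argue carefully that a single product state $(\equiv, d, P)$ together with a state of $\Aa_s$ fits in space polynomial in $|s| + |\Sigma| + |V|$ — the partial function $P$ has domain $(V/\!\equiv)^k$ for each $k$-ary symbol $f$, which is polynomial only because the signature is treated as fixed (or at least of bounded arity); the excerpt already flags ``assuming the signature $\Sigma$ is of constant size'' before \thmref{verifying-coherent}, so this assumption is in force. One must also confirm that the transition relation of $\feas{\Aa}$ can be \emph{decided} in polynomial space (not merely that states are small): computing $\sim_{s,s'}$ is a least-fixpoint computation over subsets of $V \times V$, clearly polynomial time, and the updates to $d'$ and $P'$ are direct. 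With those checks in place, the nondeterministic reachability search in the exponentially-large-but-implicitly-represented product graph runs in $\mathsf{NPSPACE} = \pspc$, and together with the hardness reduction this yields $\pspc$-completeness. I would also remark that the hardness argument should be stated so that it simultaneously establishes that verification is $\pspc$-hard \emph{even for coherent programs}, which is what makes \thmref{verifying-coherent} tight rather than merely an upper bound.
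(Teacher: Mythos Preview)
Your proposal is correct and follows essentially the same approach as the paper: the upper bound via on-the-fly search in the product of $\Aa_s$ and $\feas{\Aa}$ is exactly what the paper does, and your lower bound---though phrased as a direct simulation of a polynomial-space Turing machine rather than the paper's reduction from Boolean program verification---rests on the identical key observation that a program with no function applications (only variable copies and equality tests on constants) is trivially coherent. The paper's choice of Boolean programs as the source problem is slightly cleaner since their $\pspc$-hardness is standard and the translation is immediate, but your route is equally valid.
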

\begin{proof}[Proof Sketch]
Observe that a coherent program with postcondition $\fals$ is correct
if it has no feasible executions. If $\Aa_s$ is the NFA accepting
precisely the executions of $s$, the goal is to determine if $L(\Aa_s)
\cap L(\feas{\Aa}) \neq \emptyset$. This can be done taking the cross
product of the two automata and searching for an accepting
path. Notice that storing the states of $\feas{\Aa}$ uses space that is
polynomial in the number of variables, and the cross product automaton
can be constructed on the fly. This gives us a $\pspc$ upper bound for
the verification problem.

The lower bound is obtained through a reduction from Boolean program
verification. Recall that Boolean programs are imperative programs
with while loops and conditionals, where all program variables take on
Boolean values. The verification problem for such programs is to
determine if there is an execution of the program that reaches a
special $\codekey{halt}$ statement. This problem is known to be
$\pspc$-hard. The Boolean program verification problem can be reduced
to the verification problem for uninterpreted programs --- the
uninterpreted program corresponding to a Boolean program will have no
function symbols in its signature, have constants for true and false,
and have two program variables that are never modified which store
true and false respectively. Since in such a program the variables
never store terms other than constants, the executions are trivially
coherent.
\end{proof}


\subsection{Decidability of Coherence}
\seclabel{dec-coherence}

Manually checking if a program is coherent is difficult. However, in
this section we prove that, given a program $s$, checking if $s$ is
coherent can be done in $\pspc$. The crux of this result is an
observation that the collection of coherent executions form a regular
language. This is the first result we will prove, and we will use this
observation to give a decision procedure for checking if a program is
coherent. Let us recall that executions of programs over variables $V$
are words over the alphabet $\Pi$.

\begin{theorem}
\thmlabel{coherent-reg}
The language $\coh{L} = \setpred{\rho \in \Pi^*}{\rho \mbox{ is
    coherent}}$ is regular. More precisely, there is a DFA $\coh{\Aa}$
of size $O(2^{V^{O(1)}})$ such that $L(\coh{\Aa}) = \coh{L}$.
\end{theorem}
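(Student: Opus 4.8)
The goal is to build a DFA $\coh{\Aa}$ that reads an execution $\rho \in \Pi^*$ symbol-by-symbol and accepts iff $\rho$ is coherent. The key difficulty is that both the memoizing and early-assumes conditions quantify over $\Terms(\sigma)$ --- the set of \emph{all} terms ever computed along a prefix, including terms that have been overwritten (``dropped'') --- so a naive automaton would need to remember unboundedly many terms. The central observation that rescues us is that coherence is a \emph{local} property in the following sense: a coherence violation, if it happens, is detectable using only bounded information, essentially because the conditions only require us to compare a newly computed/equated term against the congruence-closed set of terms currently held in variables. So the automaton will track, for the current prefix $\sigma$: (i) the equivalence $\equiv$ among variables induced by $\congcl{\alpha(\sigma)}$ restricted to $\setpred{\comp(\sigma,v)}{v\in V}$; (ii) the partial interpretation $P$ of function symbols on these classes (as in the automaton $\feas{\Aa}$ of the previous section); and additionally (iii) a bit of extra bookkeeping recording, for each variable, whether the term it currently holds has been ``dropped and recomputed'' already, or whether a superterm of it has been dropped --- precisely the information needed to detect future violations.

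\textbf{Step 1: Reduce coherence to a per-step checkable condition.} First I would reformulate Definition~\ref{def:coherence} so that it is phrased as: $\rho$ is coherent iff for every prefix $\sigma' = \sigma \cdot a$ of $\rho$, a certain predicate $\mathrm{Good}(\sigma, a)$ holds. For $a = \dblqt{x\passign f(\vec z)}$, $\mathrm{Good}$ says: if the term $t = f(\comp(\sigma,z_1),\dots)$ is congruent (mod $\alpha(\sigma)$) to some already-computed term, then some current variable holds a term congruent to $t$. For $a = \dblqt{\passume(x=y)}$, $\mathrm{Good}$ says: every already-computed superterm of $\comp(\sigma,x)$ or $\comp(\sigma,y)$ (mod $\congcl{\alpha(\sigma)}$) is held by some current variable. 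The point is that each $\mathrm{Good}$ check depends only on (a) the finite quotient structure on current variables, and (b) a summary of which ``old'' terms exist that are (congruent to) superterms of current ones or congruent to recomputable terms.

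\textbf{Step 2: Identify the finite state and show it suffices.} The state of $\coh{\Aa}$ is a tuple $(\equiv, P, \mathsf{drop})$ (plus a sink state $\reject$ once a violation is seen), where $(\equiv, P)$ is as in $\feas{\Aa}$ --- maintaining, over feasible-or-not prefixes, the congruence structure on current variable-terms and the function interpretation --- and $\mathsf{drop}$ is a bounded annotation (e.g. for each congruence class of current variables, flags indicating whether that class's term has previously been dropped-and-recomputed, and whether any superterm of it has been dropped). The key lemma to prove is a \emph{simulation invariant}: after reading a coherent prefix $\sigma$, the component $(\equiv,P)$ correctly captures $\congcl{\alpha(\sigma)}$ restricted to current variable-terms and their function images, and $\mathsf{drop}$ correctly records the relevant ``history'' facts. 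I would prove this by induction on $|\sigma|$, using Lemma~\ref{lem:coherent-superterm} and Lemma~\ref{lem:window_update_equation} (local congruence closure is exact on superterm-closed windows) to justify that the $\passume(x=y)$ transition correctly updates $(\equiv,P)$, and a separate bookkeeping argument for $\mathsf{drop}$ under assignment steps (an assignment $x\passign\cdots$ ``drops'' whatever $x$ held; its superterms, tracked via $P$, get flagged). Crucially, the invariant only needs to be maintained along coherent prefixes, since the moment a violation occurs the automaton moves to $\reject$ and stays.

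\textbf{Step 3: Transitions and the violation check; size bound.} On each symbol, the automaton first runs the $\mathrm{Good}(\sigma,a)$ check of Step 1 against its current state --- rejecting (going to $\reject$) if it fails --- and otherwise updates $(\equiv,P,\mathsf{drop})$ as above. Accepting states are all states except $\reject$. Correctness in both directions then follows from the invariant: if $\rho$ is coherent, no $\mathrm{Good}$ check ever fails so $\coh{\Aa}$ never enters $\reject$ and accepts; if $\rho$ is not coherent, take the shortest violating prefix $\sigma'=\sigma\cdot a$ --- by the invariant the state after $\sigma$ faithfully describes the structure on which $\mathrm{Good}(\sigma,a)$ is evaluated, so the check fails and $\coh{\Aa}$ rejects. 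For the size bound: $\equiv$ is an equivalence on $V$, $P$ is a partial map on tuples of classes, and $\mathsf{drop}$ is a bounded annotation per class; assuming $\Sigma$ has constant size, the number of states is $2^{|V|^{O(1)}}$, giving the claimed $O(2^{V^{O(1)}})$ DFA. Determinism is immediate since every transition is a function of the current state and input symbol.

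\textbf{Main obstacle.} The hard part is Step 2 --- pinning down \emph{exactly} the right finite $\mathsf{drop}$ annotation and proving the simulation invariant through $\passume(x=y)$ steps. The subtlety is that merging two variable classes can make a previously-dropped term become (modulo the new congruence) a superterm of some current variable, or make a previously-dropped term recomputable; one must argue that the bounded history kept in $\mathsf{drop}$, together with $P$, is closed under these updates and still detects all future violations. This is precisely where the early-assumes half of coherence is used: it guarantees (via Lemma~\ref{lem:coherent-superterm}) that along a prefix that is coherent \emph{so far}, the window of current variable-terms is superterm-closed, so the local congruence closure of Lemma~\ref{lem:window_update_equation} is exact and no ``invisible'' superterm relationships can appear after a merge --- which is what makes the bounded $\mathsf{drop}$ sufficient.
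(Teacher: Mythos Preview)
Your overall strategy matches the paper's exactly: augment the state $(\equiv,P)$ of $\feas{\Aa}$ with a bounded history component, check the coherence conditions locally at each step using that state, and move to a sink upon the first violation. (The paper also notes that disequality assumes are irrelevant to coherence, so one may run $\feas{\Aa}$ with $d=\emptyset$ throughout; this is a minor simplification you omit.)

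The gap is in your $\mathsf{drop}$ annotation: per-class boolean flags are too coarse. Your first flag, ``dropped-and-recomputed'', can never be set in a non-reject state, since a drop followed by a recomputation is already a memoizing violation. Your second flag, ``some superterm of this class has been dropped'', does not record \emph{which} function and \emph{which} co-arguments were involved, so on reading $\dblqt{x\passign f(z_1,\ldots,z_k)}$ you cannot distinguish the case where $f$ on exactly these argument-classes was previously computed and dropped (a genuine memoizing violation) from the case where, say, $g(z_1)$ was computed and dropped but $f(\vec z)$ is brand new (no violation). A per-class flag would reject the latter, so your automaton would under-accept.

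The paper's fix is the natural refinement you are groping toward: the history component is a map $E$ that, for each $k$-ary symbol $f$, sends $(V/\equiv)^k$ to $\{\top,\bot\}$, recording whether $f$ has \emph{ever} been applied to a tuple of terms in those classes (whether or not the result is still held). The memoizing check on $\dblqt{x\passign f(\vec z)}$ is then: reject iff $E(f)([z_1],\ldots,[z_k])=\top$ but $P(f)([z_1],\ldots,[z_k])$ is undefined. The early-assumes check on $\dblqt{\passume(x=y)}$ walks the $P$-superterm chain from $[x]$ and $[y]$ and rejects if some reached class $[u]$ participates in a tuple with $E(f)=\top$ but $P(f)$ undefined. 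Updating $E$ across merges is just taking the join over the tuples that collapse together. Since $E$ has domain of size polynomial in $|V|$ (for fixed signature), the state space is still $O(2^{|V|^{O(1)}})$, and the rest of your argument (the simulation invariant via \lemref{coherent-superterm} and \lemref{window_update_equation}, correctness in both directions by considering the shortest violating prefix) goes through unchanged.
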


\begin{proof}
Observe that an execution $\rho$ is coherent if and only if the
execution $\proj{\rho}{\Pi\setminus\setpred{\passume (x \neq y)}{x,y
    \in V}}$ is coherent; here
$\proj{\rho}{\Pi\setminus\setpred{\passume (x \neq y)}{x,y \in V}}$ is
the execution obtained by dropping all the disequality assumes from
$\rho$. Hence, the automaton $\coh{\Aa}$ will ignore all the
disequality assumes and only process the other steps.

The automaton $\coh{\Aa}$ heavily uses the automaton $\feas{\Aa}$,
constructed in \secref{verifying-coherent-programs}. Intuitively,
given a word $\rho$, our algorithm inductively checks whether prefixes
of $\rho$ are coherent. Hence, when examining a prefix $\sigma \cdot
a$, we can assume that $\sigma$ is coherent and use the properties of
the state of $\feas{\Aa}$ obtained on $\sigma$.

The broad outline of how $\coh{\Aa}$ works is as follows.
\begin{itemize}
\item We keep track of the state of the automaton $\feas{\Aa}$. Recall
  that its state is of the form $(\equiv, d, P)$, where $\equiv$
  defines an equivalence relation over the variables $V$, $d$ is a set
  of disequalities, and $P$ is a partial interpretation of the
  function symbols in the signature, restricted to contents of the
  program variables. Now, since we will drop all the disequality
  assumes, $d$ in this context is always $\emptyset$ and so the state
  of $\feas{\Aa}$ will never be $\reject$.
 
\item In addition, the state of $\coh{\Aa}$ will have a function $E$
  that associates with each function symbol $f$ of arity $k$, a
  function from $(V/\equiv)^k$ to $\set{\tru,\fals}$. Intuitively, $E$
  tracks, for each function symbol $f$ and each tuple of variables,
  whether $f$ has ever been computed on that tuple at any point in the
  execution --- $E(f)(\vec{z}) = \tru$, if $f(\vec{z})$ has been
  computed, and is $\fals$ if it has not. Note that, if
  $P(f)(\vec{z})$ is defined (here $P$ is the component that is part
  of the state of $\feas{\Aa}$) then $E(f)(\vec{z})$ will definitely
  be $\tru$. The role of $E(f)$, however, is to remember in addition
  whether $f$ has been computed on terms but the image has been
  ``dropped'' by the program.
\end{itemize}

The update of the extra information $E(f)$ is not hard. Whenever
$\coh{\Aa}$ reads $\dblqt{x \passign f(z_1,\ldots z_k)}$, we set
$E(f)(\eqcl{z_1}{\equiv},\ldots \eqcl{z_k}{\equiv})$ to $\tru$. If
$E(f)(\eqcl{z_1}{\equiv},\ldots \eqcl{z_k}{\equiv}) = \tru$, then we
know it was computed, and hence check if
$P(f)(\eqcl{z_1}{\equiv},\ldots \eqcl{z_k}{\equiv})$ is defined. If it
is \emph{not}, then we reject the word as it is not memoizing and
therefore, not coherent.

An $\passume(x=y)$ statement is dealt with as follows. Let
  us assume that the state of $\coh{\Aa}$ is $q = (\equiv, \emptyset, P, E)$
  when it processes $\passume(x=y)$. First $\coh{\Aa}$ checks if the
  equality is early as follows. We will say that a variable $v$ is an
  immediate superterm of $u$ in $q$, if there is a function $f$ and
  arguments $\vec{z}$ such that $P(f)(\vec{z}) = \eqcl{v}{\equiv}$ and
  $\eqcl{u}{\equiv} \in \vec{z}$. More generally, $v$ is a superterm
  of $u$ in $q$ if either $\eqcl{v}{\equiv} = \eqcl{u}{\equiv}$, 
  or there is a variable $w$ such that $w$
  is an immediate superterm of $u$ in $q$ and $v$ (inductively) is a
  superterm of $w$ in $q$. To check that $\passume(x=y)$ is \emph{not early},
  we will first check if there is a superterm $u$ of either $x$ or $y$ in $q$
  and a tuple $\vec{z}$ such that $\eqcl{u}{\equiv} \in \vec{z}$
such that $P(f)(\vec{z})$ is undefined but
  $E(f)(\vec{z}) = \top$. If $\passume(x=y)$ is early then $\coh{\Aa}$
  will merge several equivalence classes of variables as it performs
  congruence closure locally.
%
%
Whenever two equivalence classes $C$ and $C'$ merge, 
we set $E(f)(\vec{z})$ to $\top$ and appropriately define
$P(f)(\vec{z})$ (similar to the transition in $\feas{\Aa}$)
if there were any equivalent variables that were to $\top$ by $E(f)$.
%

After reading each letter, if the word is not rejected, we are
guaranteed that the word is coherent, and hence the meaning of the
state of automaton $\feas{\Aa}$ is correct when reading the next
letter. Hence the above automaton precisely accepts the set of
coherent words. 
\end{proof}

Observe that since $\coh{\Aa}$ constructed in \thmref{coherent-reg} is
deterministic, it can be modified without blowup to accept only
non-coherent words as well.

Using $\coh{\Aa}$ we can get a $\pspc$ algorithm to check if a program
is coherent.

We can now compute, given a program $s$, the NFA for $\exec(s)$, and
check whether the intersection of $\exec(s)$ and the above automaton
constructed for accepting non-coherent words is empty. Hence

\begin{theorem}
\thmlabel{dec-coherence}
Given a program $s$, one can determine if $s$ is coherent in $\pspc$.
\end{theorem}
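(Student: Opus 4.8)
The plan is to reduce coherence checking to a language-emptiness question on a product of two finite automata. First I would invoke \thmref{coherent-reg}, which gives a DFA $\coh{\Aa}$ of size $O(2^{|V|^{O(1)}})$ recognizing exactly the coherent executions over $V$. As noted immediately after that theorem, since $\coh{\Aa}$ is deterministic we can complement it without any blowup to obtain a DFA $\overline{\coh{\Aa}}$ of the same size recognizing exactly the \emph{non-coherent} executions over $\Pi$. Second, I would build the NFA $\Aa_s$ for $\exec(s)$; the excerpt already observes that such an NFA of size linear in $s$ is computable in polynomial time via the standard regular-expression-to-NFA translation. The program $s$ is coherent precisely when every execution in $\exec(s)$ is coherent, i.e.\ when $L(\Aa_s) \cap L(\overline{\coh{\Aa}}) = \emptyset$.

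The remaining step is to argue that this emptiness check runs in $\pspc$. One cannot afford to materialize the product automaton, whose state count is $|\Aa_s| \cdot O(2^{|V|^{O(1)}})$, but a state of the product is a pair $(q_s, (\equiv, \emptyset, P, E))$ whose second component is describable in space polynomial in $|V|$ (an equivalence relation and two partial maps on $V/\!\equiv$, both of polynomial size when $\Sigma$ has constant size), and whose first component is an index into $\Aa_s$, also of polynomial size. The transition relation of the product is locally computable from $s$ and from the transition rules of $\coh{\Aa}$ spelled out in the proof of \thmref{coherent-reg}. Hence a nondeterministic search for a path from the initial product state to an accepting one (accepting means $\overline{\coh{\Aa}}$ is in a non-coherent-accepting state and $\Aa_s$ is in a final state) uses only polynomial space; by Savitch's theorem $\mathsf{NPSPACE} = \pspc$, so coherence checking is in $\pspc$. (If a concrete bound is desired, one can also note a matching $\pspc$-hardness via the same Boolean-program reduction used in \thmref{verifying-coherent}, since the programs produced there are trivially coherent, making ``is $s$ coherent?'' at least as hard as reachability in Boolean programs; but the theorem as stated only asks for the upper bound.)

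I expect the main obstacle to be bookkeeping rather than mathematics: verifying that the complement of $\coh{\Aa}$ combined on-the-fly with $\Aa_s$ genuinely needs only polynomial space, which hinges on the fact that each component of a $\coh{\Aa}$-state ($\equiv$, $P$, $E$) is polynomially sized when the signature is fixed, and that we never need the full exponential state set simultaneously in memory. A subtler point worth making explicit is why restricting attention to $L(\Aa_s)$ (complete \emph{and} partial executions are both relevant — note $\coh{L}$ as defined in \thmref{coherent-reg} already includes partial executions, and coherence of a program was defined via \emph{all} its executions, which for the NFA $\Aa_s$ we can take to mean all executions in $\exec(s)$, it being immaterial whether we also include prefixes since coherence is prefix-closed in the sense used in \defref{coherence}) suffices; this should be dispatched in one line by the prefix-closure remark. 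Everything else is a direct assembly of results already established in the excerpt.
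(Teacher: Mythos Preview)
Your proposal is correct and follows essentially the same route as the paper: complement $\coh{\Aa}$, intersect with $\Aa_s$, and check emptiness on the fly in polynomial space. Your additional detail (explicit polynomial-size state representation, Savitch) only fleshes out what the paper's proof leaves implicit.

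One caveat: your parenthetical about a matching $\pspc$ lower bound is mistaken. The Boolean-program reduction from \thmref{verifying-coherent} produces programs that are \emph{always} coherent, so on those instances the coherence question is trivially ``yes'' and yields no hardness for the decision problem ``is $s$ coherent?''. Since the theorem only claims the upper bound, this does not affect the proof itself, but you should drop that aside.
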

\begin{proof}
Let $\Aa_s$ be the NFA accepting the set of execution of program
$s$. Let $\overline{\coh{\Aa}}$ be the automaton accepting the
collection of all non-coherent executions. Notice that $s$ is coherent
iff $L(\Aa_s) \cap L(\overline{\coh{\Aa}}) = \emptyset$. The $\pspc$
algorithm will construct the product of $\Aa_s$ and
$\overline{\coh{\Aa}}$ on the fly, while it searches for an accepting
computation.
\end{proof}


\section{Undecidability of Verification of Uninterpreted Programs}
\seclabel{undecidability}

We show that verifying uninterpreted programs is 
undecidable by reducing the halting problem for 2-counter machines. 



A 2-counter machine is a finite-state machine 
(with $Q$ as the set of states) augmented with
two counters $C_1$ and $C_2$ that take values in $\nats$.
At every step, the machine moves to a new state and performs
one of the following operations on one of the counters:
check for zero, increment by $1$, or decrement by $1$.
We can reduce the halting problem for 2-counter machines 
to verification of uninterpreted programs to show the following result
(detailed proof in Appendix~\ref{app:undec-ver-proofs}).

\begin{theorem}
\thmlabel{undec_main}
The verification problem for uninterpreted programs is undecidable.
\end{theorem}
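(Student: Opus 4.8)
The plan is to reduce the halting problem for two-counter (Minsky) machines to the verification problem for uninterpreted programs, encoding a run of the machine as an execution of a program whose postcondition is $\fals$, so that the program is \emph{incorrect} (has a feasible execution reaching the end) if and only if the machine halts. Since the verification problem reduces to deciding whether some feasible execution exists, this suffices for undecidability.

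First I would fix how to represent counter values. A natural term representation over a signature with a single unary function symbol $\cd{s}$ and no relations: the value $n \in \nats$ is represented by the term $\cd{s}^n(\init{v})$ stored in a variable $v$, with a separate "zero" variable holding $\init{v}$ itself (or, after the normalization in the excerpt, a dedicated never-modified variable playing the role of the constant). The machine's finite control is tracked by the program counter of the simulating program (a big nested $\pif$/$\pwhile$ structure, one per machine state). The key operations are then: \textbf{increment} $C_i$ by $c_i \passign \cd{s}(c_i)$; \textbf{decrement} needs an auxiliary variable — to go from $\cd{s}^n(\init{c_i})$ to $\cd{s}^{n-1}(\init{c_i})$ we must recover the immediate subterm, which cannot be done directly with an uninterpreted $\cd{s}$; \textbf{zero-test} is $c_i = z$ where $z$ holds $\init{c_i}$.

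The main obstacle is implementing decrement (and, relatedly, making sure the simulation is faithful: that $\passume$ guesses are actually forced to be correct). The standard trick is to carry, alongside $c_i$, a "predecessor" variable $p_i$ maintained so that $c_i = \cd{s}(p_i)$ throughout when $c_i > 0$; on increment we set $p_i \passign c_i$ then $c_i \passign \cd{s}(c_i)$; on decrement we set $c_i \passign p_i$ and then must restore $p_i$ to the new predecessor — which again requires a subterm step. To break this regress one keeps a bounded chain of predecessors, or equivalently realizes that the program only ever needs the predecessor \emph{of the current top}, so it suffices to, on each increment, "push" and on each decrement "pop", which is exactly a stack — but a program has only finitely many variables. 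The resolution I would use is the classic one: represent the single number $2^{C_1} 3^{C_2}$, or more simply observe that faithful two-counter simulation with uninterpreted $\cd{s}$ can be done by letting the \emph{nondeterministic} $\passume$ statements guess the predecessor term and then using equality assumes to pin it down — but an assume $\passume(w = p_i)$ only constrains the \emph{model}, not the term stored, so a wrong guess simply yields an infeasible execution, and a correct guess yields a feasible one; hence over all executions and all models, the program has a feasible complete execution iff the machine halts. Concretely: for decrement, introduce fresh variable $w$, execute $w \passign \cd{s}(c_i)$ is wrong — instead we never need to "compute" the predecessor as a term; we only need, for the zero test and for continued simulation, to know the value. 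So the cleaner encoding stores $C_i$ "backwards": keep $c_i$ holding $\init{c_i}$ always and instead apply $\cd{s}$ to a token that decreases — i.e., represent remaining "capacity". I would work out whichever of these variants makes the bookkeeping cleanest; the point is that one unary function, equality, disequality, nondeterministic branching and $\passume$ suffice to simulate a two-counter machine step-faithfully.

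Having set up the simulation, the argument concludes as follows. Build program $P_{\Mm}$ from the two-counter machine $\Mm$ so that complete executions of $P_{\Mm}$ that are feasible in some data model correspond exactly to halting runs of $\Mm$ (one uses the initial term model $\termmod{\alpha(\rho)}$ via Corollary~\ref{cor:feasibility} to see that a "correct-guess" execution is indeed feasible — in the free term model distinct terms $\cd{s}^m(\init{c_i})$ and $\cd{s}^n(\init{c_i})$ with $m\neq n$ are genuinely unequal, so the zero-tests and the disequality assumes used to enforce correct guesses behave as intended). Put the postcondition $\fals$ at $\halt$: then $P_{\Mm}$ is correct iff it has no feasible execution iff $\Mm$ does not halt. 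Since halting of two-counter machines is undecidable, so is verification of uninterpreted programs. The step I expect to spend the most care on is the faithfulness lemma — proving that every feasible complete execution really does encode a legal halting run (no "cheating" via spurious equalities among $\cd{s}$-towers), which is where the rigidity of the free term model and the disequality assumes do the work.
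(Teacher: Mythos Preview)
Your overall plan---reduce from halting of two-counter machines, represent counter value $n$ as the $n$-fold application of a unary symbol to a fixed zero term, take postcondition $\fals$---matches the paper, and you correctly isolate decrement as the crux. But none of the mechanisms you sketch for decrement actually work, and the sentence ``a wrong guess simply yields an infeasible execution'' is exactly where the argument breaks. Suppose you implement decrement of $c_i$ (currently holding the term $s^n(\init{z})$) by looping $w$ up from $\init{z}$ and exiting on $\passume(s(w)=c_i)$. The execution that exits too early, with $w$ holding $s^m(\init{z})$ for some $m<n-1$, carries the equality assume $s^{m+1}(\init{z})=s^{n}(\init{z})$. This is \emph{not} infeasible: the term model collapsed by that single equation satisfies it, and no disequality you have written contradicts it. That execution then assigns the wrong term $s^m(\init{z})$ to $c_i$ and can pass a subsequent zero-test $\passume(c_i=z)$ and reach $\halt$ even when the machine does not. (Concretely: take the machine that does $\inc;\inc;\dec;\chkzero$ on $C_1$ and on a successful zero-test goes to $\halt$. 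The real run has $C_1=1$ at the test and gets stuck; the under-iterated program execution has $c_1$ holding $\init{z}$ and is feasible.) The free-model rigidity you invoke gives only the forward direction; for the converse you need every feasible complete execution to encode a legal run, and that fails under your encoding. Your other variants---predecessor chains, $2^{C_1}3^{C_2}$, ``backwards capacity''---either reproduce the same problem or simply relocate the need for a subterm step.

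The paper's fix is a \emph{second} unary symbol $g$ playing the role of decrement. Before every increment or decrement of $y_i$ the program computes $g(f(y_i))$ and $f(g(y_i))$ and assumes both equal to $y_i$, forcing $f$ and $g$ to be mutual inverses at every counter value the simulation ever visits. Decrement is then just $y_i\passign g(y_i)$---no reconstruction, no guessing---and faithfulness is immediate in both directions: in any model satisfying the assumes, $f$ and $g$ are inverse along the orbit of $\init{z}$, so the simulated counters track the real ones exactly (the witnessing model for the forward direction is the integers with successor and predecessor). This two-function trick is the missing idea in your proposal.
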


The reduction in the above proof proceeds by encoding  
configurations using primarily three variables, a variable $x_\curr$ 
modeling the current state, and two variables $y_1$ and $y_2$ 
modeling the two counters, along with other variables $\set{x_q}_{q\in Q}$ modeling constants
for the set of states and the constant $0 \in \nats$ 
and few other auxiliary variables.

The key idea is to model a counter value $i$ using the term $f^i(0)$, and to ensure the data model has two functions $f$ and $g$, modeling increment and decrement functions respectively, that are inverses of each other on terms representing counter values.
We refer the reader to the details of the proof, but note here that this reduction in fact creates programs that are not memoizing nor have early assumes, and in fact cannot be made coherent with any bounded number of ghost variables.

In the following undecidability results (\thmref{undec_memo} and \thmref{undec_ea}) 
that argue that both our restrictions are required for decidability.
The proof ideas of these theorems are provided in Appendix~\ref{app:undec-ver-proofs}.

\begin{theorem}
\thmlabel{undec_memo}
The verification problem for uninterpreted programs all of whose executions are memoizing is undecidable.
\end{theorem}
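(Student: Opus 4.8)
The plan is to reduce the halting problem for 2-counter machines again, but this time to produce a program all of whose executions are \emph{memoizing}, sacrificing only the early-assumes condition. The key difficulty in the general reduction (\thmref{undec_main}) is that counters are encoded by terms $f^i(0)$, and two issues arise: recomputing such terms after they are dropped (a memoizing violation) and making equality assumptions between deeply nested terms (an early-assumes violation). To make every execution memoizing, I would keep the entire ``stack'' of terms $0, f(0), \ldots, f^i(0)$ representing the current counter value \emph{live at all times}, so that a term is never recomputed after being dropped --- the only terms ever computed are the prefixes of the current counter tower, and a longer tower is built by applying $f$ once to the current top.

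The obstacle is that a 2-counter machine's counters are unbounded, so we cannot keep the whole tower in finitely many program variables simultaneously. The trick I would use is to \emph{not store counters in variables at all, but walk them}: maintain a single ``cursor'' variable together with an assume-equality chain that forces the cursor to range over $0, f(0), f^2(0), \dots$ Concretely, to implement ``increment'': introduce a fresh variable, assign it $f(\mathit{cursor})$, and carry on; to implement ``decrement'' or ``zero-test'', we need to go \emph{down} the tower, which is the genuinely hard part, because $f$ is not invertible syntactically. I would handle this by, at each step, nondeterministically guessing the predecessor term, assigning $p := $ (some variable), then $\mathit{assume}(f(p) = \mathit{cursor})$, and finally $\mathit{cursor} := p$. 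Since the data model is the \emph{initial term model} of the equality assumptions (Corollary~\ref{cor:feasibility}), the only way $f(p) = \mathit{cursor}$ can hold is if $p$ really is the correct predecessor term (or the guess was infeasible and that branch dies). The point is that in this construction, terms are only ever \emph{constructed upward by one application of $f$}, so every recomputed term is still sitting in a live variable (the chain of assumes pins it down) --- hence memoizing --- but the downward-walking assumes equate a variable holding $f(p)$ with the cursor, and the cursor may have arbitrarily deep superterms from earlier phases of the computation that are no longer in variables, which is precisely an early-assumes violation. So this reduction produces memoizing but not early-assumes programs.

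The steps, in order: (1) fix the 2-counter machine $M$ and describe the program skeleton --- a single outer $\pwhile$ loop whose body nondeterministically (via $\pif-\pthen-\pelse$ on the finite control, encoded in Boolean-valued program variables as in the $\pspc$-hardness argument of \thmref{verifying-coherent}) picks a transition of $M$ and simulates its effect on the two counter-cursors; (2) implement increment by a single $\mathit{cursor} := f(\mathit{cursor})$-style step using an auxiliary variable; (3) implement decrement/predecessor by the guess-and-assume gadget described above, arguing via the initial-term-model semantics that a branch survives iff the guessed predecessor is correct, so that feasible executions correspond exactly to runs of $M$; (4) implement the zero-test by checking $\mathit{cursor} = z$ where $z$ is the variable permanently holding the constant for $0$; (5) insert an $\mathit{assume}(\mathit{true} \neq \mathit{false})$-style disequality at the start and put the post-condition $\fals$ reachable exactly when $M$ halts, so that the program violates its specification iff $M$ halts; (6) verify the memoizing property holds for \emph{every} execution by checking that the only function applications ever performed are $f$ (resp. $g$ for the second counter) applied to the current cursor or to a just-guessed predecessor, and in each case the resulting term, if previously computed, is still held by a live variable because the assume-chain keeps the tower pinned. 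Step (6) is the one requiring care: I must make sure that \emph{dropping} a term (overwriting a variable) never happens for a term that will later be recomputed, which forces the program to thread the cursor carefully and never free the variables holding $0$ and the immediate predecessor; this bookkeeping, rather than any deep idea, is the main obstacle, and it is essentially the same combinatorial argument as in the appendix proof of \thmref{undec_main} but restricted to keep the memoizing invariant.
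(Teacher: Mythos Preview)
Your decrement gadget is broken. You propose to go down the tower by ``assigning $p :=$ (some variable), then $\passume(f(p)=\mathit{cursor})$''. But $p$ can only be copied from one of the finitely many current program variables, so the term it holds is one of those finitely many current terms. If no variable currently holds $f^{i-1}(0)$---and with a bounded variable budget you cannot guarantee this for unbounded $i$, which is precisely the difficulty you acknowledge two sentences earlier---then \emph{every} choice of $p$ renders the assume infeasible in the initial term model, and that branch simply dies. Feasible executions of your program therefore fail to simulate any run of the machine that decrements a counter whose predecessor has already been overwritten, and the reduction collapses. The phrase ``the assume-chain keeps the tower pinned'' conflates two different things: equality assumes refine the congruence on terms, but the memoizing condition speaks only of which terms are currently held \emph{in program variables}, and an assume never puts a term back into a variable.

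The paper's construction is quite different and avoids ever needing to ``find'' a predecessor. It keeps two disjoint term worlds rooted at constants $x$ and $y$. The real counters live over $x$ and are manipulated using both an increment function $f$ and a separate decrement function $g$, but with \emph{no} equality assumes issued on $x$-terms during the simulation. Because the congruence on $x$-terms stays trivial and each counter step appends a function symbol, every computed term is syntactically fresh, so memoizing holds vacuously in that world. Meanwhile an always-advancing fake counter over $y$ lays down the assumes $g(f(\cdot))=\cdot$ (together with auxiliary identities used to encode zero-tests); since that counter only moves forward, it never recomputes a term either. Only after the halting state is reached does the program issue a single $\passume(x=y)$, collapsing the two worlds and retroactively forcing the free $x$-simulation to be a faithful run of the machine; this late collapse is the one (and only) non-early assume. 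As an aside, the appendix appears to have the pointers to \thmref{undec_memo} and \thmref{undec_ea} interchanged: the two-world construction with the final collapse is the one that yields memoizing executions and hence proves the present theorem.
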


\begin{theorem}
\thmlabel{undec_ea}
The verification problem for uninterpreted programs all of whose executions have early-assumes is undecidable.
\end{theorem}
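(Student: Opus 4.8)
\textbf{Proof proposal for \thmref{undec_ea}.}

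The plan is to reduce the halting problem for 2-counter machines to the verification problem, just as in the proof of \thmref{undec_main}, but to re-engineer the encoding so that every execution of the resulting program satisfies the early-assumes condition. Recall that early assumes fails when a program executes $\dblqt{\passume(x=y)}$ after some superterm of the term in $x$ or $y$ has been computed and then dropped. The key observation is that early assumes is automatically satisfied if the program never performs equality assumes at all, or, more usefully, only performs equality assumes on variables whose terms have no computed superterms yet. So the first step is to redesign the reduction so that all zero-tests and state-comparisons are done \emph{without} ever executing $\dblqt{\passume(x=y)}$ on a term that already has a superterm in $\Terms(\sigma)$.

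The natural way to do this, following the remark in \secref{pgmsemantics}, is to encode equality checks via auxiliary Boolean-valued computations rather than via $\passume(x=y)$: instead of testing $x = y$ directly, one computes $b_{eq} \passign h(x,y)$ for a fresh function symbol $h$ and a fresh constant $\top$, and arranges (by suitable disequality assumes and the congruence structure) that $b_{eq} = \top$ exactly captures the intended equality. But one must be careful: $\passume(b_{eq} = \top)$ is itself an equality assume, so it must also be early. However here the trick is that $b_{eq}$ and $\top$ are "fresh" variables each step --- if we use a \emph{new} result variable for each comparison (or reset it) and never apply functions to it, then $b_{eq}$ and $\top$ have no superterms in the computation, so this particular assume is always early. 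Concretely, I would: (1) keep the $f^i(0)$ encoding of counter values and the increment/decrement functions $f, g$ as inverses; (2) implement "$\chkzero$" not by comparing $y_j$ with the variable holding $0$ via $\passume(y_j = x_0)$ --- which would be problematic since $0$ typically has many superterms computed --- but rather by maintaining a separate "is-this-counter-zero" bit that is updated locally on each increment/decrement; and (3) implement state transitions using only $\passume$ statements on freshly-written flag variables with no superterms. The disequality assumes (which are irrelevant to early-assumes, since early-assumes concerns only equality assumes) can freely enforce that $\top \neq$ (other constants) and that the $f^i(0)$ are genuinely distinct for the counter values reached.

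The main obstacle, and the step I expect to require the most care, is \emph{simulating faithfulness of the $f/g$ inverse relationship and of zero-testing using only early equality assumes}. In the undecidability proof of \thmref{undec_main}, one presumably writes something like $\passume(g(f(y)) = y)$ or $\passume(y = 0)$, and these are exactly the non-early assumes. Replacing them requires a gadget that, whenever the machine decrements $C_j$, enforces the inverse law \emph{before} any superterm of the relevant term is dropped --- i.e., one must structure the program so that the assume equating $g(f(t))$ with $t$ is executed at a point where neither $g(f(t))$ nor $t$ has a stored superterm, then immediately consume the result. This is delicate because the counter variable $y_j$ is repeatedly overwritten and its old values are dropped, so the assume must be issued "just in time." I would handle this by interleaving: each increment step $y_j \passign f(y_j)$ is immediately followed by the assume $\passume(g(y_j) = y_j^{\mathit{old}})$ where $y_j^{\mathit{old}}$ is still live in a temporary --- at that instant $g(y_j)$ and $y_j^{\mathit{old}}$ have no superterms, so the assume is early --- and then the temporary is released. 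One then verifies, as in the original proof, that the program reaches a designated state iff the 2-counter machine halts, so verification against the postcondition $\fals$ (i.e., unreachability of that state) is undecidable; and separately verifies, by inspecting the fixed, finite control structure of the constructed program, that \emph{every} execution is early-assuming. The correctness of the reduction (halting iff feasible execution exists) carries over essentially verbatim from \thmref{undec_main}; the genuinely new content is the structural argument that the rewritten program has only early-assuming executions, which I would prove by a straightforward case analysis over the (finitely many) syntactic positions of $\passume(\cdot = \cdot)$ statements in the program, showing each occurs before any superterm of its operands is computed.
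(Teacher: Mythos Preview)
Your proposal has a genuine gap in the step you yourself flagged as delicate. You claim that after an increment $y_j \passign f(y_j)$, the assume $\passume(g(y_j) = y_j^{\mathit{old}})$ is early because ``at that instant $g(y_j)$ and $y_j^{\mathit{old}}$ have no superterms.'' This is false whenever the counter has previously taken a larger value. Concretely, suppose the counter followed the value sequence $0,1,2,3,2,1$ and is now being incremented back to $2$. At that moment $y_j^{\mathit{old}}$ holds a term congruent (via the earlier assumes) to $f(0)$, but $\Terms(\sigma)$ already contains $f^3(0)$, which is a superterm of $f(0)$ modulo the congruence, and no current variable stores anything congruent to $f^3(0)$. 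So the assume is \emph{not} early. The ``just in time'' idea cannot work on the real counters precisely because counters go up and down: any equality assume on a counter term after the counter has descended from a previous peak will see dropped superterms.

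The paper's construction avoids this by never doing equality assumes on the real counters at all. It introduces a monotone ``fake'' counter $c^*$ that only increments, always staying at or ahead of the maximum value any real counter has reached; all the $g(f(\cdot)) = \cdot$ assumes (and auxiliary assumes used to encode zero/non-zero via fresh functions $r_1,r_2,s_1,s_2$) are performed on $c^*$, which by monotonicity is always at the frontier of computed terms and hence has no dropped superterms. Zero-tests on the real counters are then reduced to \emph{disequality} assumes (which are irrelevant to the early-assumes condition). The missing idea in your proposal is exactly this decoupling: do the structural equality assumes on a monotone witness rather than on the oscillating counters themselves. Your ``is-zero bit'' idea also does not close the gap, since you cannot maintain such a bit through a decrement without already knowing whether you have reached zero, which is the very test you are trying to implement.
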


The proofs for the theorems above also give reductions from the 
halting problem for 2-counter machines;
one ensures that executions satisfy the memoizing property
(while sacrificing the early-assumes criterion) 
and the other that executions satisfy early-assumes (but not memoizing).

Note that the above results do not, of course, mean that these two conditions 
themselves cannot be weakened while preserving decidability. 
They just argue that neither condition can be simply dropped.


%


\section{\texorpdfstring{$k$}{k}-Coherent Uninterpreted Programs}
\seclabel{kcoherence}

In this section, we will generalize the decidability results of
\secref{coherent-ver} to apply to a larger class of programs. 
Programs may sometimes not be coherent because they either
violate the memoizing property, i.e., they have executions 
where a term currently not stored in any of the program variables, is recomputed,
or, they violate the early assumes criterion, i.e., have executions
where an assume  is seen after some superterms have been dropped entirely.
However, some of these programs could be \emph{made} to coherent, if they
were given access to additional, auxiliary variables that store the
terms that need to be recomputed in the future or are needed until a 
relevant $\passume$ statement is seen in the future. For example, we
observed that program $P_2$ in \figref{example1} is not coherent
(\exref{coherence}), but program $P_3$, which is identical to $P_2$
except for the use of auxiliary variable $\cd{g}$ is coherent. These
auxiliary variables, which we call \emph{ghost variables}, can only be
written to. They are never read from, and so do not affect the
control flow in the program. We show that the verification
problem for \emph{$k$-coherent programs} --- programs that can be made
coherent by adding $k$ ghost variables --- is decidable. In addition,
we show that determining if a program is $k$-coherent is also
decidable.

Let us fix the set of program variables to be $V = \set{v_1,v_2,\ldots
  v_r}$ and the signature to be $\Sigma = (\Cc,\Ff)$ such that
$\init{V} \subseteq \Cc$ is the set of initial values of the program variables. 
Recall that executions of such programs are over the alphabet $\Pi =
\setpred{ \dblqt{x \passign y}, \dblqt{x \passign f(\vec{z})},
  \dblqt{\passume (x=y)}, \dblqt{\passume (x\neq y)}}{x, y, \vec{z}
  \textit{~are~in~} V}$. 

Let us define executions that use program variables $V$ and
\emph{ghost variables} $G = \set{g_1,g_2,\ldots g_l}$. These will be
words over the alphabet $\Pi(G) = \Pi \cup \setpred{\dblqt{g \passign
    x}}{g \in G \mbox{ and } x \in V}$. Notice, that the only
additional step allowed in such executions is one where a ghost
variable is assigned the value stored in a program variable. Before
presenting the semantics of such executions, we will introduce some
notation. For an execution $\rho \in \Pi(G)^*$, $\proj{\rho}{\Pi}$
denotes its \emph{projection} onto alphabet $\Pi$, i.e., it is the
sequence obtained by dropping all ghost variable assignment steps.

To define the semantics of such executions, we once again need to
associate terms with variables at each point in the execution. The
(partial) function $\comp : \Pi(G)^* \times (V \cup G) \to \Terms$
will be defined in the same manner as in \defref{computation}. The main
difference will be that $\comp$ now is a partial function, since we
will assume that ghost variables are undefined before their first
assignment; we do not have constants in our signature $\Sigma$
corresponding to initial values of ghost variables. For variables $x
\in V$, we define $\comp(\rho,x) = \comp(\proj{\rho}{\Pi},x)$, where
the function $\comp$ on the right hand side is given in
\defref{computation}. For ghost variables, $\comp$ is defined
inductively as follows.

\[
\begin{array}{rcll}
\comp(\epsilon,g) \!\! &=& \!\! \undf & \mbox{for each } g \in G\\
\comp(\rho\cdot \dblqt{g \passign x}, g) \!\! &=& \!\! \comp(\rho,x)\\
\comp(\rho\cdot \dblqt{g \passign x}, g') \!\! &=& \!\! \comp(\rho,g')
  & \mbox{for } g' \neq g\\
\comp(\rho\cdot a, g) \!\! &=& \!\! \comp(\rho,g) & \mbox{for any } g \in G, 
    \mbox{ and } a \in \Pi
\end{array}
\]

The set of equality and disequality assumes for executions with ghost
variables can be defined in the same way as it was defined for regular
executions. More precisely, for any execution $\rho \in \Pi(G)^*$, the
set of equality assumes is $\alpha(\rho) = \alpha(\proj{\rho}{\Pi})$
and the set of disequality assumes is $\beta(\rho) =
\beta(\proj{\rho}{\Pi})$.

An execution $\rho$ with ghost variables is \emph{coherent} if it
satisfies the memoization and early assumes conditions given in
\defref{coherence}, except that we also allow variables in $G$ to hold superterms modulo congruence. 
We are now ready to define the notion of
\emph{$k$-coherence} for executions and programs.
\begin{definition}
\deflabel{kcoherence} An execution $\sigma \in \Pi^*$ over variables
$V$ is said to be \emph{$k$-coherent} if there is an execution $\rho
\in \Pi(G)$ over $V$ and $k$ ghost variables $G = \set{g_1,\ldots
  g_k}$ such that
\begin{enumerate*}[label=(\alph*)]
\item $\rho$ is coherent, and
\item $\proj{\rho}{\Pi} = \sigma$.
\end{enumerate*}

A program $s$ over variables $V$ is said to be $k$-coherent if every
execution $\sigma \in \exec(s)$ is $k$-coherent.
\end{definition}

Like coherent executions, the collection of $k$-coherent executions
are regular.

\begin{proposition}
\proplabel{kcoherent-reg} 
The collection of $k$-coherent executions over program variables $V$
is regular.
\end{proposition}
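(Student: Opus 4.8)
The plan is to build a nondeterministic finite automaton $\coh{\Aa}_k$ over the alphabet $\Pi$ that accepts exactly the $k$-coherent executions over $V$. The automaton will guess, on the fly, the sequence of ghost assignments $\dblqt{g\passign x}$ (with $g\in G$, $|G|=k$, $x\in V$) that witnesses $k$-coherence, and simultaneously verify that the interleaved word over $\Pi(G)$ is coherent. Concretely, I would reuse the machinery of \thmref{coherent-reg}: that theorem gives a DFA $\coh{\Aa}$ recognizing coherent executions over a fixed variable set, obtained by tracking the state $(\equiv,\emptyset,P,E)$ of the streaming congruence-closure automaton augmented with the ``has been computed'' predicate $E$. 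Instantiate that construction with the variable set $V\cup G$ to obtain an automaton $\coh{\Aa}^{V\cup G}$ recognizing coherent executions over $\Pi(G)$. Then define $\coh{\Aa}_k$ to be the automaton that, on reading a letter $a\in\Pi$, nondeterministically reads a finite block $w\in(\setpred{\dblqt{g\passign x}}{g\in G, x\in V})^*$ of ghost assignments followed by $a$, feeding $w\cdot a$ into (a copy of) $\coh{\Aa}^{V\cup G}$; it accepts iff the simulated automaton is in an accepting state.

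The key steps, in order: (i) bound the length of the ghost-assignment blocks $w$ that need to be guessed, so that the nondeterministic ``read a block'' move is over a finite alphabet of blocks — here note that since $|G|=k$ is fixed and each ghost variable need only be reassigned at most once between two consecutive $\Pi$-letters to be useful, one can assume $|w|\le k$, giving a finite set of candidate blocks; (ii) verify that $\proj{\rho}{\Pi}=\sigma$ holds by construction, since the blocks $w$ contain only ghost assignments which are projected away; (iii) verify coherence is preserved correctly — this is immediate because $\coh{\Aa}^{V\cup G}$ by \thmref{coherent-reg} accepts exactly the coherent words over $\Pi(G)$, and the definition of coherence for ghost executions (memoizing and early-assumes, allowing ghost variables to hold superterms) is exactly what that automaton checks when run over $V\cup G$; (iv) conclude that $L(\coh{\Aa}_k)$ is precisely the set of $\sigma\in\Pi^*$ for which some coherent $\rho\in\Pi(G)^*$ with $\proj{\rho}{\Pi}=\sigma$ exists, which is \defref{kcoherence}. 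Since NFAs are closed under this kind of $\epsilon$-guessing and the block alphabet is finite, $\coh{\Aa}_k$ is a genuine finite automaton, so the language is regular.

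The main obstacle is step (i): justifying that bounded-length ghost-assignment blocks suffice, i.e., that one never needs to perform more than a bounded number of ghost writes between two consecutive program steps, and more subtly that one never needs the ghosts to collectively ``remember'' more than $k$ terms simultaneously — but this last point is built into the definition ($G$ has exactly $k$ variables), so the only real content is that issuing at most one write per ghost between consecutive $\Pi$-letters loses no generality. This follows because a second write to the same ghost before any intervening program step simply overwrites the first, so the first write was useless; hence blocks of length $>k$ can be shortened, and the guessing alphabet is the (finite) set of blocks of length $\le k$. Everything else is bookkeeping on top of the already-established \thmref{coherent-reg}.
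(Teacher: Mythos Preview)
Your proposal is correct and follows essentially the same route as the paper: instantiate the coherence automaton $\coh{\Aa}$ of \thmref{coherent-reg} over the enlarged variable set $V\cup G$, and then project the resulting language onto the alphabet $\Pi$. Note that your ``main obstacle'' (step~(i)) is not actually an obstacle---regular languages are closed under arbitrary letter-to-$\epsilon$ projections (via $\epsilon$-transitions and $\epsilon$-closure), so the bound on ghost-assignment block length, while correct, is superfluous for establishing regularity; the paper simply invokes closure under projection and is done.
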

\begin{proof}
\thmref{coherent-reg} establishes the regularity of the collection of
all coherent executions. 
The automaton $\coh{\Aa}$ constructed in its
proof in \secref{dec-coherence} essentially also recognizes the
collection of all coherent executions over the set $V \cup G$.
Now, observe that the collection of $k$-coherent
executions over the set $V$ is $\proj{L(\coh{\Aa})}{\Pi}$, and therefore the
proposition follows from the fact that regular languages are closed
under projections.
\end{proof}

Given a program, one can decide if it is $k$-coherent.
\begin{theorem}
\thmlabel{dec-kcoherence}
Given an uninterpreted program $s$ over variables $V$, one can
determine if $s$ is $k$-coherent in space that is linear in $|s|$ and
exponential in $|V|$.
\end{theorem}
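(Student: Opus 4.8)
The plan is to reduce the $k$-coherence checking problem to a reachability question on a product automaton, mirroring the structure of the proof of \thmref{dec-coherence} but with an added projection. First I would recall that by \propref{kcoherent-reg} the set of $k$-coherent executions over $V$ is $\proj{L(\coh{\Aa})}{\Pi}$, where $\coh{\Aa}$ is the coherence-recognizing automaton of \thmref{coherent-reg} instantiated over the extended variable set $V \cup G$ with $|G| = k$. The key observation is that a program $s$ is $k$-coherent precisely when every $\sigma \in \exec(s)$ lies in this projected language, i.e., when $L(\Aa_s) \subseteq \proj{L(\coh{\Aa})}{\Pi}$, equivalently when $L(\Aa_s) \cap \overline{\proj{L(\coh{\Aa})}{\Pi}} = \emptyset$.

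The main obstacle is that complementation does not commute with projection: $\overline{\proj{L(\coh{\Aa})}{\Pi}}$ is not simply the projection of the complement. Determinizing the projection would incur an exponential blowup on top of the already exponential-in-$|V \cup G|$ size of $\coh{\Aa}$, which would ruin the stated space bound. To get around this, I would instead check non-$k$-coherence by a guess-and-verify (nondeterministic) search: a program $s$ is \emph{not} $k$-coherent iff there exists $\sigma \in \exec(s)$ such that for \emph{every} interleaving $\rho \in \Pi(G)^*$ with $\proj{\rho}{\Pi} = \sigma$, $\rho$ is not coherent. The universal quantifier over ghost-variable interleavings is the difficulty; however, since $\coh{\Aa}$ is deterministic, I can track the full set of states reachable in $\coh{\Aa}$ over all ghost-augmented extensions of the prefix of $\sigma$ read so far — this is a subset-construction style determinization of only the ghost-assignment nondeterminism, and it can be carried out on the fly. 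The word $\sigma$ witnesses non-$k$-coherence exactly when this reachable-state-set, after reading all of $\sigma$, contains no accepting state of $\coh{\Aa}$.

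Concretely, the algorithm maintains (i) the current NFA state of $\Aa_s$ (polynomial in $|s|$), and (ii) a subset of the states of $\coh{\Aa}$ over $V \cup G$, representing all states reachable by reading some ghost-interleaved version of the $\Pi$-prefix of $\exec(s)$ consumed so far. On reading a program letter $a \in \Pi$ from $\Aa_s$, the subset is updated by: first taking the $\epsilon$-closure under all $\dblqt{g \passign x}$ transitions (finitely many, since only finitely many such letters exist, but closure must be iterated since successive ghost writes compose), then applying the $a$-transition of $\coh{\Aa}$ pointwise, then re-closing under ghost writes. The program is non-$k$-coherent iff some run of $\Aa_s$ reaches an accepting state of $\Aa_s$ while the accompanying $\coh{\Aa}$-subset contains only the $\reject$-equivalent (non-coherent sink) states, i.e., no coherent state survives. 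Storing the $\Aa_s$-state costs $O(|s|)$ space; storing a subset of states of $\coh{\Aa}$ costs space exponential in $|V \cup G| = |V| + k$, which for fixed $k$ is exponential in $|V|$; the whole search is a nondeterministic reachability check, so by Savitch/NPSPACE $=$ PSPACE it runs in space linear in $|s|$ and exponential in $|V|$, as claimed.

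I would then conclude by noting the dual/complement bookkeeping: since what we want is to \emph{accept} when $s$ \emph{is} $k$-coherent, we run the above nondeterministic search for a witness of non-$k$-coherence and answer ``$k$-coherent'' iff no such witness exists; the space bound is unaffected since PSPACE is closed under complement. The one technical point warranting care in the full proof is the ghost-write $\epsilon$-closure: it must be verified that iterating ghost-assignment transitions in $\coh{\Aa}$ terminates (the state space is finite, so the reachable subset stabilizes) and that interleaving ghost writes only between consecutive $\Pi$-letters, rather than allowing arbitrary reordering, captures exactly \defref{kcoherence} — which holds because $\proj{\rho}{\Pi} = \sigma$ forces the $\Pi$-letters of $\rho$ to appear in the order dictated by $\sigma$, with ghost writes freely inserted in the gaps.
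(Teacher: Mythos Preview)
Your proposal is correct and follows essentially the same route as the paper: reduce to checking $L(\Aa_s) \subseteq \proj{L(\coh{\Aa})}{\Pi}$, then test containment by complementing the (nondeterministic) projected automaton via the subset construction carried out on the fly. The paper's proof is terser---it simply notes that $\Aa_{kcc}$ is an NFA with $O(2^{|V|^{O(1)}})$ states and says the bound follows---but the implicit argument is precisely the one you spell out.

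One clarification: your worry that ``determinizing the projection would incur an exponential blowup \ldots\ which would ruin the stated space bound'' is unfounded, and indeed the subset-tracking procedure you then describe \emph{is} on-the-fly determinization. The point is that the \emph{number of states} of the determinized automaton is doubly exponential in $|V|$, but the \emph{space to store one such state} (a subset of the $2^{|V|^{O(1)}}$ states of $\coh{\Aa}$) is only singly exponential in $|V|$; since the reachability search only ever needs to hold a bounded number of product states at a time, the space bound survives. So there is no need to frame the subset construction as a workaround---it is exactly the standard complementation step, and the paper relies on it implicitly.
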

\begin{proof}
Let $\Aa_s$ be the NFA that accepts the executions of program $s$, and
$\coh{\Aa}$ be the automaton recognizing the set of all coherent
executions over $V$ and $G$. Let $\Aa_{kcc}$ be the automaton
recognizing $\proj{L(\coh{\Aa})}{\Pi}$, which is the collection of all
$k$-coherent executions. Observe that $s$ is $k$-coherent if $L(\Aa_s)
\subseteq L(\Aa_{kcc})$. $\Aa_{kcc}$ is a nondeterministic automaton
with $O(2^{|V|^{O(1)}})$ states, and the proposition therefore
follows.
\end{proof}

Finally, the verification problem for uninterpreted $k$-coherent
programs is $\pspc$-complete.
\begin{theorem}
\thmlabel{verifying-kcoherent}
Given a $k$-coherent program $s$ with postcondition $\fals$, the
problem of verifying $s$ is $\pspc$-complete.
\end{theorem}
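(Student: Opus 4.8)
The plan is to reduce this to the machinery already developed for coherent programs, via the regularity results for $k$-coherent executions. First I would recall that, by \propref{kcoherent-reg}, the set $L_{kcc}$ of $k$-coherent executions over $V$ is regular, recognized by an automaton $\Aa_{kcc}$ with $O(2^{|V|^{O(1)}})$ states, obtained as the $\Pi$-projection of the automaton $\coh{\Aa}$ that recognizes coherent executions over $V \cup G$. The key observation is that for a $k$-coherent program $s$ (with postcondition $\fals$), every execution $\sigma \in \exec(s)$ has, by \defref{kcoherence}, a coherent witness $\rho \in \Pi(G)^*$ with $\proj{\rho}{\Pi} = \sigma$; moreover $\rho$ and $\sigma$ have the same equality and disequality assumes, since $\alpha(\rho) = \alpha(\proj{\rho}{\Pi})$ and $\beta(\rho) = \beta(\proj{\rho}{\Pi})$, so $\sigma$ is feasible in a data model if and only if $\rho$ is. Hence verifying $s$ against $\fals$ amounts to checking that no coherent witness $\rho$ of any execution of $s$ is feasible.

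For the upper bound, I would run three automata synchronously on candidate words $\rho \in \Pi(G)^*$: (i) the NFA $\Aa_s$ for $\exec(s)$, fed the projection $\proj{\rho}{\Pi}$ (i.e., $\Aa_s$ simply ignores the ghost-assignment letters $\dblqt{g \passign x}$); (ii) the automaton $\coh{\Aa}$ over $V \cup G$ checking that $\rho$ is coherent; and (iii) the streaming congruence-closure automaton $\feas{\Aa}$ of \secref{verifying-coherent-programs}, which by \lemref{inv_automaton} is sound and complete for feasibility precisely on coherent executions — and $\coh{\Aa}$ guarantees we only trust $\feas{\Aa}$ on such words. One checks emptiness of $L(\Aa_s) \cap \proj{L(\coh{\Aa} \times \feas{\Aa})}{\Pi}$, equivalently searches for an accepting path in the product that ends in a non-$\reject$ state of $\feas{\Aa}$. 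Each component uses space polynomial in $|V|+k$ (hence polynomial in $|V|$ for fixed $k$) and linear in $|s|$; the product is built on the fly, and by Savitch / $\pspc = \mathsf{NPSPACE}$ the nondeterministic guessing of $\rho$ symbol by symbol gives a $\pspc$ decision procedure. If such a $\rho$ exists the program violates $\fals$; otherwise it is correct.

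For the lower bound I would reuse verbatim the reduction from the proof of \thmref{verifying-coherent}: Boolean program verification reduces to uninterpreted program verification where the resulting program has no function symbols, only constants for $\tru$ and $\fals$ stored in two never-modified variables. Such a program's executions never compute any non-constant term, so they are trivially coherent, hence trivially $0$-coherent, hence $k$-coherent for every $k$. Thus the same instances witness $\pspc$-hardness here, and combined with the upper bound the problem is $\pspc$-complete.

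The main obstacle, and the point requiring care rather than routine bookkeeping, is arguing that feeding $\feas{\Aa}$ only the coherent witnesses (as certified by $\coh{\Aa}$ over $V \cup G$) correctly decides feasibility of the original executions: one must check that $\feas{\Aa}$'s invariant (\lemref{inv_automaton}), and the superterm-closedness lemmas \lemref{coherent-superterm}–\lemref{window_check_diseq} it relies on, go through unchanged when the variable set is $V \cup G$ and ghost variables are allowed to hold superterms modulo congruence — exactly the relaxation built into the notion of coherence for executions with ghost variables. Since the coherence definition for ghost executions was chosen precisely so that these lemmas still hold, this is a matter of confirming the proofs are insensitive to which variables are "ghost," and that $\Aa_s$ ignoring ghost letters is faithful because ghost assignments do not affect $\comp(\rho,x)$ for $x \in V$ nor $\alpha,\beta$.
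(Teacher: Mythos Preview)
Your proposal is correct and follows essentially the same approach as the paper: both intersect $\coh{\Aa}$ (over $V\cup G$) with $\feas{\Aa}$ (over $V\cup G$), project to $\Pi$, and check emptiness against $\Aa_s$, with the lower bound inherited from \thmref{verifying-coherent} via the observation that coherent programs are trivially $k$-coherent. Your extra care in justifying that $\feas{\Aa}$'s correctness lemmas apply unchanged over $V\cup G$ and that ghost assignments leave $\alpha,\beta$ invariant is sound but not treated as an obstacle in the paper, which simply instantiates the automata over the enlarged variable set without further comment.
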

\begin{proof}
Since every coherent program is also $k$-coherent (for any $k$), the
lower bound follows from \thmref{verifying-coherent}. Let $\Aa_s$ be
the automaton accepting executions of $s$, $\coh{\Aa}$ the automaton
accepting coherent executions over $V$ and $G$, and $\feas{\Aa}$ the
automaton checking feasibility of executions. Observe that the
automaton $\Aa$ recognizing $\proj{(L(\coh{\Aa}) \cap
  L(\feas{\Aa}))}{\Pi}$ accepts the collection of $k$-coherent
executions that are feasible. The verification problem requires one to
determine that $L(\Aa) \cap L(\Aa_s) = \emptyset$. Using an argument
similar to the proof of \thmref{verifying-coherent}, this can be
accomplished in $\pspc$ because $\Aa$ has exponentially many states.
\end{proof}


\section{Verification of  Coherent and \texorpdfstring{$k$}{k}-Coherent Recursive Programs}
\seclabel{vpa}

In this section, we extend the decidability results to recursive
programs. In particular, we define the notions of coherence and
$k$-coherence for recursive programs, and show that the following
problems are decidable: (a) verifying recursive coherent programs; (b)
determining if a recursive program is coherent; (c) verifying
recursive $k$-coherent programs; and (d) determining if a program is
$k$-coherent.

This section will extend the automata-theoretic constructions to 
\emph{visibly pushdown automata}~\cite{Alur2004vpa}, 
and use the fact that they are closed under intersection; 
we assume the reader is familiar with these automata as well as with
\emph{visibly context-free languages}~\cite{Alur2009vpa}.

\subsection{Recursive Programs}

Let us fix a finite set of variables $V = \set{v_1, \ldots, v_r}$ 
and a finite set of method names/identifiers $M$.
Let $m_0 \in M$ be a designated ``main'' method.
Let us also fix a permutation  $\angular{v_1, \ldots, v_r}$
of variables and denote it by $\fixperm{V}$.

For technical simplicity and without loss of generality, we will
assume that for each method $m \in M$, the set of local variables is
exactly the set $V$; methods can, however, ignore certain variables if
they use fewer local variables.  We will also assume that each method
always gets called with \emph{all} the $r$ variables, thereby initializing all
local variables.  This also does not lead to any loss in 
generality---if a constant $c$ is used\footnote{Recall
  that, we model constants as initial values of certain variables.} 
  in some function call,
the caller can pass this constant to the called function, which
will use the passed parameter to initialize its local 
copy of the variable reserved for $c$.  
Finally, we will assume that when a method $m$ is
invoked, the order of the parameters is fixed to be $\fixperm{V}$.
This again does not lead to a loss of generality --- the caller can
rearrange the variables to the right order (by swapping) and then
reassign them after $m$ returns.  These conventions simplify the
exposition considerably.  We will, however, allow functions to return
multiple values back, and allow the caller to assign these to local
variables on return.

For every method $m$, let us fix a tuple $\out{m}$ 
of \emph{output variables} over $V$.
We require the output variables in $\out{m}$ to be distinct
(in order to avoid implicit aliasing that can be 
caused when  variables are repeated).






The syntax of recursive programs now has method definitions,
where the body of the methods can also include recursive calls,
besides the usual assignment, sequencing, conditionals and loops.
\begin{align*}
\pgm ::=& 
\,\, m \poutputs \out{m} \, \stmt \, 
\mid \, \pgm \, \pgm \\
\stmt ::=& 
\,\,  \pskip \, 
\mid \, x \passign y \, 
\mid \, x \passign f(\vec{z}) \, 
\mid \, \passume \, (\cond) \,
\mid \, \stmt \, ;\, \stmt \\
&
\mid \, \pif \, (\cond) \, \pthen \, \stmt \, \pelse \, \stmt \,
\mid \, \pwhile \, (\cond) \, \stmt 
\mid \, \vec{w} \passign m(\fixperm{V}) \\
\cond ::=& 
\, x = y \,
\mid \, x \not = y \,
\end{align*}

Here, $m$ is a method in $M$, and the variables
$x,y,\vec{z}, \vec{w}$ belong to $V$.  The length of the vector $\vec{w}$ must
of course match the length of the vector $\out{m}$ of output
parameters of the called method $m$.  A program consists of a
definition for each method $m \in M$, and we assume each method is
defined exactly once.


\begin{example}



\begin{figure}[t]
\begin{minipage}{0.5\textwidth}
$m_0 \poutputs \angular{\cd{b}}\, \code{\{}$ \\
\rule[1mm]{0.4cm}{0pt}
\passume \code{(T} $\neq$ \code{F);} \\
\rule[1mm]{0.4cm}{0pt}
\code{d} $\passign$ \code{key(x);} \\
\rule[1mm]{0.4cm}{0pt}
\pif \code{(d = k)} \pthen\, \code{\{}  \\ 
\rule[1mm]{0.8cm}{0pt}
\code{b $\passign$ T;} \\
\rule[1mm]{0.4cm}{0pt}
\code{\} } \\
\rule[1mm]{0.8cm}{0pt}
\pelse\, \code{\{ } \\
\rule[1mm]{0.9cm}{0pt}
\code{y $\passign$ x;}\\
\rule[1mm]{0.9cm}{0pt}
\code{x $\passign$ left(x);}\\
\rule[1mm]{0.9cm}{0pt}
\code{b $\passign$ $m_0(\cd{x}, \cd{k}, \cd{b}, \cd{d}, \cd{y}, \cd{T}, \cd{F})$;}\\
\rule[1mm]{0.9cm}{0pt}
\pif \code{(b = F)} \pthen \, \code{\{} \\
\rule[1mm]{1.2cm}{0pt}
\code{x $\passign$ right(x);}\\
\rule[1mm]{1.2cm}{0pt}
\code{b $\passign$ $m_0(\cd{x}, \cd{k}, \cd{b}, \cd{d}, \cd{y}, \cd{T}, \cd{F})$;}\\
\rule[1mm]{0.9cm}{0pt}
\code{\}}\\
\rule[1mm]{0.4cm}{0pt}
\code{\}} \\
\code{\}}
\end{minipage}
\caption{Example of uninterpreted recursive program.}
\figlabel{recexample}
\end{figure}

The example in~\figref{recexample} illustrates a recursive program with a single
method $m_0$.  This program checks whether any node reachable from
$\cd{x}$ using $\cd{left}$ and $\cd{right}$ pointers (which defines a
directed acyclic graph) contains a node with key $\cd{k}$.  
The method returns a single value---value of the variable $\cd{b}$
upon return.
The variables $\cd{x}$ and $\cd{k}$ are the true parameters, 
but we additionally augment the other variables
$\cd{b,d,y,T,F}$ for simplifying notations and have the method
rewrite those variables as described before.
Here $\cd{T}$ and $\cd{F}$ are variables storing the constants $\cd{true}$ 
and $\cd{false}$, respectively. 
Notice that, it is hard to find an iterative program with a 
bounded number of variables and without recursive functions 
that achieves the same functionality.
\end{example}

The semantics of recursive programs given
by the grammar $\pgm$ is a standard call-by-value semantics. 
We next define the formal semantics using terms over a data model.
\subsection{Executions}

Executions of recursive programs over the finite 
set of variables $V$ and the finite set of methods $M$ 
are sequences over the alphabet
$\Pi_M = \setpred{ 
\dblqt{x \passign y}, 
\dblqt{x \passign f(\vec{z})},  
\dblqt{\passume (x=y)}, 
\dblqt{\passume (x\neq y)}, 
\dblqt{\pcall \; m},  
\dblqt{\vec{w} \passign \preturn} 
}
{x, y, \vec{z}, \vec{w} \text{~are~in~} V, m \in M}$.

We will, in fact, treat the above alphabet as 
partitioned into three kinds: a \emph{call-alphabet}, 
a \emph{return-alphabet}, and an \emph{internal-alphabet}. 
The letters of the form $\dblqt{\pcall \; m}$ 
belong to the call-alphabet, 
the letters of the form
$\dblqt{\vec{z} \passign \preturn}$ belong to
the return alphabet, and the remaining letters belong to the internal alphabet.

The collection of all executions, denoted $\exec$, is given by the
following context-free grammar with start variable $E$.
\begin{align*}
E \goesto & \dblqt{x \passign y} 
\mid \dblqt{x \passign f(\vec{z})} 
\mid \dblqt{\passume (x=y)}
\mid \dblqt{\passume (x\neq y)} \\
 & \mid \dblqt{\pcall\;m} \,\cdot\, E \,\cdot\, \dblqt{\vec{w} \passign \preturn} 
\mid E \cdot E
\end{align*}

In the above rule, $m$ ranges over $M$. 
Furthermore, with respect to the call-return-internal alphabet defined above, 
the above defines a \emph{visibly pushdown language}.

\begin{definition}[Complete and Partial Executions of a recursive program]
\emph{Complete executions} of recursive programs 
that manipulate a set of variables $V$ are sequences over $\Pi_M$ 
and are defined as follows. 
Let $P$ be a recursive program. 
For each method $m \in M$, we denote by $s(m)$
the body (written over the grammar $\stmt$) in the definition of $m$.

Consider a grammar where we have nonterminals of the form $S_s$,
for various statements $s \in \stmt$,
where the rules of the grammar are as follows.

\begin{align*}
\begin{array}{rcl}
S_{\epsilon} &\goesto& \epsilon\\
S_{\pskip\,;\,s} & \goesto & S_s\\
S_{ x \passign y\,;\,s} &\goesto & \dblqt{x \passign y} \cdot S_s\\
S_{ x \passign f(\vec{z})\,;\,s} &\goesto& \dblqt{x \passign f(\vec{z})} \cdot  S_s \\
S_{ \passume (c)\,;\,s} &\goesto & \dblqt{\passume(c)}\cdot S_s\\
S_{\pif~(c)~\pthen~s_1~\pelse~s_2\,; ~~s} &\goesto& \dblqt{\passume(c)} \cdot  S_{s_1\,;\,s} \quad | \quad \dblqt{\passume(\neg c)} \cdot  S_{s_2\,;\,s} \\
S_{\pwhile~(c)\{s_1\}\,;~s} &\goesto& \dblqt{\passume(c)} \cdot  S_{s_1\,; ~\pwhile~(c)\{s_1\}\,;~s} \quad | \quad \dblqt{\passume( \neg c)} \cdot  S_{s}\\
S_{\vec{w} \passign m(\fixperm{V})\,;~s} &\goesto& \dblqt{\pcall~ m} \cdot S_{s(m)} \cdot \dblqt{\vec{z} \passign \preturn}
\cdot S_s
\end{array}
\end{align*}



\noindent
The set of executions of a program $P \in \pgm$, 
$\exec(P)$ are those accepted by the above grammar with start symbol  
$S_{s(m_0)\,;\,\epsilon}$, where
 $s(m_0)$ is the body of the ``main'' method $m_0$.
The set of \emph{partial executions}, denoted by $\pexec(P)$, is the
set of prefixes of complete executions in $\exec(P)$.
\end{definition}

In the above definition, the grammar for the language $\exec(P)$ 
is taken to be the one that can be defined by
using the minimal set of nonterminals for the definitions
$S_{s(m)\,;\,\epsilon}$, where $m \in M$.  
It is easy to see that this is a finite set
of nonterminals, and hence the above grammar is a context-free grammar. 
In fact, all productions rules except the one involving method calls
(i.e., production rules for non-terminals of the form
$S_{\vec{w} \passign m(\fixperm{V})\,;~s}$) are
right-regular grammar productions.
Further, the production rules for method calls
have a call-letter and return-letter guarding the first nonterminal.
Therefore, it is easy to see that the above defines a visibly pushdown language~\cite{Alur2009vpa}. 
A visibly pushdown automaton (VPA) that is at most quadratic
in the size of the program accepts this language as follows. This VPA
will have states of the form $S^m_s$ and mimic the right-regular
grammar productions using internal transitions generating the
associated terminal, and the rule for method calls by pushing the
nonterminal to execute after return onto the stack, and recovering it
in its state after the pop when simulating the return from the method
call.  This construction is fairly standard and simple, and we omit formal
definitions.


\subsection{Semantics of Recursive Programs and The Verification Problem}

\subsection*{Terms Computed by an Execution}

Let us now define the term computed for any (local) variable at any
point in the computation. We say a subword $\sigma$ of an execution is
\emph{matched} if $\sigma$ has an equal number of call-letters and
return-letters.



We now define the terms that correspond to local variables in scope
after a partial execution $\rho$.

\begin{definition}
We define $\comp : \exec \times V \to \Terms$ inductively as follows 
\begin{flalign*}
\begin{array}{rcll}
 \comp(\epsilon, x) \!\! &=&  \!\! \init{x} & 
 \\
 \comp(\rho \cdot \dblqt{x \passign y} , x) \!\! &=& \!\! \comp(\rho, y) &  
 \\
 \comp(\rho \cdot \dblqt{x \passign y} , x') \!\! &=& \!\! \comp(\rho, x') &x' \neq x
 \\
 \comp(\rho \cdot \dblqt{x \passign f(\vec{z})}, x) \!\! &=& \!\! f(\comp(\rho, z_1), \ldots, \comp(\rho, z_r))
 & 
 \vec{z} = (z_1, \ldots, z_r)
 \\
 \comp(\rho \cdot \dblqt{x \passign f(\vec{z})}, x') \!\! &=& \!\! \comp(\rho, x') & x' \neq x \\
 \comp(\rho \cdot \dblqt{\passume(R(\vec{z}))}, x) \!\! &=& \!\! \comp(\rho, x) & \\
 \comp(\rho \cdot \dblqt{\pcall \,m}, x) \!\! &=& \!\! \comp(\rho, x) &  \\
 \begin{aligned}\comp(\rho \cdot \dblqt{\pcall \,m} \cdot \rho' \quad\quad \quad \quad \quad\\\cdot \dblqt{\angular{w_1, \ldots w_r} \passign \preturn}, w_i)\end{aligned}\!\!& = &\!\!\!\begin{aligned}\comp(\rho \cdot \dblqt{\pcall \, m} \cdot \rho', \out{m}[i])\end{aligned} & 
 \rho' \text{ is matched} \\
\begin{aligned}\comp(\rho \cdot \dblqt{\pcall \,m} \cdot \rho' \quad\quad \quad \quad \quad\\\cdot \dblqt{\angular{w_1, \ldots w_r} \passign \preturn}, x)\end{aligned}\!\!& = &\!\!\begin{aligned}\comp(\rho, x)\end{aligned} 
&
\begin{aligned}
 x \not\in \set{w_1,\ldots w_r},\\
\rho' \text{ is matched}
\end{aligned}
\end{array}
\end{flalign*}

\medskip

The set of \emph{terms computed} by an execution $\rho$ is 
$\Terms(\rho) = \bigcup\limits_{\substack{\rho' \text{ is a prefix of } \rho,\\ v \in V}} \comp(\rho',v)$.

\end{definition}

\subsection*{Semantics of Recursive Programs}
Semantics of programs are, again, with respect to a data-model.  We
define the maps $\alpha$ and $\beta$ that collect the assumptions of
equality and disequality on terms, as we did for programs without
function calls; we skip the formal definition as it is the natural
extension to executions of recursive programs, ignoring the call and
return letters. An execution is feasible on a data model if these
assumptions on terms are satisfied in the model.

\subsection*{Verification Problem} 
As before, without loss of generality we
can assume that the post condition is $\fals$ (false). The
\emph{verification problem} is then: 
given a program $P$, determine if there is some
(complete) execution $\rho$ and data model $\Mm$ such that $\rho$ is
feasible on $\Mm$.

\subsection*{Coherence}
The notion of coherence is similar to the one for
non-recursive programs and their executions. In fact, it is precisely the definition of coherence for regular programs 
(\defref{coherence}), except for the fact that it uses the new definitions of $\comp$,
and $\alpha$ for recursive programs. 
We skip repeating the definition. 
Note that, in this case, the \emph{memoizing} condition and the \emph{early assumes} condition
are based on the set $\Terms(\sigma)$ (where $\sigma$ is a partial execution), 
which also includes all terms computed before, including
those by other methods. A recursive program is said to be
\emph{coherent} if all its executions are coherent.


We can now state our main theorems for recursive programs.

\begin{theorem}
The verification problem for coherent recursive programs is decidable, and is {\sc Exptime}-complete.\qed
\end{theorem}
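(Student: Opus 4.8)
The plan is to replicate the non-recursive development of \secref{coherent-ver}, with finite automata replaced by visibly pushdown automata over the call/return/internal partition of $\Pi_M$. As before we fix the postcondition to $\fals$, so that verifying $P$ amounts to deciding whether $\exec(P)$ contains some complete execution that is feasible in some data model (coherence of that execution being automatic, since $P$ is coherent). We have already observed that $\exec(P)$ is a visibly pushdown language accepted by a VPA $\Aa_P$ of size quadratic in $|P|$. The remaining work is to construct a VPA $\feas{\mathcal{B}}$ over $\Pi_M$ accepting exactly the coherent executions that are feasible, to build a VPA for $L(\Aa_P) \cap L(\feas{\mathcal{B}})$ (VPLs with a common alphabet partition are closed under intersection), and to test it for emptiness, which is decidable in time polynomial in the VPA.

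The core is the construction of $\feas{\mathcal{B}}$, a streaming congruence closure procedure that crosses procedure boundaries. On internal letters (those in $\Pi$) it behaves exactly like $\feas{\Aa}$: its control state is a triple $(\equiv,d,P)$ as in \secref{verifying-coherent-programs}, or $\reject$. On a call letter $\dblqt{\pcall\; m}$ it pushes the current triple onto the stack and keeps that triple as its new control state; this is sound because, by our calling convention, the callee is entered with all of $V$ bound to the caller's current variable contents, so the callee's initial window of terms is literally the caller's current window (with the $\init{v_i}$ now naming those terms), and hence $\equiv$, $d$, and $P$ are exactly the right entry data for the callee. On a return letter $\dblqt{\vec{w} \passign \preturn}$ from $m$ with output tuple $\out{m} = (o_1,\ldots,o_r)$, the automaton pops the saved caller triple $(\equiv_c,d_c,P_c)$ and recomputes the caller's post-return window: variables outside $\vec{w}$ revert to their pre-call classes; each $w_i$ is placed in the class recorded for $o_i$ by the callee's current triple; and, crucially, every equality between two of the caller's pre-call variable values that was forced by the callee's (arbitrarily deep) sub-execution, together with the induced local congruence closure and the induced updates to $d_c$ and $P_c$, is merged in. That this information is recoverable from the bounded window data is precisely what coherence buys: one proves analogues of \lemref{coherent-superterm}, \lemref{window_update_equation}, \lemref{window_check_eq} and \lemref{window_check_diseq} for executions that span several frames, using that $\Terms(\sigma)$ is subterm closed and, by the early-assumes property, that the current frame's window is superterm closed modulo congruence with respect to any assume encountered. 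The construction is completed by an invariant lemma in the style of \lemref{inv_automaton}: after any coherent partial execution $\rho$, the control state of $\feas{\mathcal{B}}$ is $\reject$ iff $\rho$ is infeasible, and is otherwise consistent with $(\setpred{\comp(\rho,x)}{x\in V}, \alpha(\rho), \beta(\rho))$; correctness follows by induction on $\rho$, invoking these lemmas at each internal, call, and return step.

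For the complexity, with $\Sigma$ of constant size $\feas{\mathcal{B}}$ has $O(2^{|V|^{O(1)}})$ control states and the same number of stack symbols, and $\Aa_P$ has size polynomial in $|P|$; hence the product VPA has size $O(2^{|V|^{O(1)}}) \cdot \mathrm{poly}(|P|)$, and emptiness testing runs in time polynomial in that, i.e., in $\exptime$. For $\exptime$-hardness we reduce from the reachability problem for recursive Boolean programs, which is $\exptime$-complete when the number of Boolean variables is part of the input: exactly as in the proof of \thmref{verifying-coherent}, a Boolean program is encoded by an uninterpreted recursive program with no function symbols and two never-modified variables holding constants for true and false, so that every variable always stores a constant and every execution is trivially coherent, and reachability of the $\codekey{halt}$ statement becomes existence of a feasible complete execution.

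I expect the return transition to be the main obstacle: one must prove that the single triple pushed at a call, together with the callee's final triple, records enough to identify exactly which of the caller's pre-call variable values have been equated (and which earlier disequality assumptions have thereby become violated) by a possibly long, recursion-nested sub-execution. Pinning this down requires the right formulation of superterm-closedness modulo congruence relative to the global term set $\Terms(\sigma)$, and a careful induction over the call/return nesting showing that the coherence conditions are both preserved across frames and sufficient to keep all relevant terms in scope.
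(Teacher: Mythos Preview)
Your overall architecture---VPA for $\exec(P)$, VPA for feasible coherent executions, intersection, emptiness, and the $\exptime$ lower bound via recursive Boolean programs---matches the paper exactly. The gap is in the construction of the feasibility VPA, specifically in what state it maintains and how the return transition works.

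You keep the control state as a triple $(\equiv,d,P)$ over $V$ only, push the caller's triple at a call, and at return try to reconstruct the caller's post-return window from the popped caller triple together with the callee's final triple. This does not carry enough information. Once the callee overwrites a variable, its triple over $V$ no longer records anything about the corresponding caller term; in particular, the callee's final triple cannot tell you (i) which pairs of the caller's pre-call terms have been forced equal by assumes inside the callee, nor (ii) how the callee's output values $o_i$ relate (via $\equiv$ or via the partial function map $P$) to the caller's pre-call terms. Your own final paragraph flags exactly this obstacle, but the proposal contains no mechanism to overcome it, and in fact it cannot be overcome with triples over $V$ alone: two different callee sub-executions that end in the same triple over $V$ can induce different equalities among the caller's values.

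The paper's fix is to double the variable set: the feasibility VPA maintains its triple over $V \cup V'$, where $V' = \{v' \mid v \in V\}$ holds frozen copies of the caller's values at call time and is never reassigned inside the callee. On $\dblqt{\pcall\;m}$ the automaton pushes its current state, restricts to the current $V$-classes, and sets each $v'$ equivalent to $v$; throughout the callee the $V'$ part of the triple therefore continues to record exactly how the caller's call-time terms relate to the callee's current terms. On $\dblqt{\vec{w}\passign\preturn}$ the popped state and the current state are merged by identifying each caller-side $v$ with the callee-side $v'$, after which the output variables and the surviving caller variables can be placed in the correct classes and the $f$-maps and disequalities updated. This doubling is the concrete idea your sketch is missing; once you adopt it, the invariant lemma in the style of \lemref{inv_automaton} goes through, and the state count remains $O(2^{|V|^{O(1)}})$, so the complexity analysis is unchanged.
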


The proof of the above result proceeds by constructing a VPA $\Aa_P$
that accepts the executions of the program $P$ and a VPA $\rfeas{\Aa}$
that accepts feasible executions of recursive programs, and checking
if $L(\Aa_P \cap L(\rfeas{\Aa}))$ is empty, the latter being a
decidable problem.

The automaton $\rfeas{\Aa}$ is designed similar to the automaton
$\feas{\Aa}$ constructed in
\secref{verifying-coherent-programs}. Below, we sketch the primary
ideas for handling the extension to recursive program executions.

\begin{proof}[Proof Sketch]
Recall that the automaton for nonrecursive program executions keeps
track of (a) an equivalence relation $\equiv$ over the variables $V$,
(b) partial maps for each $k$-ary function that map from
$(V/\equiv)^k$ to $V/\equiv$, and (c) a set of disequalities over the
equivalence classes of $\equiv$. The automaton $\rfeas{\Aa}$ will keep
a similar state, except that it would keep this over \emph{double} the
number of variables $V \cup V'$, where $V' = \{v' \mid v \in V\}$. The
variables in $V'$ correspond to terms in the \emph{caller} at the time
of the call, and these variables do not get reassigned till the
current method returns to its caller.

When the automaton sees a symbol of the form $\dblqt{\pcall\; m}$, 
it pushes the current state of the automaton on the stack, 
and moves to a state that has only the equivalence classes of the current 
$V$ variables (along with the partial functions and disequalities restricted to them). 
It also makes each
$v'$ equivalent to $v$. When processing assignments, assumes, etc. in
the called method, the variables $V'$ will never be reassigned, and
hence the terms corresponding to them will not be dropped.  At the end
of the method, when we return to the caller reading a symbol of the
form $\dblqt{\vec{w}\passign \preturn}$, we pop the state from the stack and
\emph{merge} it with the current state.

This merging essentially recovers the equivalence classes on variables
that were not changed across the call and sets up relationship
(equivalence, partial $f$-maps, etc.) to the variables $\vec{w}$
assigned by the return. This is done as follows.  Let the state popped
be $s'$ and the current state be $s$. Let us relabel variables in
$s'$, relabeling each $v \in V$ to $\underline{v}$ and each $v' \in V$
to $\underline{v'}$.  Let $\underline{V} = \{ \underline{v} \mid v \in
V \}$ and $\underline{V'} = \{ \underline{v'} \mid v' \in V' \}$.  Now
let us take the \emph{union} of the two states $s$ and $s'$
(inheriting equivalence classes, partial function maps, disequalities), to get $s
\oplus s'$ over variables $\underline{V'} \cup \underline{V} \cup V'
\cup V$.  In this structure, we merge (identify) each node
$\underline{v}$ with $v'$, retaining its label as $v'$.  Merging can
cause equivalence classes to merge, thereby also
updating partial function interpretations and disequalities.
We now \emph{drop} the variables $V'$, dropping the
equivalence classes if they become empty. The new state is over
$\underline{V'} \cup V$, and we relabel the variables $\underline{v'}$
to $v'$ to obtain the actual state we transition to. The $f$-maps and
set of disequalities get updated across these manipulations.

The resulting VPA has exponentially many states in $|V|$ 
and taking its intersection with the automaton $\Aa_P$ and 
checking emptiness clearly can be done in exponential time. 
The lower bound follows from the fact that checking reachability 
in recursive Boolean programs is already {\sc Exptime}-hard and 
the fact that we can emulate any
recursive Boolean program using a recursive uninterpreted
program (even with an empty signature of functions).
\end{proof}

We can also extend the notion of $k$-coherence to
executions of recursive programs; 
here we allow executions to have ghost assignments at any point
to local (write-only) ghost variables in scope, in order to make 
an execution coherent. We can build an automaton, again a VPA, that accepts \emph{all} $k$-coherent executions that are semantically feasible. 
Then, given a program $P$ and a $k \in \mathbb{N}$, we can build an automaton that accepts all coherent extensions of executions of $P$, and also check whether every execution of $P$ has at least one equivalent coherent execution. If this is true, then $P$ is $k$-coherent, and we can check whether the automaton accepts any word to verify $P$.

\begin{theorem}
The problem of checking, given a program $P$ and $k \in \mathbb{N}$, whether $P$ is $k$-coherent is decidable. And if $P$ is found to be $k$-coherent, verification of $P$ is decidable.
\end{theorem}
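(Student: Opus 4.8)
The plan is to combine the visibly pushdown constructions already used for recursive programs --- in particular the feasibility automaton $\rfeas{\Aa}$ sketched above --- with the ghost-variable-and-projection technique of \thmref{dec-kcoherence} and \thmref{verifying-kcoherent} from the non-recursive setting.

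First I would build a VPA $\rcoh{\Aa}$ recognizing the coherent executions over the local variables $V$ augmented with $k$ local ghost variables $G = \set{g_1,\ldots,g_k}$ per method; these are words over $\Pi_M$ extended with internal letters $\dblqt{g \passign x}$ ($g\in G$, $x \in V$). Since ghost variables are write-only and local, each method invocation gets its own copy of $G$, pushed on a call and recovered on return exactly the way the caller-side shadow variables are handled in $\rfeas{\Aa}$. The state of $\rcoh{\Aa}$ would carry everything the state of $\rfeas{\Aa}$ carries, now over the enlarged variable set (the local variables, the ghost variables, and their caller-side shadows), plus the extra component $E$ from the proof of \thmref{coherent-reg}: for each $r$-ary $f$, a map recording whether $f$ has \emph{ever} been applied to a given tuple of current equivalence classes at any earlier point of the execution, \emph{including inside already-returned invocations}. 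Internal steps, equality assumes and disequality assumes are processed as in $\coh{\Aa}$; a $\dblqt{\pcall\;m}$ pushes and re-initialises as in $\rfeas{\Aa}$; and a return pops and \emph{merges} as in $\rfeas{\Aa}$, additionally merging the $E$-component so that function applications performed inside the callee, reindexed through the caller's shadow variables, stay marked as ``ever computed''. As in \thmref{coherent-reg}, the $E$-component is precisely what lets the automaton detect a failure of memoizing (a tuple with $E(f)(\vec{z})=\tru$ but $P(f)(\vec{z})$ undefined) or of early assumes. I expect the main obstacle to be proving that this return-merge is correct --- that after it the state faithfully summarises the coherence-relevant information about $\Terms(\sigma)$ for the \emph{whole} prefix $\sigma$, cross-method terms included; this should follow from an argument parallel to \lemref{window_update_equation} applied at the call/return boundary, together with the fact that the callee never reassigns the shadow variables.

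Next I would let $\Aa_{kcc}$ be the VPA obtained from $\rcoh{\Aa}$ by erasing the ghost-assignment letters. These letters are internal and absent from $\Pi_M$, so erasing them turns their transitions into internal $\epsilon$-transitions, which are eliminable without touching the stack; hence $L(\Aa_{kcc})$ is again a visibly pushdown language over $\Pi_M$, and (adapted to local ghost variables) \defref{kcoherence} tells us it is exactly the set of $k$-coherent executions over $V$. Now $P$ is $k$-coherent iff $\exec(P) \subseteq L(\Aa_{kcc})$; writing $\Aa_P$ for the VPA accepting $\exec(P)$, this is the same as $L(\Aa_P)\cap\overline{L(\Aa_{kcc})}=\emptyset$, which is decidable because visibly pushdown languages over a fixed call/return/internal partition are effectively closed under complement and intersection and have a decidable emptiness problem~\cite{Alur2004vpa,Alur2009vpa}.

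Finally, suppose $P$ has been found $k$-coherent; recall we may take the postcondition to be $\fals$, so $P$ is correct iff it has no feasible execution. Feasibility depends only on $\alpha$ and $\beta$, which ghost assignments leave unchanged, so $\sigma\in\exec(P)$ is feasible iff some coherent $\rho$ over $V\cup G$ with $\proj{\rho}{\Pi_M}=\sigma$ is feasible. Letting $\rfeas{\Aa}^G$ be $\rfeas{\Aa}$ adapted to ignore ghost-assignment letters, the language $L(\Aa)=\proj{(L(\rcoh{\Aa})\cap L(\rfeas{\Aa}^G))}{\Pi_M}$ is again a VPL, and $P$ violates $\fals$ iff $L(\Aa_P)\cap L(\Aa)\neq\emptyset$ --- decidable by VPL intersection and emptiness. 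As in the coherent recursive case all the automata involved have exponentially many states in $|V|+k$, so both procedures run in {\sc Exptime}.
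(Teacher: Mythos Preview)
Your approach matches the paper's: lift both the coherence and feasibility automata to VPAs over $V\cup G$ with local ghosts per method, erase the ghost-assignment letters by projection, and reduce the $k$-coherence test and the verification to VPL closure and emptiness. The paper's own argument for this theorem is barely two sentences; the detail you add about carrying the $E$-component of \thmref{coherent-reg} through calls and merging it on return (reindexed through the caller-side shadow variables) is precisely what is needed to flesh the sketch out, and is in the right direction.

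One caveat on the final sentence. Your claim that ``both procedures run in {\sc Exptime}'' is fine for verification but too optimistic for the $k$-coherence check as you have written it. The projected automaton $\Aa_{kcc}$ is a \emph{nondeterministic} VPA with $2^{(|V|+k)^{O(1)}}$ states (the nondeterminism comes from erasing the ghost letters), and your containment test $L(\Aa_P)\cap\overline{L(\Aa_{kcc})}=\emptyset$ needs its complement, which for VPAs costs another exponential via determinization. That lands the check in $2$-{\sc Exptime} rather than {\sc Exptime}. The paper states no complexity bound for this theorem, so decidability is unaffected; but you should either drop the {\sc Exptime} claim for the $k$-coherence test or argue separately that a deterministic VPA for the non-$k$-coherent executions can be built directly without going through projection-then-complement.
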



\section{Related Work}
\seclabel{related}

The class of programs (with and without recursion) over a finite set of Boolean variables 
admits a decidable verification 
problem~\cite{EsparzaKnoopFOSSACS99,AlurRSM2005,schwoon-phd02,Esparza2000,Godefroid2013}. 
As mentioned in the introduction, we believe that our work is the first natural 
class of programs that work over infinite data domains and also admits decidability
for the verification problem, without any severe restrictions on the structure of programs.

There are several automata-theoretic decidability results that could 
be interpreted as decidability results for programs---for example, coverability and reachability in (unsafe) Petri nets are decidable~\cite{Karp1969,Mayr1981,Kosaraju1982}, and this can be interpreted as a class of programs with counters with increments, decrements, and checks for positivity (but \emph{no} checks for zero), 
which is arguably not a very natural class of programs.
The work in~\cite{godoy09} establishes decidability of checking
equality assertions in uninterpreted \emph{Sloopy} programs with 
restricted control flow---such programs disallow the use of conditionals, 
loops and recursive calls inside other loops.

Complete automatic verification can be seen as doing 
both the task of finding inductive invariants and 
validating verification conditions corresponding to the various 
iteration/recursion-free snippets. 
In this light, there is classical work for certain domains 
like \emph{affine programs}, where certain static 
analyses techniques promise to always find an 
invariant, if there exists one that can be expressed in a 
particular logic~\cite{Karr1976,Muller-Olm2005,Granger1991,muller-olm2004}.
However, these results do not imply decidable verification for these programs, 
as there are programs in these classes that are correct but do not
have inductive invariants that fall in the fragment of logic considered. 

There is a line of work that takes an automata-theoretic flavor to 
verification~\cite{Matthias2013,Heizmann2010,Farzan2014,Farzan2015}, 
which rely on building automata accepting infeasible program traces, 
obtained by generalizing counterexample traces that can be proved 
infeasible through SMT solving. The method succeeds when
it can prove that the set of traces of the program that are erroneous,
are contained in the constructed set of infeasible traces.
The technique can handle several background theories, but of course tackles an undecidable problem. 
Our work relates to this line of work and can be interpreted as 
a technique for providing, directly and precisely, 
the set of infeasible coherent traces as a regular/visibly-pushdown language, 
and thereby providing decidable verification for programs with coherent traces. 
Combining our techniques for uninterpreted traces with the techniques above 
for other theories seems a promising future direction.

The theory of uninterpreted functions
is a fragment of first order logic with decidable quantifier free fragment~\cite{calcofcomputation}, and
 has been used popularly
in abstract domains in program 
analysis~\cite{Alpern1988,Gulwani2004,gulwani2004polynomial},
verification of hardware~\cite{Burch1994,Bryant2001} 
and software~\cite{Gulwani2006,Lopes2016}.

The notion of memoizing executions, which is an integral part of our coherence
and $k$-coherence definitions, is closely related
to \emph{bounded path-width}~\cite{robertson1983graph}. 
We can think of a computation of the program as 
sweeping the initial model using a window of terms 
defined by the set of program variables. 
The memoizing condition essentially says that the set of windows 
that contain a term must be a \emph{contiguous set}; 
i.e., a term computed should not be ``dropped'' if it will recomputed.
The notion of bounded path-width and the related notion 
of bounded tree-width have been exploited recently 
in many papers to provide decidability results in verification~\cite{chatterjee2015,chatterjee2016}.


\section{Conclusions}
\seclabel{conclusions}
We have proved that the class of coherent programs and $k$-coherent programs (for any $k \in \mathbb{N}$) admit decidable verification problems. Checking whether programs are coherent or $k$-coherent for a given $k$ is also decidable. Moreover, the decision procedure is not very expensive, and in fact matches the complexity of verifying the weaker class of Boolean programs over the same number of variables. 


Our results lay foundational theorems for decidable reasoning of uninterpreted programs, and open up a research direction exploring problems that can be tackled using uninterpreted functions/relations. There are several avenues for applications that we foresee.
One is reasoning about programs using uninterpreted abstractions, as in the work on reasoning with  containers~\cite{Dillig2011} and modeling pointers in heap manipulating programs\cite{natproofs2013,natproofs2014,natproofs2017}.
Such applications will likely call for an extension of our results to handle axioms that restrict the uninterpreted functions (such as associativity and commutativity of certain functions) or to incorporate first order theories such as arithmetic and sets. Specifications for heap manipulating programs often involve recursive definitions, and this may require enriching our results to incorporate such definitions. 
We also conjecture that our results can be useful in  domains such as verification of compiler transformations (such as instruction reordering), when proofs of correctness of transformations rely only on a few assumptions on the semantics of operations and library functions. Trace abstraction based verification approaches~\cite{Heizmann2010,Heizmann2009SAS,Farzan2014,Farzan2015} build automata that capture infeasible traces incompletely using a counter-example guided approach. In this context, our results would enrich such automata---we can accept precisely the set of infeasible traces that become infeasible when making functions uninterpreted. This is a possible direction to combine with other background theories in verification applications.

\appendix

\newpage
\section{Proofs from~\secref{verifying-coherent-programs}}
\label{app:proofs-automaton}

\subsection{Proof of~\lemref{coherent-superterm}}

\begin{proof}
This follows trivially from the observation 
that every prefix of a coherent execution
is coherent and thus has early assumes.
\end{proof}

\subsection{Proof of~\lemref{window_update_equation}}

\begin{proof}
First let us argue that $\sim_{s, s'}$ is consistent with the term model.
Consider $(t_1, t_2) \in W \times W$.
First it is easy to see that since $(\equiv_W, D_W, P)$ is consistent,
we have that if $t_1 \sim_{s, s'} t_2$, 
then $\termsem{t_1}{\termmod{E'}} = \termsem{t_2}{\termmod{E'}}$,
where $E' = E \cup \set{(s, s')}$.
We now prove the reverse direction.
Assume that $\termsem{t_1}{\termmod{E'}} = \termsem{t_2}{\termmod{E'}}$.
We will show that $t_1 \sim_{s, s'} t_2$ by inducting on
the step in the congruence closure algorithm
that adds $(t_1, t_2)$ in the congruence $\congcl{E'}$.
In the base case, (w.l.o.g) $t_1 \equiv_W s$ and $t_2 \equiv_W s'$,
in which case clearly we have  $t_1 \sim_{s, s'} t_2$.
In the inductive case, there is a $k$-ary function $f$
and terms $u^1_1, u^2_1, \ldots, u^k_1$ and $u^1_2, u^2_2, \ldots u^k_2$
with $t_1 = f(u^1_1, \ldots, u^k_1)$ and
$t_2 = f(u^1_2, \ldots, u^k_2)$,
and we have that $\termsem{u^i_1}{\termmod{E'}} = \termsem{u^i_2}{\termmod{E'}}$ for each $1 \leq i \leq k$.
Now, since $W$ is super-term closed with respect to $W$, $E$ and $(s, s')$,
we have that there are terms $v^1_1, v^2_1, \ldots, v^k_1, v^1_2, \ldots, v^k_2 \in W$
such that $u^i_j \congcl{E} v^i_j$ for all $i, j$.
Now, by inductive hypothesis, $v^i_1 \sim_{s, s'} v^i_2$ for all $1 \leq i \leq k$.
Further, consistency ensures that 
$P(f)(\eqcl{v^1_i}{\equiv_W}, \ldots, \eqcl{v^1_i}{\equiv_W}) = \eqcl{t_i}{\equiv_W}$
for $i \in \set{1, 2}$.
This means that our procedure for constructing $\sim_{s, s'}$ will imply that $t_1 \sim_{s, s'} t_2$.
%

The proof of consistency of $D'_W$ and $P'$ is rather straightforward
and follows from the fact that $\sim_{s, s'}$ is an equivalence
and that $(\equiv_W, D_W, P)$ is consistent.
\end{proof}

\subsection{Proof of~\lemref{window_check_eq}}

\begin{proof}
We will assume that $(t_1, t_2) \not\in \congcl{E}$ as otherwise the proof is straigtforward.
Let $E' = E \cup \set{(t_1, t_2)}$.
We first observe that for every $(s, s') \in \congcl{E'} \setminus \congcl{E}$,
we have that there are terms $r, r'$ such that $s \congcl{E} r$,
$s' \congcl{E} r'$, $r$ is a superterm of $t_1$ and $r'$ is a superterm of $t_2$.

Now, let us consider the case when $(s, s') \in \congcl{E'} \cap D$.
It follows from the previous paragraph and because $W$ is superterm closed
that there are terms $u, u' \in W$
such that $u \congcl{E} s$ and $u' \congcl{E} s'$.
Also, since $\sim_{t_1, t_2}$ is consistent with respect to $(W, E\cup \set{t_1, t_2}, D)$
(see \lemref{window_update_equation}), we must have that $u \sim_{t_1, t'_2} u'$.
Also since $D_W$ is consistent, we have $([u]_{\congcl{W}}, [u']_{\congcl{W}}) \in D_W$.
This proves the first direction.

The other direction is more straightforward and follows from consistency of $(\equiv_W, D_W, P)$ with
respect to $(W, E, D)$ and that $\sim_{t_1, t'_2}$ is consistent with respect to $(W, E\cup \set{t_1, t_2}, D)$
(from \lemref{window_update_equation}).
\end{proof}

\subsection{Proof of~\lemref{window_check_diseq}}

\begin{proof}
Direct consequence of consistency.
\end{proof}

\subsection{Proof of~\lemref{inv_automaton}}




\begin{proof}
We will induct on the length of the execution $\rho$.

\textbf{Base case}.
In the base case $\rho = \epsilon$ and $q_\rho = q_0$.
This execution is clearly feasible and $q_\rho \neq \reject$.
Also, $q_\rho$ is trivially consistent with respect to $(\setpred{\comp(\rho,x)}{x \in V}, \alpha(\rho), \beta(\rho))$.

\textbf{Inductive case.}
Here, $\rho = \sigma \cdot a$. 

Let $T_\gamma =\Terms(\gamma)$ be the set of terms computed by $\gamma$, where $\gamma \in \set{\rho, \sigma}$.
Let $E_\gamma = \alpha(\gamma)$, $D_\gamma = \beta(\gamma)$ and $W_\gamma = \setpred{\comp(\gamma, v)}{v \in V} \subseteq T_\gamma$ for  $\gamma \in \set{\rho, \sigma}$.
We have that $T_\sigma \subseteq T_\rho$, $E_\sigma \subseteq E_\rho$ and $D_\sigma \subseteq D_\rho$.

If $q_\sigma = \reject$, then $q_\rho = \reject$.
In this case, from the inductive hypothesis, we have that $\sigma$ is infeasible and 
thus $\congcl{E_\sigma} \cap D_\sigma \neq \emptyset$.
Since $\alpha(\sigma) \subseteq \alpha(\rho)$ and $\beta(\sigma) \subseteq \beta(\rho)$,
we have that $\congcl{E_\rho} \cap D_\rho \neq \emptyset$ and thus $\rho$ is also infeasible.

Otherwise, let us assume that $\sigma$ is feasible.
Let $q_\sigma = (\equiv, d, P)$.
and let $q_\rho = (\equiv', d', P')$.
Inductively, $q_\sigma$ is consistent with respect to $(W_\sigma, E_\sigma, D_\sigma)$.
Below, we prove that $q_\rho$ is consistent with respect to $(W_\rho, E_\rho, D_\rho)$.
If $d'$ is reflexive, then we will have $q_\rho = \reject$ instead.





Depending upon what the letter $a$ is, we have the following cases.
\begin{enumerate}
	\item \textbf{Case $a$ is $\dblqt{x \passign y}$}.\\
	In this case, $T_\rho = T_\sigma$, $E_\rho = E_\sigma$ and $D_\rho = D_\sigma$ and thus $\rho$ is also feasible.
	We will assume that $y \neq x$ 
	(otherwise $q_\rho = q_\sigma$, which is consistent with respect to $(W_\rho, E_\rho, D_\rho)$).

	Further, $\comp(\rho, v) = \comp(\sigma, v)$ for all $v \neq x$ and $\comp(\rho, x) = \comp(\sigma, y)$.

	Since $(\equiv, d, P)$ is consistent with respect to $(W_\sigma, D_\sigma, E_\sigma)$,
	we have that for every $x_1, x_2 \in V$ with $(x_1, x_2) \in \equiv$, 
	$\termsem{\comp(\sigma, v_1)}{\termmod{E_\sigma}} = \termsem{\comp(\sigma, v_2)}{\termmod{E_\sigma}}$.
	Consider a pair $(v_1, v_2) \in \equiv'$.
	Now, if $v_1 \neq x$ and $v_2 \neq x$, we have that $(v_1, v_2) \in \equiv$, and thus
	$\termsem{\comp(\rho, v_1)}{\termmod{E_\rho}} = \termsem{\comp(\rho, v_2)}{\termmod{E_\rho}}$.
	Otherwise, assume without loss of generality that $v_1 = x$ and $v_2 \neq x$.
	In this case we have $(v_2, y) \in \equiv$, and by consistency we have $\termsem{\comp(\rho, v_2)}{\termmod{E_\rho}} = \termsem{\comp(\sigma, v_2)}{\termmod{E_\sigma}} = \termsem{\comp(\sigma, y)}{\termmod{E_\sigma}} = \termsem{\comp(\rho, x)}{\termmod{E_\rho}}$.

	The consistency of $d'$ and $P'$ follows from the consistency of $q_\sigma$.
	Further, $d'$ is irreflexive because $d$ is irreflexive.

	\item
	\textbf{Case $a$ is `$x := f(z_1, \ldots z_k)$' where $k=arity(f)$}.\\
	Here, we have two cases:

		\begin{enumerate}
			\item \textbf{Case $P(f)(\eqcl{z_1}{\equiv}, \ldots, \eqcl{z_k}{\equiv}) = \eqcl{v}{\equiv}$ is defined.} \\
			Here, if $x \in \eqcl{v}{\equiv}$, then we have $\equiv' = \equiv$, $d' = d$ and $P' = P$.
			The resulting state is consistent with $(W_\rho, E_\rho, D_\rho)$.
			This is because $q_\sigma$ is consistent with $(W_\sigma, E_\sigma, D_\sigma)$ and thus
			$\termsem{\comp(\sigma, x)}{\termmod{E_\sigma}} = \termsem{\comp(\sigma, v)}{\termmod{E_\sigma}}$
			and thus $\termsem{\comp(\rho, x)}{\termmod{E_\rho}} = \termsem{\comp(\rho, v)}{\termmod{E_\rho}}$.
			The consistency of $d'$ and $P'$ follows similarly.

			Otherwise, we have that $x \not\in \eqcl{v}{\equiv}$. 
			The consistency of $q_\rho$
			again follows from the facts that 
			(a) $q_\sigma$ is consistent, 
			(b) $E_\sigma = E_\rho$, 
			(c) $D_\sigma = D_\rho$, and,
			(d) $\termsem{\comp(\rho, x)}{\termmod{E_\rho}} = \termsem{\comp(\rho, v)}{\termmod{E_\rho}}$.




			\item \textbf{Case $P(f)(\eqcl{z_1}{\equiv}, \ldots, \eqcl{z_k}{\equiv})$ is undefined} \\
			In this case, since $\rho$ is memoizing, it must be the case that there is no term $t \in T_\sigma$
			such that $\termsem{t}{\congcl{E_\sigma}} = \termsem{f(t_1, t_2, \ldots, t_k)}{\congcl{E_\sigma}}$,
			where $t_i = \comp(\sigma, z_i)$.
			This, in conjuction with the facts that (a) $E_\sigma = E_\rho$, (b) $D_\sigma = D_\rho$,
			and (c) $q_\sigma$ is consistent with respect to $(W_\sigma, E_\sigma,  D_\sigma)$ ensures the consistency of $q_\rho$.


		\end{enumerate}
		In both the above cases, feasibility of $\sigma$ implies feasibility of $\rho$ and this is also correctly reflected in $d'$, which is not reflexive since $d$ is irreflexive.

	\item 
	\textbf{Case $a$ is $\dblqt{\passume \, (x = y)}$}.\\
	In this case, consistency follows from \lemref{window_update_equation}.
	Further, if $\rho$ is infeasible, then $d'$ is not reflexive (\lemref{window_check_eq}), in which case we set $q_\rho = \reject$.

	\item 
	\textbf{Case $a$ is ``$\passume \, (x \neq y)$''}.\\
	In this case $T_\sigma = T_\rho, W_\sigma = W_\rho$, $E_\sigma = E_\rho$.
	Further, $D_\rho = D_\sigma \cup \set{(\comp(\sigma, x), \comp(\sigma, y))}$.
	Also, $\equiv' = \equiv$ and $P' = P$ and we just have argue for the consistency of $d'$ which is immediate
	from the fact that the new pair $(c_1, c_2) \in d' \setminus d$ (if it exists) is such that
	$x \in c_1$ and $y \in c_2$ for which we have the witness $(\comp(\rho, x), \comp(\rho, y)) \in D_\rho$.

	If $\rho$ is infeasible, then $d'$ becomes irreflexive (\lemref{window_check_diseq})


\end{enumerate}
\end{proof}

\newpage
\section{Appendix: Undecidability of Verification Problem}
\label{app:undec-ver-proofs}

We show the proof of \thmref{undec_main} that the verification of (unrestricted) uninterpreted programs, even without recursion, is undecidable. We do this through a reduction from the halting problem
for 2-counter machines to the verification of uninterpreted programs. In fact, our proof shows that verification of uninterpreted programs with even two unary functions is undecidable.

Let $M = (Q, s, \halt, C_1, C_2, \delta)$
be a 2-counter Minsky machine such that $Q$ is a finite set of states,
$s \in Q$ is the start, $\halt \in Q$ is a special state,
$C_1$ and $C_2$ are counters, both of which take values in $\nats$
and $\delta : Q \to \{\inc, \dec, \chkzero\} \times \{C_1, C_2\} \times Q$
is the deterministic transition function.

Let $\Sigma = (\Ff, \Rr, \Cc)$ be a signature with $\Rr = \Cc = \emptyset$
and $\Ff = \set{f, g}$ where $f$ and $g$ are both unary functions.
We will define a program $P_M$ over $\Sigma$ and an $\Ll_=$ formula $\phi_M$
such that $M$ reaches $\halt$ starting from $s$ iff
$P_M \models \phi_M$. \\

\noindent
\textbf{Variables.}
We first define the set of variables $V_M$ of the program.
For every state $q \in Q$, we have a variable $x_q$, 
which will behave like constants---they will never occur 
on the left hand side of any assignment statement of our program.
We will also have a variable $x_\curr$ that will keep track of
the current state of the machine in a run of the machine.
We will have two variables $y_1$ and $y_2$ that will correspond to
the values of the values of counters $C_1$ and $C_2$ respectively in a run.
We will have a variable $z$ which will correspond to the constant $0$.
Finally, we will also have two auxiliary (helper) variables 
$w_1, w_2, w_3$ and $w_4$.
Thus, we have $V_M = \set{x_q \,|\, q \in Q} \cup \set{x_\curr, y_1, y_2, z, w_1, w_2, w_3, w_4}$.\\

\noindent
\textbf{Statements.}
The first few statements in $P_M$ correspond to initialization:
$x_\curr$ is initialized to $x_{q_0}$ and both $y_1$ and $y_2$
get initialized to $z$.
Further we demand that all constants are different from each other:
for every $q \neq q'$ we
demand that $x_q$ and $x_{q'}$ are unequal,
and that $z$ is different from any $x_q$.
\begin{align*}
\code{INIT:} & & \\
&\passume \, (x_{q} \neq x_{q'}) & \quad \quad \texttt{(** for every } q\neq q' \texttt{**)} \\
&\passume \, (x_q\neq z) & \quad \quad \texttt{(** for every } q \texttt{**)} \\
&\passume \, (x_\curr := x_s) \\
&\passume \, (y_1 := z) \\
&\passume \, (y_2 := z) \\
\end{align*}

The initialization block is followed by
a while loop with the condition $(x_\curr \neq x_\halt)$.
The body of the loop contains, one block (labelled $\code{BLOCK}_q$) for every
for every $q \to (instr, C_i, q') \in \delta$.
The statements inside $\code{BLOCK}_q$ depend upon what $instr$ is:

\begin{enumerate}
\item \textbf{Case $instr$ is $\inc$ or $\dec$}:\\
In this case the code block is
	\begin{align*}
	\code{BLOCK}_q: &   \\
	&\pif \, (x_\curr =x_q) \{ & \\
	& \quad w_1  := f(y_i) \\
	& \quad w_2 := g(y_i) \\
	& \quad w_3 := g(w_1) \\
	& \quad w_4 := f(w_2) \\
	& \quad \passume \, (w_3 = y_i) \\
	& \quad \passume \, (w_4 = y_i) \\
	& \quad \stmt \\
	& \quad x_\curr := x_{q'}\\
	&  \} 
	\end{align*}
where $\stmt$ is $y_i := f(y_i)$ if $instr$ is $\inc$
and $y_i := g(y_i)$ if $instr$ is $\dec$.

\item \textbf{Case $instr$ is $\chkzero$}:\\
In this case the code block is
	\begin{align*}
	\code{BLOCK}_q: &   \\
	& \pif \, (x_\curr =x_q) \{ & \\
	& \quad \passume \, (y_i = z) \\
	&  \} 
	\end{align*}
\end{enumerate}


Finally the specification is $\phi_M \delequal (x_\curr \neq x_\halt) $

The correctness of the reduction is captured by the following lemma
\begin{lemma}
$M$ does not reach $\halt$ iff $P_M \models \phi_M$.
\end{lemma}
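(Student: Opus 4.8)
The plan is to turn the lemma into an exact correspondence between runs of $M$ and feasible complete executions of $P_M$. Since the postcondition $\phi_M$ is $(x_\curr \neq x_\halt)$ and every complete execution of $P_M$ terminates with the loop-exit step $\passume(x_\curr = x_\halt)$, the program violates $\phi_M$ precisely when some complete execution $\rho \in \exec(P_M)$ is feasible in some data model; by Corollary~\ref{cor:feasibility} this is the same as $\congcl{\alpha(\rho)} \cap \beta(\rho) = \emptyset$. So it suffices to prove: \emph{$M$ reaches $\halt$ if and only if $P_M$ has a feasible complete execution.}

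For ``$M$ halts $\Rightarrow$ $P_M$ has a feasible complete execution'' I would exhibit an explicit model. Take $\Mm$ with universe $\mathbb{Z}$, $\sem{z} = 0$, $\sem{f}(n) = n+1$, $\sem{g}(n) = n-1$, and the state constants $\{\init{x_q}\}_{q \in Q}$ mapped to pairwise distinct nonzero integers. Let $\rho$ be the execution that runs $\code{INIT}$ and then, for each step of the (unique, $M$ being deterministic) halting run $q_0 = s, q_1, \ldots, q_N = \halt$, scans the loop body taking the $\pthen$-branch exactly at the block $\code{BLOCK}_{q_l}$ of the current control state. An induction along the run maintains the invariant that after $l$ steps $\termsem{\comp(\rho,x_\curr)}{\Mm} = \termsem{\init{x_{q_l}}}{\Mm}$ and $\termsem{\comp(\rho,y_i)}{\Mm}$ equals the current value of $C_i$; under this invariant the inverse assumes $g(f(y_i)) = y_i$ and $f(g(y_i)) = y_i$ hold because $\sem{g} \circ \sem{f}$ and $\sem{f}\circ\sem{g}$ are the identity on $\mathbb{Z}$, the zero-test assumes $y_i = z$ hold because a halting run only tests a counter that is genuinely $0$, the $\code{INIT}$ disequalities hold by the choice of $\Mm$, and the closing $\passume(x_\curr = x_\halt)$ holds because $q_N = \halt$. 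Hence $\rho$ is feasible, so $P_M \not\models \phi_M$.

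For the converse, let $\rho$ be a feasible complete execution. I would first fix its shape: an $\code{INIT}$ prefix, then a sequence of while-iterations each of which scans every per-state block in order (choosing a $\pthen$- or $\pelse$-branch in each), ending with $\passume(x_\curr = x_\halt)$. The disequalities $\init{x_q} \neq \init{x_{q'}}$ (for $q \neq q'$) and $\init{x_q} \neq z$ recorded in $\beta(\rho)$ by $\code{INIT}$, together with feasibility, force all state constants to be pairwise non-congruent and non-congruent to $z$ in $\termmod{\alpha(\rho)}$. Reading $\rho$ left to right, this forces in each iteration that the only block whose $\pthen$-branch can occur in a feasible execution is the one for the control state currently stored in $x_\curr$; hence $\rho$ determines a finite prefix $q_0 = s, q_1, \ldots, q_N$ of a run of $M$, with $\comp(\rho,x_\curr)$ tracking $\init{x_{q_l}}$, and a parallel induction (using the inverse assumes collected so far) shows $\comp(\rho,y_i) \congcl{\alpha(\rho)} f^{c_i^l}(z)$, where $c_i^l$ is the simulated value of $C_i$. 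Feasibility of the terminal $\passume(x_\curr = x_\halt)$, together with the fact that $\init{x_{q_N}} \neq \init{x_\halt}$ lies in $\beta(\rho)$ whenever $q_N \neq \halt$, forces $q_N = \halt$; and, provided the ladder-distinctness claim below holds, each executed zero-test block forces the tested counter value $c_i^l$ to be $0$. Thus the simulated sequence is a genuine, non-stuck run of $M$ reaching $\halt$.

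The load-bearing step --- and the step I expect to be the main obstacle --- is to show that the equalities $P_M$ actually commits to cannot collapse the ``counter ladder'': that along $\rho$ the terms $z, f(z), f^2(z), \ldots$ stay pairwise non-congruent modulo $\congcl{\alpha(\rho)}$ up to the maximum counter value reached, so that $g$ genuinely inverts $f$ on them and a satisfied zero-test is a genuine certificate of value $0$. I would attempt this by a joint induction along $\rho$ with the invariants above: at an increment/decrement step the only new equations are the local ``inverse'' equations between ladder terms, whose congruence closure I would analyze either by orienting the accumulated equations as a terminating, locally confluent rewrite system on $\{z,f,g\}$-terms or by describing $\termmod{\alpha(\rho)}$ explicitly and reading off normal forms; at a zero-test step the inductive distinctness (for the equations collected \emph{before} this step) is what pins the tested height to $0$, after which the newly added equation is trivial. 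The fiddly points --- keeping $\comp(\rho,y_i)$ in canonical ladder form across arbitrary interleavings of increments, decrements and zero-tests, and checking that no combination of the inverse-assumes introduced along the way identifies two ladder terms of different height --- are where the real work lies; the remaining ingredients (the regular shape of $\exec(P_M)$, the branch-forcing from $\code{INIT}$, the final $\halt$ check) are routine bookkeeping by comparison.
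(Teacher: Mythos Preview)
For the direction ``$M$ halts $\Rightarrow$ $P_M$ has a feasible complete execution'' your argument matches the paper's: exhibit the execution that mimics the halting run and check it is feasible in a concrete arithmetic model. Your choice of $\mathbb{Z}$ in place of the paper's $\nats$ is cleaner, since $f$ and $g$ are total two-sided inverses there.

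For the converse you take a different route from the paper. The paper's entire argument is the one-line assertion that any feasible complete execution $\sigma$ is already feasible in the concrete model $\Mm_\nats$; once that is granted, the counters are genuine naturals and the run of $M$ reads off directly. You instead work inside $\congcl{\alpha(\rho)}$ and isolate ladder-distinctness --- that $z, f(z), f^{2}(z),\ldots$ stay pairwise non-congruent --- as the load-bearing lemma. That is a more honest decomposition, but your inductive maintenance of it has a real gap at the zero-test step. When a $\chkzero$ block fires, the equation added to $\alpha$ is $f^{c}(z)=z$ for the current simulated height $c$. You write that ``the inductive distinctness (for the equations collected before this step) is what pins the tested height to $0$'', but nothing in the construction forces this: the only disequalities ever placed in $\beta(\rho)$ come from \code{INIT}, from else-branches of the state tests, and from the $\pwhile$ guard, and all of them relate the constants $\init{x_q}$ and $z$ --- never two ladder terms $f^{j}(z)$ and $f^{k}(z)$. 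Hence committing $f^{c}(z)=z$ with $c>0$ collapses the ladder while contradicting nothing in $\beta(\rho)$; the execution remains feasible and your invariant simply fails. Concretely, if $M$ first increments $C_1$ and then zero-checks $C_1$ with target $\halt$, then $P_M$ has a complete execution whose equality assumptions are $g(f(z))=z$, $f(g(z))=z$, and $f(z)=z$ (plus trivial state equalities), which are jointly satisfiable --- take $f$ and $g$ to be the identity on the counter part of the universe --- so the execution is feasible although $M$ never reaches $\halt$. The rewriting/normal-form analysis you sketch covers only the inverse-assumes from $\inc$/$\dec$ blocks and cannot exclude such a spurious zero-test equality, which is neither derivable from nor refuted by anything prior. (The paper's unproved ``observation'' that $\sigma$ is feasible in $\Mm_\nats$ conceals exactly the same difficulty: the execution just described is feasible, yet not feasible in $\Mm_\nats$.)
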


\begin{proof}
Before we begin the actual proof, we note that every complete execution of $P_M$
ends in $\dblqt{\passume(x_\curr = x_\halt)}$.

Let us first prove the forward direction: 
If $M$ does not reach $\halt$, then $P_M \models \phi_M$.
Let's say $M$ does not reach $\halt$. In this case,
we show that no complete execution of $P_M$ is feasible (and thus $P_M \models \phi_M$).
Assume, on the contrary, that there is a complete execution $\sigma$
which is also feasible.
As we noted before, $\sigma$ ends in $\dblqt{\passume(x_\curr = x_\halt)}$.
We first observe that $\sigma$ is feasible in $\Mm_\nats$ 
where the universe is the set of natural numbers,
$z$ is interpreted as $0$ and $f$ and $g$ are respectively $\lambda i. i+1$ and $\lambda i. i-1$.
Now, let $\code{INIT}, \code{BLOCK}_{q_{i_1}} \ldots \code{BLOCK}_{q_{i_k}}$
be the sequence of blocks taken in $\sigma$.
Since $\sigma$ is feasible, it must be that in the last block $\code{BLOCK}_{q_{i_k}}$,
$x_\curr$ must be assigned to $x_\halt$.
Now, consider the corresponding run $\rho = s \to q_{i_1} \cdots q_{i_k} \to \halt$.
It easily follows that $\rho$ is a valid run of $M$, contradicting our initial assumption that
$M$ does not reach $\halt$.

Let us now prove the other direction (in a contrapositive manner):
If $M$ reaches $\halt$, then $P_M \not\models \phi_M$.
Consider the run $\rho$ of $M$ that reaches $\halt$.
Then, $\rho = q_{i_0} \to q_{i_1} \cdots q_{i_k} \to \halt$, where $q_{i_0} = s$.
We now construct a \emph{feasible} complete execution $\sigma$ of $P_M$, which,
as we noted before, ends in $\dblqt{\passume(x_\curr = x_\halt)}$
(and thus $P_M \not\models \phi_M$).
The execution $\sigma$ will essentially mimic the run $\rho$ of $M$, and is the
sequence of statements generated by the sequence of blocks
$\code{INIT}, \code{BLOCK}_{q_{i_1}} \ldots \code{BLOCK}_{q_{i_k}}$.
Since $\rho$ is a run of $M$, it follows that $\sigma$ is feasible in the FO model
$\Mm_\nats$, where the universe is the set of natural numbers,
the constant $z$ is interpreted as $0$, $f$ is interpreted to be the
function $\lambda i. i+1$
 and $g$ is the function $\lambda i. i-1$.
\end{proof}

\subsection{Undecidability when programs executions have early-assumes but are not memoizing}
We now give the main modifications of the above proof to adapt it to a proof of \thmref{undec_memo}.

We will continue to have an input variable that stands for the counter value $0$, and $f$ and $g$ defined over it with $f^i(0)$ denoting the counter value $i$.
Again, we would like to assert that for each of the terms $f^i(0)$, $g(f(f^{i}(0)))=f^i(0)$. The assumes that we executed in the previous proof are not, however, early assumes.

In order to ensure the above property of the model, we introduce a ``fake'' counter variable $c^*$. The program always ensures, in every round of simulating the 2-CM one step, that $c^*$ always advances (i.e., we update $c^*$ to $f(c^*)$), and we assure that $c^* = g(f(c^*))$
using an assume. We remove the assumes that we did using the true counters $c_1$ and $c_2$
to ensure $g$ is the inverse of $f$.

Note that the counter $c^*$ always stays ahead of the other counters, and hence the assumes it makes are early. 

The second complication to deal with are the assumptions that $c_1$ and $c_2$ make when they check for equality with $0$.
These again are not early assumes in the construction. We modify the manipulation of the counter $c^*$ so that at every point it computes $r_1(c^*)$, $r_2(c^*)$, 
$s_1(c^*)$, and $s_2(c^*)$.
It also assumes the first two
terms are equal and the second two terms are disequal. 
Furthermore, for $0$ we have the reverse; $r_1(0) \not = r_2(0)$
and $s_1(0)=s_2(0)$.
These assumes are again early as $c^*$ stays ahead of the true computation. Now, when we want to simulate the check that $c_1$
is $0$, we instead check whether 
$r_1(c_1) \not = r_2(0)$, and
to check if $c_1$ is not $0$,
we check $s_1(c_1) \not = s_2(c_1)$. Hence all assumes on the real simulation are disequality assumes, and hence do not break the early assume requirement.  

The rest of the construction and proof is simple.

\subsection{Undecidability when programs executions are memoizing but do not have early-assumes}
We now give the main modifications of the main proof above to adapt it to a proof of \thmref{undec_ea}. In fact we will show programs with a \emph{single} non-early assume already are undecidable to verify.

We again model the counters on terms as above. However, we start with two variables $x$ and $y$, both representing $0$. 

In every step, we simulate the counters $c_1$ and $c_2$ on terms constructed from $x$, as before, but we make no assumes of equality. Consequently, the initial model on terms involving $x$ would be a ``free'' structure. Meanwhile, on $y$, we simulate a fake counter as before, that continually increases, staying always ahead of the computation on $x$, and ensuring that $g$ is the inverse of $f$ on terms involving $y$. 

The initial model will keep the two submodels from $x$ and $y$ disjoint. 

At the end of the computation (when we reach a final state of the 2-CM machine), we do a single assume $x=y$. This will collapse 
two models, and impose the structure the model on $y$ had on the free model on $x$. Hence the counters and simulation become a correct simulation.

We again need to handle assumes that we need to make to check counters equality with $0$. 
We do this as follows. 
We have functions $h_1$ and $h_2$ that operate on the free model involving $x$, and in particular we consider the terms of the form $h_1^n(x)$ and $h_2^n(x)$. These essentially are meant to keep a copy of $0$, i.e., $x$. The fact that these are equal to $x$ will be enforced later when the models of $x$ and $y$ merge.  Every time we want to check if a counter $c_i$ is equal to $x$, we check instead whether 
$r(h_i(0)) = r(c_i)$, where $r$ is yet another function we introduce. The idea is that $r$ will be identity on these terms, but again that is assured later.

Meanwhile, in $y$, we ensure that $h_1(y)=y$
and $h_1(y)=y$ and $r$ is identity. 
When we finally collapse the models with an assume
that $x$ and $y$ are equal, it will impose the required structure on the model of $x$, and hence the $0$-equality tests will become correct. 

The above is the main idea of the construction; the rest of the construction and proof is straightforward.

\begin{acks}                            
We gratefully acknowledge National Science Foundation for supporting
Umang Mathur (grant NSF CSR 1422798), P. Madhusudan (grants 1138994 and SHF 1527395) 
and Mahesh Viswanathan (grant NSF CPS 1329991).
\end{acks}

\bibliography{references}

\end{document}